\newtheorem{theorem}{Theorem}[section]
\newtheorem{lemma}[theorem]{Lemma}
\newtheorem{prop}[theorem]{Proposition}
\newtheorem*{theorem:invL}{Theorem~\ref{thm:invL}}
\newtheorem*{theorem:reg0}{Theorem~\ref{thm:reg0}}
\newtheorem*{theorem:invE}{Theorem~\ref{thm:invE}}
\newtheorem*{thm:outgoing}{Theorem~\ref{thm:outgoing}}
\newtheorem*{thm:outgoing2}{Theorem~\ref{thm:outgoing2}}
\theoremstyle{definition}
\theoremstyle{remark}
\newtheorem{remark}[theorem]{Remark}
\newcommand\mtx[1]{\bm{\mathsf{#1}}}
\newcommand{\sgn}{\mathrm{sgn}}
\newcommand{\mD}{\mathcal{D}}
\newcommand{\mS}{\mathcal{S}}
\newcommand{\eps}{\epsilon}
\newcommand{\nbuffer}{n_{\textrm{buffer}}}
\newcommand{\nover}{n_{\textrm{over}}}
\newcommand{\niter}{n_{\textrm{iter}}}
\newcommand{\norm}[1]{\lVert#1\rVert}
\newcommand{\bbm}{\begin{bmatrix}}
\newcommand{\ebm}{\end{bmatrix}}
\newcommand{\Rm}{\mathbb R}
\newcommand{\nb}{n_{\text{buffer}}}
\newcommand{\nc}{n_c}
\newcommand{\no}{n_{\text{over}}}
\newcommand{\J}{J}
\newcommand{\g}{\chi}
\newcommand{\astar}{\alpha_*}
\newcommand{\cB}{\mathcal{B}}
\newcommand{\cA}{\mathcal{A}}
\newcommand{\cS}{\mathcal{S}}
\newcommand{\hc}{S}
\newcommand{\finint}{I_1}
\newcommand{\finintU}{I_U}
\newcommand{\finintUc}{I_{U^c}}
\newcommand{\so}{\nu}
\newcommand{\grad}{\nabla}
\newcommand{\flatL}{\mathcal{L}_0}
\newcommand{\cL}{\mathcal{L}}
\newcommand{\cM}{\mathcal{M}}
\newcommand{\cP}{\mathcal{P}}
\newcommand{\gc}{C}
\begin{document}                        %% Standard LaTeX command

%%      ---------------------------------------------------------------------
%%      -------------------------------- TITLE -----------------------------
%%      ---------------------------------------------------------------------

\title{Integral formulation of Klein-Gordon singular waveguides.}

%%      ---------------------------------------------------------------------
%%      ------------------------------- AUTHORS -----------------------------
%%      ---------------------------------------------------------------------
\author{Guillaume Bal}{University of Chicago}
% EXAMPLE: \author{Paris Hilton}{Universit Paris-Sorbonne (Paris IV)}
% Uncomment the following lines as needed
\author{Jeremy Hoskins}{University of Chicago}
\author{Solomon Quinn}{University of Chicago}
\author{Manas Rachh}{Flatiron Institute}
%\author{*** FIFTH AUTHOR'S NAME ***}{*** FIFTH AUTHOR'S AFFILIATION WHEN ARTICLE WAS WRITTEN ***}
% Add additional names and affiliations as necessary using above format
%%      ---------------------------------------------------------------------
%%      --------------------------- DEDICATION  (OPTIONAL)------------------- 
%%      ---------------------------------------------------------------------

%       Uncomment the following line to insert a dedication.

%\dedication{ *** DEDICATION *** }        %% Enter dedication between braces.

%%      ---------------------------------------------------------------------
%%      --------------------------- ABSTRACT (OPTIONAL)----------------------
%%      ---------------------------------------------------------------------

%% ***** UNCOMMENT THE FOLLOWING TO INSERT AN ABSTRACT

\begin{abstract}
We consider the analysis of singular waveguides separating insulating phases in two-space dimensions. The insulating domains are modeled by a massive Schr\"odinger equation and the singular waveguide by appropriate jump conditions along the one-dimensional interface separating the insulators. We present an integral formulation of the problem and analyze its mathematical properties. We also implement a fast multipole and sweeping-accelerated iterative algorithm for solving the integral equations, and demonstrate numerically the fast convergence of this method. Several numerical examples of solutions and scattering effects illustrate our theory.
\end{abstract}

% With AMS-LaTeX, \maketitle follows the abstract
\maketitle   

%%      ---------------------------------------------------------------------
%%      ------------------- TABLE OF CONTENTS (OPTIONAL) --------------------
%%      ---------------------------------------------------------------------

%% ***** IF YOUR PAPER IS OVER 40 PAGES AND YOU WISH TO HAVE A TABLE
%% ***** OF CONTENTS, PLEASE UNCOMMENT THE FOLLOWING LINE

 \tableofcontents

%%      ---------------------------------------------------------------------
%%      ---------------------------- BODY OF PAPER --------------------------
%%      ---------------------------------------------------------------------

%%      Please input or insert the body of your paper here.

\section{Introduction}\label{sec:intro}
%%%%%%
%%
%
The time-harmonic Klein-Gordon equation,
$$ -\Delta u + m^2 u = E^2 u$$
arises naturally in a wide variety of contexts including %particle physics, cosmology and classical mechanics. %{\it inter alia}. 
condensed matter and particle physics, classical mechanics, and optics.
When $|E| < m$ it models an {\it insulating medium}: solutions decay exponentially quickly away from a source. In the last forty years there has been particular interest in the case in which two insulators are brought together, meeting at an interface. In this setting, depending on the physical parameters, it is possible to generate surface waves which are localized near, and propagate along, the interface. Such surface waves at the interface of two insulating media have applications in photonics, geophysics, water waves, and condensed matter physics, see~\cite{slobozhanyuk2017three, avron1983homotopy, bernevig2013topological, watanabe2020counting, delplace2017topological, souslov2019topological, witten2016three, sato2017topological}, for example.

In two dimensions, this can be modelled by the following set of partial differential equations (PDEs)

{\small
\begin{align}\label{eq:pde}
    \begin{cases}
    -\Delta u(x) + m^2 u(x) - E^2 u(x) = f_2(x), \quad & x \in \Omega_2,\\
        -\Delta u(x) + m^2 u(x) - E^2 u(x) = f_1(x), \quad & x \in \Omega_1,\\
        \lim_{y \to x \in \Omega_2}u(y) =\lim_{y \to x \in \Omega_1}u(y), \quad & x \in \Gamma, \\
        \lim_{y \to x \in \Omega_2}\hat{n}(x)\cdot\nabla u(y) -\lim_{y \to x \in \Omega_1}\hat{n}(x) \cdot\nabla u(y)=-2mu(x), \quad & x \in \Gamma, \\
    \end{cases}
\end{align}
}
where $\hat{n}(x)$ denotes the unit normal to $\Gamma$ at $x\in \Gamma$ pointing in the direction of $\Omega_2$, the domains $\Omega_{1}$ and $\Omega_2$ denote the supports of the first and second insulators, $f_1$ and $f_2$ are source terms, the constant $m>0$ we refer to as the `mass', and the constant $E$ we refer to as the energy. In the sequel we will assume that $|E|^2 < m^2$ (the PDE in the bulk in this regime is often referred to as the Yukawa equation), that $\overline{\Omega_1 \cup \Omega_2}=\Rm^2$ is the entire plane, and that $\Omega_1$ and $\Omega_2$ meet along an interface $\Gamma=\partial\Omega_1=\partial\Omega_2$. Moreover, we assume that $\Gamma$ is a single smooth simple curve which is {\it asymptotically flat} in both directions and has a positive opening angle at infinity (see Section \ref{sec:detailed} for precise definitions and more detailed discussions). Intuitively, we assume $\Gamma$ has smooth parameterization $\gamma:\mathbb{R}\to \Gamma$ which asymptotically approach two different rays as $t$ goes to $-\infty$ and $\infty.$  For uniqueness of solutions, one should also supplement these equations with suitable boundary conditions at infinity and radiation boundary conditions along the interface.

The numerical simulation of surface waves on infinite boundaries is a computationally challenging 
task, owing to the absence of decay, or slow algebraic decay, of solutions along the interface. For the boundary value problem in Equation~\ref{eq:pde}, on smooth closed curves, surface waves localized near the interface propagate for a finite set of values of $E \in (-m,m)$. Barring these particular values of $E$, standard integral representations can be used to obtain solutions efficiently. Even for values of $E$ for which the localized surface waves are present, the naive integral equation formulation has only a finite dimensional nullspace, which is easily addressed using finite dimensional range completions~\cite{sifuentes2014randomized}.

On infinite interfaces, however, interface waves exist for all values of $E \in (-m,m)$. Moreover, when the classic integral representations for smooth closed curves are extended to infinite interfaces, the resulting equations are not Fredholm second-kind, and the interface waves manifest as a part of the continuous spectrum of the integral operator passing through zero. This issue cannot be addressed using any finite dimensional completions, and presents significant difficulties in the design of numerical methods for their solution.

\subsection{Contributions}
In this paper, we construct a novel {\it analytic} preconditioner for the standard integral representation used on smooth closed curves, which addresses the continuous spectrum when the interface is infinite. Our approach is based on introducing an auxiliary variable defined via the fundamental solution of a certain time-harmonic wave equation on the interface $\Gamma$. We prove bounded invertibility of this preconditioned integral equation for a range of masses and interfaces.

In addition to being analytically well-posed, this representation naturally lends itself to the efficient numerical solution of the
boundary value problem~\ref{eq:pde}. In particular, the resulting discretized integral equation is well-conditioned, and is amenable to standard fast multipole and sweeping methods. Thus, using this approach, Equation~\ref{eq:pde} can be solved quickly and accurately. Even though the tools used in the numerical solution are quite standard, to the best of our knowledge, this is the first high-order accurate and fast numerical solver for the simulation of surface waves that arise in the solution of problems such as Equation~\ref{eq:pde}.

The proposed approach is not limited to Equation~\ref{eq:pde}, and should apply more broadly to the case of piecewise constant insulators with interfaces supporting surface waves. For example, similar problems arise in the study of Dirac models for graphene \cite{3,bernevig2013topological} and linearized shallow water equations used in models of equatorial waves \cite{bal2024topological, delplace2017topological}.

\subsection{Relation to other work}
 
The PDE \eqref{eq:pde} with similar boundary conditions arises in several contexts. Indeed, one interpretation of it is as a Schr\"odinger equation with singular interactions supported on $\Gamma$. The paper \cite{brasche1994schrodinger} gives a detailed spectral analysis of this (and other) Schr\"odinger operators. In the periodic setting, we refer the reader to \cite{figotin1998spectral} for a rigorous treatment of singular potentials. Equations of this form also arise in the study of ``leaky quantum graphs'' \cite{exner2007leaky}. When $\Gamma$ is a compact perturbation of the flat interface, \cite{exner2007leaky} for instance derives asymptotic expansions for generalized eigenfunctions of the Helmholtz operator and obtains expressions for reflection coefficients in a corresponding scattering theory. We note that in our setting, no restriction is placed on the angle between the two branches of $\Gamma$.

For compact interfaces, \cite{holzmann2019boundary} considers the related problem of determining the point spectra of elliptic second-order partial differential operators with ``singular interactions''. For this problem, the authors propose an integral formulation in \cite{holzmann2019boundary}, which is solved via a standard Galerkin method.

More broadly, for interfaces which involve compact perturbations of a halfspace, there are a number of approaches which have been developed for related acoustic and electromagnetic scattering problems~\cite{bruno2017windowed, lai2018new, cai2000fast, weyl1919, okhmatovski2004evaluation, chandler1997impedance}. In particular, such methods enable integral representations that are well-conditioned on smooth closed curves to be extended in a straightforward manner to infinite interfaces. In practice, these techniques have been employed for problems involving turning waveguides (see \cite{10036449} and the references therein). Additionally, a recent set of papers \cite{epstein2023solvingI,epstein2023solving,epstein2024solving} suggest an alternative method based on `gluing' layered medium Green's functions. While this gluing method has a complete analysis, and integral formulation, for a broader class of problems, the method of this paper is substantially simpler to implement, and extends more naturally to multiple non-flat interfaces. Additional numerical approaches to surface wave problems have been studied in a number of other physical contexts. For example, they appear in the solution of Maxwell's equation in transmission between dielectric media when the ratio of permittivities approach a negative real number, see  for example \cite{raether1988surface,tzarouchis2018light}, and \cite{helsing2018spectra,helsing2020extended,helsing2021dirac} for numerical methods.

Finally, we note that surface-wave preconditioners, similar to what is used in this paper, have also been employed in other contexts for solving high-frequency scattering problems in acoustics, electromagnetics, and elasticity, see~\cite{darbas2013combining,antoine2008advances,antoine2005alternative,antoine2006improved,chaillat2015approximate,chaillat2014new,kriegsmann1987new} and the references therein. These preconditioners, frequently referred to as on-surface radiation conditions, are typically used to improve the performance of iterative solvers in complicated geometries and in the high-frequency regime, unlike in the present context where they are used for resolving surface waves inherent to the governing equations.

\subsection{Paper outline}
In Section~\ref{sec:math-prelim}, we present a detailed formulation of the problem, discuss its connection to a related Dirac equation and topological insulators, and review relevant mathematical properties of integral operators of the Yukawa equation. Section~\ref{sec:analytical} presents the main results, and Section~\ref{sec:proofs} discusses their proofs.
In Section \ref{sec:num} we describe an algorithm based on our boundary integral equations, and in Section \ref{sec:illustrations} present several numerical examples. We provide some concluding thoughts and describe plans for future research in Section \ref{sec:conc}.

\section{Mathematical preliminaries \label{sec:math-prelim}}
\subsection{Detailed formulation of the problem}\label{sec:detailed}
In this section we give a more precise statement of the problem under consideration, and summarize the associated conditions on the interface. Towards that end, suppose we are given a smooth simple curve $\Gamma$ separating the plane into a lower region $\Omega_1$ and an upper region $\Omega_2.$ Let $\gamma:\mathbb{R}\to \mathbb{R}^2$ be an arclength parameterization. Moreover, with $\hat{n}(t)$ the normal vector to $\gamma$ at $t\in\mathbb{R}$ pointing in the direction of $\Omega_2$ (using the same notation $\hat n$ for $\hat n(t)$ and $\hat n(\gamma(t))$ to simplify), we assume that $(\gamma'(t),\hat n(t))$ has positive orientation. For concreteness, we additionally assume that $\gamma \in C^\infty (\mathbb{R}; \mathbb{R}^2)$ with
\begin{align}
&|\gamma'(t)| =1, \qquad |\gamma^{(j)}(t)| \le \gc_j e^{-\beta |t|}, \quad j=2,3,4, \dots \label{eq:beta}
\end{align}
for some positive real constants $\gc_j$ and $\beta$, and
\begin{align}
&\lim_{t \to \infty} |\gamma(\pm t)| = \infty, \qquad \lim_{t\to \infty } |\gamma(t) - \gamma(-t)| =\infty \,.  \label{eq:gammainfty}
\end{align}

\begin{remark}
Almost all of the results in this paper would hold if we assume only that $\gamma \in C^3$ with exponentially decaying second and third derivatives. The decay of higher-order derivatives is used in the regularity result Theorem \ref{thm:reg0} but nowhere else.
\end{remark}
The fact that $\gamma$ is one-to-one (along with assumptions \eqref{eq:beta} and \eqref{eq:gammainfty} above) %the assumption that $\lim_{t\to \infty } |\gamma(t) - \gamma(-t)| =\infty$) 
implies the existence of some $c>0$ such that
\begin{align}\label{eq:c}
    \frac{|\gamma (t) -\gamma (s)|}{|t-s|} \ge c, \qquad s,t \in \mathbb{R}.
\end{align}
Given a suitably-differentiable function $u,$ for ease of exposition we set
\[
\begin{array}{rcl}
\left[[\hat{n}\cdot \nabla u \right]](t) &:=& \lim_{s\to 0^+} \hat{n}(t)\cdot \nabla \left[u(\gamma(t)+s\,\hat{n}(t)) -u(\gamma(t)-s\,\hat{n}(t)) \right]  \, ,\\[2mm]
\left[[u\right]](t) &:=& \lim_{s\to 0^+} \left[u(\gamma(t)+s\,\hat{n}(t)) -u(\gamma(t)-s\,\hat{n}(t)) \right].
\end{array}
\]
Finally, we suppose that we are given a positive real number $m$ as well as a real number $E$ such that $|E|<m.$ In the following we set $\omega=\sqrt{m^2-E^2}.$

\begin{figure}[h!]
    \centering
    \includegraphics[scale=.35]{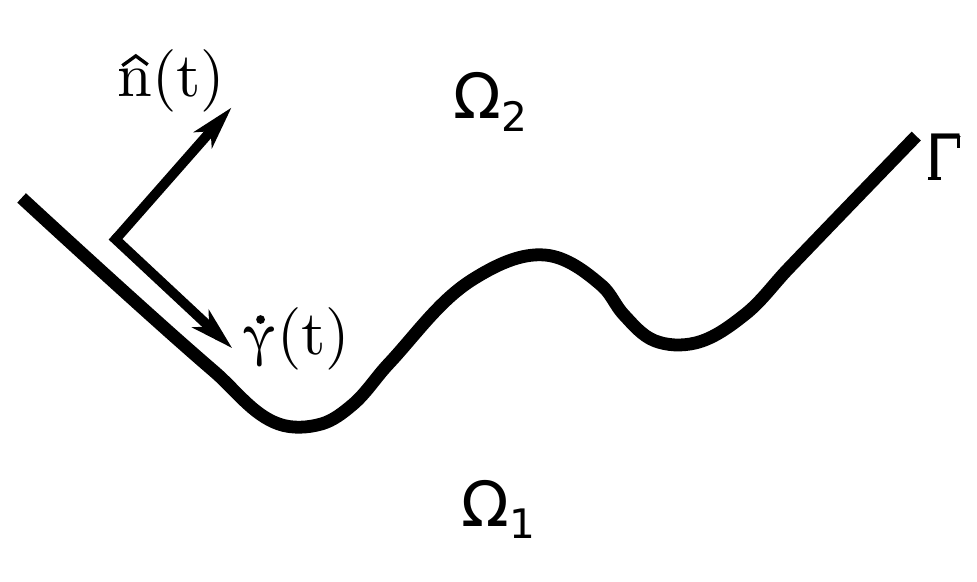}
    \caption{Geometry}
    \label{fig:geom}
\end{figure}

In this paper, we consider the time-harmonic Klein Gordon equation with piecewise discontinuous masses meeting at a single one-dimensional interface:
\begin{align}\label{eqn:pde}
    \begin{array}{rcll}
    -\Delta u(x) + \omega^2 u(x) &=& f_2(x),\quad &x\in \Omega_2,\\[2mm]
    -\Delta u(x) + \omega^2 u(x) &=& f_1(x),\quad &x\in \Omega_1.
\end{array}    
\end{align}
Here the functions $f_1$ and $f_2$ correspond to compactly-supported sources in the lower and upper regions, respectively. For ease of exposition in the following we will denote by $f$ the function which is equal to $f_2$ in $\Omega_2$ and to $f_1$ in $\Omega_1.$ Along the interface we enforce continuity of $u$ as well as a jump condition in the normal derivative, as below
\begin{align}\label{eqn:jump}
 \begin{array}{rcll}
    \left[[\hat{n}\cdot \nabla u \right]](\gamma(t)) &=& -2m u(\gamma(t)), \quad &t \in \mathbb{R},\\
        \left[[u\right]](\gamma(t)) &=& 0, \quad &t \in \mathbb{R}.
    \end{array}
\end{align}

\begin{remark}
The continuity condition is relatively standard. The jump condition in the normal derivative is possibly less so. Our principle motivation comes from the consideration of topological insulators, as is briefly discussed in Section \ref{subsec:rel}. Boundary conditions such as this also arise in the study of ``leaky'' waveguides, particularly in the context of ``leaky quantum waveguides'' (see \cite{exner2007leaky} and the references therein).
\end{remark}

With the above assumptions, there exist solutions $u$ of the PDE which propagate along $\Gamma$. In order to enforce the condition that the surface wave should travel outwards (i.e. no energy should come in from infinity) we impose additional {\it radiation conditions},
\begin{align}
    \lim_{t \rightarrow \pm \infty} (\pm\partial_t - iE) u (\gamma (t) + r \hat{n} (t))&=0, \qquad r \in \mathbb{R}, \label{eqn:outgoing}\\
    \lim_{d(x,\Gamma) \rightarrow \infty} u (x) &= 0 \label{eqn:decayu},
\end{align}
where $d (x,\Gamma) := \min \{|x-y| : y \in \Gamma\}$ is the distance between $x$ and $\Gamma$. The requirement \eqref{eqn:outgoing} %forces $u$ to look like 
is known as an {\it outgoing radiation condition}, and intuitively means that $u (\gamma (t)+r \hat{n} (t))$ is approximately of the form $C e^{i E|t|}$ when $|t|$ is large. Here $C \in \mathbb{C}$ is a constant depending on the sources $f_1$ and $f_2,$ as well as the interface $\Gamma$. We refer to \cite{dyatlov2019mathematical} for more details on this topic in similar settings.

\medskip

Our main objective in this paper is the analysis of the problem \eqref{eqn:pde} with boundary conditions (\ref{eqn:jump}, \ref{eqn:outgoing}, \ref{eqn:decayu}).

\subsection{Relation to Dirac equations and topological insulators}\label{subsec:rel}

The above equations are closely related to certain Dirac equations arising in the study of topological insulators. In this section we briefly outline this connection. We begin by recalling that in two dimensions, the time-harmonic Dirac equation is given by
\begin{align}\label{eq:dirac}
 -i\sigma_3 \partial_x \psi -i \sigma_1 \partial_y \psi + m \sigma_2 \psi = E\psi,
\end{align}
where $\psi: \mathbb{R}^2 \to \mathbb{C}^2,$ $m: \mathbb{R}^2 \to \mathbb{R},$ and $\sigma_1,\sigma_2,$ and $\sigma_3$ are the Pauli spin matrices defined by
\begin{align*}
    \sigma_1 = \begin{pmatrix} 0 & 1 \\1 &0 \end{pmatrix}, \quad \sigma_2 = \begin{pmatrix} 0 & -i \\i &0 \end{pmatrix},\quad{\rm and} \quad     \sigma_3 = \begin{pmatrix} 1 & 0 \\0 & -1 \end{pmatrix}.
\end{align*}
 Dirac equations arise naturally in the modelling of topological insulators, particularly in the transport observed at interfaces separating two insulators in different phases \cite{bernevig2013topological,3,4,bal2022asymmetric}. In particular, if two different insulators are brought together along an interface $\{y=0\}$, then this can be modelled by a jump in the mass, setting $m(x,y) = m_2$ for $y>0$ and $m(x,y)=-m_1$ for $y<0$. While each region on its own acts as an insulator (provided that $|E|<\min(|m_1|,|m_2|)=:m_*$), this is no longer the case for the combination, which admits absolutely continuous spectra in $(-m_*,m_*)$ when the signs of $m_1$ and $m_2$ are the same. 

Squaring the Dirac equation (\ref{eq:dirac}) one obtains
\[
-\partial_x^2\psi -\partial_y^2\psi + [m_2^2\theta(y)+m_1^2\theta(-y)]\psi-E^2\psi + (m_2+m_1)\delta_0(y) \sigma_3\psi = 0. 
\]
Here $\theta(y)$ is the Heaviside function, equal to $1$ for $y>0$ and to $0$ for $y\leq0$. The above equation is diagonal, i.e. the components of $\psi$ are not coupled.
Assume, for concreteness, that $m_1, m_2 > 0$.
The equation for the first component is
$$
 -\Delta\psi_1 + [m_2^2\theta(y)+m_1^2\theta(-y)]\psi_1-E^2\psi_1 + (m_2+m_1)\delta_0(y)\psi_1=0,
$$
%whose only bounded solution $\psi_1$ is trivial.
which admits only the trivial solution $\psi_1 = 0$ (provided $\psi_1$ is not exponentially increasing in $|y|$).
The more interesting second component satisfies
$$
 -\Delta\psi_2 + [m_2^2\theta(y)+m_1^2\theta(-y)]\psi_2-E^2\psi_2 - (m_2+m_1)\delta_0(y)\psi_2=0,
$$
which is exactly (\ref{eqn:pde}, \ref{eqn:jump}) when $m_2 = m_1 = m$, at least when sources are neglected. 

Our main objective in this paper is to analyze such a scalar equation and focus on the resulting propagation of signals along the interface $\Gamma$ separating the insulators. In particular, we are interested in the setting  where $\Gamma$ is curved. The corresponding analysis to the vector-valued Dirac equation is postponed to a later study~\cite{bal2023integral}.

For an analysis of the temporal propagation of wavepackets along a curved interface in the semiclassical regime (i.e, for wavepackets asymptotically localized in the near vicinity of the interface) in both topologically trivial (Klein-Gordon) and non-trivial (Dirac) settings, see \cite{bal2022semiclassical,bal2021edge}.

\subsection{Boundary integral operators and their properties}\label{sec:bio}

In this section we introduce several frequently-encountered boundary integral operators which will be useful in defining the boundary integral equations for the solution of PDE (\ref{eqn:pde}). We begin by recalling that for any $\omega$ in the right-half of the complex plane, the Green's function $G_\omega(x,y)$ for the PDE \begin{align}\label{eq:freePDE}
\begin{split}
    -\Delta u(x) + \omega^2 u(x) &= \delta(x-y),\\
    \lim_{|x|\to \infty} u(x) &= 0,
    \end{split}
\end{align}
is given by 
\begin{align}\label{eq:gf}
G_\omega(x,y) = \frac{1}{2\pi} K_0(\omega |x-y|),
\end{align}
where $K_0$ is the modified Bessel function of the second kind.

Given a function $\mu \in L^2(\mathbb{R})$ its single-layer potential $S_{\omega}[\mu]$ is defined by
$$S_\omega [\mu](x)=\int_\mathbb{R} G_\omega(x,\gamma(t)) \,\mu(t)\, {\rm d}t$$ 
and its double-layer potential $D_\omega[\mu]$ by
$$D_\omega [\mu](x) = -\int_\mathbb{R} \hat{n}(t) \cdot \nabla_x G_\omega(x,\gamma(t)) \,\mu(t)\, {\rm d}t$$
for $x \notin \Gamma.$ It is well-known (see \cite[Lemmas 3.3 and 3.5]{holzmann2019boundary} for example) that $S_\omega$ is continuous across $\Gamma$ while $D_\omega$ and $\hat{n} \cdot \grad S_\omega$ satisfy the following jump relations
\begin{align}
\lim_{s\to 0^+} D_\omega[\mu](\gamma(t) \pm s\,\hat{n}(t)) = \pm \frac{1}{2}\mu(t) -\int_\mathbb{R} \hat{n}(t') \cdot \nabla G_\omega(\gamma(t),\gamma(t')) \,\mu(t')\, {\rm d}t' \label{eq:jumpD}\\
\lim_{s\to 0^+} \hat{n}(t) \cdot \nabla S_\omega[\mu](\gamma(t) \pm s\,\hat{n}(t)) = \mp \frac{1}{2}\mu(t) +\int_\mathbb{R} \hat{n}(t) \cdot \nabla G_\omega(\gamma(t),\gamma(t')) \,\mu(t')\, {\rm d}t' \label{eq:jumpGradS}.  
\end{align}
For a more detailed discussion of potential theory for the Yukawa equation, see~\cite{colton2013integral} for example.

We note that the two integral operators appearing on the right-hand sides of the previous equations are compact (in $L^2(\Rm)$), as their kernels are continuous and rapidly decaying in $t$ and $t'$. 
With some abuse of notation, in the following we denote these operators by ${\mathcal D}_\omega$ and ${\mathcal S}'_\omega,$ respectively. %while 
We define $\cS_\omega$ and $\mD'_\omega$ by
\begin{align*}
    \cS_\omega [\mu] (t)&= \int_\mathbb{R} G_\omega(\gamma (t),\gamma(t')) \,\mu(t')\, {\rm d}t',\\
    \mD'_{\omega}[\mu](t)&=
    -\int_\mathbb{R} \hat{n}(t') \cdot \nabla^2 G_\omega(\gamma(t),\gamma(t')) \,\hat{n} (t)\,\mu(t')\, {\rm d}t',
\end{align*}
with $\nabla^2 G_\omega$ the Hessian of $G_\omega$.
We note that in this case, ${\mathcal S}_\omega,$ ${\mathcal D}_\omega,$ ${\mathcal S}'_\omega$ and $\mD'_\omega$ can be viewed as operators from $L^2(\mathbb{R}) \to L^2(\mathbb{R}).$ Finally, we remark that both $\mD_\omega$ and $\mS_\omega'$ are zero when restricted to any portion of the boundary which is flat. Moreover, for flat interfaces the kernels of $\mS_\omega$ and $\mD'_{\omega_2}-\mD'_{\omega_1}$ have the following {\it Sommerfeld integral representations}~\cite{sommerfeld1949partial, o2014efficient}
\begin{align}
&\mS_\omega[\mu](t) = \int_\mathbb{R} K_\omega(t-t')\,\mu(t')\,{\rm d}t',\\
&(\mD'_{\omega_2}-\mD'_{\omega_1})[\mu](t) = \int_\mathbb{R} H_{\omega_2,\omega_1}(t-t')\,\mu(t')\,{\rm d}t',
\end{align}
where
\begin{align}
&K_\omega(t) = \frac{1}{4\pi}\int_\mathbb{R} \frac{e^{i\xi t}}{\sqrt{\xi^2+\omega^2}} {\rm d}\xi, \label{eq:K}\\
&H_{\omega_2,\omega_1}(t) = \frac{1}{4\pi}\int_\mathbb{R} \left(\sqrt{\xi^2+\omega_2^2}-\sqrt{\xi^2+\omega_1^2} \right)e^{i \xi t}\,{\rm d}\xi.
\label{eq:H}
\end{align}

\section{Analytical results}\label{sec:analytical}
\subsection{The boundary integral equations}\label{sec:bie}
We now turn to the construction of a suitable boundary integral formulation of the PDE (\ref{eqn:pde}, \ref{eqn:jump}, \ref{eqn:outgoing}, \ref{eqn:decayu}).

We begin by defining the 
operator $Q$ via the following formula
\begin{equation} \label{eq:Q}
 Q[\rho](t) = \frac{m^2}E \int_{\Rm} e^{iE|t-t'|} \rho(t'){\rm d}t', \qquad \rho\in L^2 (\mathbb{R}),
\end{equation}
and set $\cL := I-2m\cS_\omega$ and $\cP := I+Q$ so that
\begin{equation}
\label{eq:LP}
\begin{aligned}
    \cL [\mu] (t) &= \mu (t) - 2m\int_\mathbb{R} G_\omega(\gamma (t),\gamma(t')) \,\mu(t')\, {\rm d}t',\\
    \cP [\rho] (t) &= \rho (t) + \frac{m^2}E \int_{\Rm} e^{iE|t-t'|} \rho(t'){\rm d}t'  \, ,%, \qquad \rho\in L^2 (\mathbb{R}).
\end{aligned}
\end{equation}
for $\mu, \rho \in L^2 (\mathbb{R})$. 

With these definitions, we consider the following boundary integral equation
\begin{align}\label{eq:bif}
\cL \cP [\rho] (t) = 2m\,u_i (\gamma (t)), \qquad t \in \mathbb{R}
\end{align}
where $\rho:\mathbb{R}\to \mathbb{C}$ is an unknown {\it density} and
$u_i$ is given by 
\begin{align}\label{eq:ui1}
    u_i (x) := \int_{\mathbb{R}^2} G_\omega (x,y) f(y) {\rm d}y, \qquad x \in \mathbb{R}^2.
\end{align}
Here, as above, $f$ denotes the function which is equal to $f_2$ in $\Omega_2$ and $f_1$ in $\Omega_1.$
For ease of exposition, we assume that $f_j \in C_c (\Omega_j)$ so that the right-hand side of \eqref{eq:bif} is smooth and exponentially decaying in $t$. Our results easily extend to a finite collection of point sources in $\Omega_1 \cup \Omega_2$, which still produce a smooth and exponentially decaying right-hand side.

\begin{remark}
The operator $Q$ can naturally be interpreted as the fundamental solution operator of the one-dimensional Helmholtz equation
$$\Delta_\Gamma v + E^2 v = m^2 \rho,$$
where $\Delta_\Gamma$ is the Laplace-Beltrami operator of the interface curve $\Gamma.$ 
\end{remark}
\begin{remark}
Note that the operator $Q$ could have equally been defined by replacing $e^{iE|t-t'|}$ by $e^{-iE|t-t'|}$. The choice of sign in \eqref{eq:Q} amounts to a choice of {\it outgoing radiation condition}, namely that we filter out incoming radiation. Indeed,
if $\rho \in L^1$, then $\lim_{t \rightarrow \pm \infty} (Q[\rho](t) - \frac{m^2}{E}e^{i E |t|} \hat{\rho} (\pm E)) = 0$, meaning that solutions of a 
%{\color{red}time-dependent Schr\"odinger or Klein Gordon?} 
time-dependent Klein Gordon
problem would propagate to the right for $t>0$ and to the left for $t<0$ whenever $|t|$ is sufficiently large ($t$ is the space, not time, variable here). See \cite{dyatlov2019mathematical} for details on the notion of incoming and outgoing radiation.
\end{remark}

Our main results are Theorems \ref{thm:invL} and \ref{thm:invE} below, which establish the well-posedness of the integral formulation \eqref{eq:bif} for ``almost all'' choices of $m$ and $E$. 
These theorems require the following weighted $L^2$ spaces.
For $\alpha\in \mathbb{R}$,
letting $w_\alpha (t) := e^{\alpha|t|}$ we define $L^2_\alpha := \{\rho \in L^2 (\mathbb{R}) : w_\alpha \rho \in L^2(\mathbb{R})\}$. Moreover, we define the norm $\norm{\cdot}_{L^2_\alpha}$ by $\norm{\rho}_{L^2_\alpha}:= \norm{w_\alpha \rho}_{L^2}$.
\begin{theorem}\label{thm:invL}
Fix $m_0 > 0$ and $E_0 \in (-m_0, m_0) \setminus \{0\}$. Define $\omega_0 := \sqrt{m_0^2 - E_0^2}$ and set $m=\lambda m_0$ and $E = \lambda E_0$ for $\lambda \in \mathbb{R}$. Define the function $\astar$ by
\begin{align}\label{eq:alphamax}
    \astar (\omega) := \frac{\beta c \omega}{\beta + c\omega},
\end{align}
where we recall the definitions of $\beta$ and $c$ in \eqref{eq:beta} and \eqref{eq:c}. Then for any $0 < \alpha < \astar (\omega_0)$,
%$\alpha \in (0, \alpha_*]$,
the integral equation \eqref{eq:bif} admits a unique solution $\rho \in L^2_\alpha$
for all but a finite number of $\lambda \in [1,\infty)$.
\end{theorem}
\begin{remark}\label{rem:lambda}
We note that the above dependence of $m$ and $E$ on $\lambda$ is equivalent to setting $m = m_0$ and $E = E_0$ while changing $\gamma$ by $\gamma_\lambda (t) := \lambda \gamma (t/\lambda)$. Thus increasing the value of $\lambda$ can be thought of as stretching out the interface.
\end{remark}

We continue to use the definition \eqref{eq:alphamax} of $\astar$ in the theorems below.
\begin{theorem}\label{thm:invE}
Fix $m>0$.
For any $0 < \eps_1, \eps_2 < m$ and
$0<\alpha<\astar\left(\sqrt{2\eps_2 m - \eps_2^2}\right)$, 
the integral equation \eqref{eq:bif} admits a unique solution $\rho \in L^2_\alpha$
for all but a finite number of $E \in [-m+\eps_2, -\eps_1] \cup [\eps_1, m-\eps_2]$.
\end{theorem}

When the interface $\Gamma$ is smooth and the sources $f_1$ and $f_2$ continuous, the solution $\rho \in L^2_\alpha$ from Theorems \ref{thm:invL} and \ref{thm:invE} in fact satisfies a higher degree of regularity. We introduce the weighted Sobolev spaces, which for $\alpha \in \mathbb{R}$ and $s \in \mathbb{N} \cup \{0\}$ are defined by 
\begin{equation}
\label{eq:hsalpha}
H_\alpha^s := \{ \rho \in L^2 (\mathbb{R}) : w_\alpha \rho^{(j)} \in L^2 (\mathbb{R}), 0 \le j \le s \},
\end{equation}
where $\rho^{(j)}$ denotes the $j$th derivative of $\rho$. Recall that $w_\alpha (t) := e^{\alpha |t|}$.

\begin{theorem}\label{thm:reg0}
Let $\so \in \mathbb{N} \cup \{0\}$ and $0 < \alpha < \astar (\omega)$. If the function $\rho \in L^2_\alpha$ satisfies $\cL \cP \rho\in H^\so_\alpha$, then $\rho\in H^\so_\alpha$. In particular, suppose $f_j \in C_c (\Omega_j)$ for $j=1,2$, then any function $\rho \in L^2_\alpha$ satisfying \eqref{eq:bif} also belongs to $\cap_{s \in \mathbb{N}} H_\alpha^s$.
\end{theorem}

We conclude this section by relating the solutions of \eqref{eq:bif} to the solutions of (\ref{eqn:pde}, \ref{eqn:jump}, \ref{eqn:outgoing}, \ref{eqn:decayu}).
\begin{theorem}\label{thm:outgoing}
Suppose $f_j \in C_c (\Omega_j)$ for $j=1,2$ and let $\alpha > 0$.
Suppose $\rho \in L^2_\alpha$ satisfies \eqref{eq:bif}, and set
\begin{align}\label{eq:ubif}
    \mu := \cP [\rho], \qquad u_s := S_\omega [\mu], \qquad u := u_i + u_s.
\end{align}
Then \eqref{eqn:pde}, \eqref{eqn:jump}, \eqref{eqn:outgoing} and \eqref{eqn:decayu} hold. In particular, $u$ constructed in this way is a solution of the PDE.
\end{theorem}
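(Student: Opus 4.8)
The plan is to verify each of the four required conditions in turn for $u = u_i + u_s$ with $u_s = S_\omega[\mu]$ and $\mu = \cP\rho$, treating the PDE in $\Omega_1\cup\Omega_2$, then the continuity jump, then the normal-derivative jump (which is where the boundary integral equation \eqref{eq:bif} enters), and finally the two radiation conditions at infinity. First I would note that $u_i$ defined by \eqref{eq:ui1} solves $-\Delta u_i + \omega^2 u_i = f$ in all of $\mathbb{R}^2$ (since $\omega_1=\omega_2=\omega$ here) and decays at infinity, because $G_\omega$ is the Green's function; and $u_s = S_\omega[\mu]$ is a single-layer potential, hence satisfies $-\Delta u_s + \omega^2 u_s = 0$ off $\Gamma$ and decays away from $\Gamma$ since $K_0(\omega r)$ decays exponentially. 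So \eqref{eqn:pde} follows immediately, and the decay condition \eqref{eqn:decayu} for $u$ follows from decay of each piece (using $\mu \in L^1$, which should be inherited from $\rho \in L^1$ together with the structure of $\cP$, so that the single-layer potential is well-defined and decays). Continuity of $u$ across $\Gamma$, i.e. $[[u]] = 0$ in \eqref{eqn:jump}, is immediate from the stated fact that $S_\omega$ is continuous across $\Gamma$ and $u_i$ is globally smooth near $\Gamma$ (its source $f$ is compactly supported away from $\Gamma$, or at least $u_i\in H^2_{loc}$).

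The crux is the normal-derivative jump condition. Using the jump relation \eqref{eq:jumpGradS} for $\hat n\cdot\nabla S_\omega$, one computes
\begin{align*}
[[\hat n\cdot\nabla u]](t) = [[\hat n\cdot\nabla u_s]](t) = -\mu(t),
\end{align*}
since $u_i$ is smooth across $\Gamma$ (no jump) and the two $\pm\tfrac12\mu$ boundary terms combine with opposite signs. So the required condition $[[\hat n\cdot\nabla u]] = -2m\,u$ on $\Gamma$ becomes $\mu = 2m\,u|_\Gamma = 2m(u_i|_\Gamma + \cS_\omega[\mu])$, i.e. $(1 - 2m\cS_\omega)\mu = 2m\,u_i|_\Gamma$, that is $\cL\mu = 2m\,u_i$ on $\Gamma$. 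Since $\mu = \cP\rho$, this is exactly $\cL\cP\rho = 2m\,u_i$, which is \eqref{eq:bif}. Thus the boundary integral equation is precisely equivalent to the jump condition, and this step is essentially bookkeeping once the jump relations are in hand — the main subtlety is being careful about traces of $u_i$ on $\Gamma$ and the sign conventions in \eqref{eq:jumpD}--\eqref{eq:jumpGradS}.

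The step I expect to be the main obstacle is the outgoing radiation condition \eqref{eqn:outgoing}: showing that $(\pm\partial_t - iE)u(\gamma(t)+r\hat n(t)) \to 0$ as $t\to\pm\infty$. This is where the operator $Q$ and the factor $\cP = 1+Q$ in the ansatz \eqref{eq:ubif} must be doing the real work: the kernel $e^{iE|t-t'|}/E$ of $Q$ is (up to constants) the outgoing Green's function of $(\partial_t^2 + E^2)$ on the line, so applying $\cP$ to $\rho$ is designed to produce a density $\mu$ whose single-layer potential has the correct outgoing asymptotics along the (asymptotically flat) interface. Concretely, I would analyze $S_\omega[\mu](\gamma(t)+r\hat n(t))$ for large $|t|$ by using that $\Gamma$ is asymptotically a ray (assumption \eqref{eq:beta}, \eqref{eq:gammainfty}) so that near infinity $S_\omega$ is well-approximated by its flat-interface version with the Sommerfeld kernel $K_\omega$ from \eqref{eq:K}; the convolution $K_\omega * \mu$ must then be shown, using the relation between $K_\omega$ and the kernel of $Q$ and the equation $\cL\cP\rho = 2m u_i$ (with $u_i$ having itself a one-dimensional outgoing structure coming from $K_0$), to behave like $C e^{iE|t|}$ as $|t|\to\infty$. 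Making this asymptotic matching rigorous — controlling the error between the curved and flat single-layer operators, and extracting the leading oscillatory term — is the technical heart of the argument; everything else reduces to the jump relations and Green's-function identities already recorded in Section \ref{sec:bio}.
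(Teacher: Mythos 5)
Your verification of \eqref{eqn:pde}, \eqref{eqn:decayu}, and the jump conditions \eqref{eqn:jump} is correct and matches the paper's: the identity $[[\hat{n}\cdot\nabla u_s]]=-\mu$ from \eqref{eq:jumpGradS}, combined with continuity of $S_\omega$, shows the normal-derivative jump condition is equivalent to $\cL\mu = 2m\,u_i|_\Gamma$, i.e.\ to $\cL\cP\rho = 2mu_i$, which is precisely what the paper invokes via the derivation of \eqref{eq:mu}. Those parts are fine and essentially identical in substance to the paper's argument.

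The gap is in the outgoing condition \eqref{eqn:outgoing}, which you correctly identify as the crux but leave as an unexecuted sketch, and the sketch points toward a harder route than the one the paper actually takes. You propose to approximate $S_\omega[\mu]$ near infinity by a flat-interface convolution with the Sommerfeld kernel $K_\omega$ and then extract a $Ce^{iE|t|}$ leading term from $K_\omega*\mu$. This would require (i) quantifying the error between the curved single-layer potential and the flat one at target points \emph{off} the interface (note $K_\omega$ is an on-interface trace kernel, so for $r\neq0$ you would need a shifted analogue), and (ii) proving the flat convolution has the claimed oscillatory asymptotics, which is not immediate since $\mu=\cP\rho$ is merely bounded. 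The paper never compares to the flat kernel at all. It decomposes $\mu=\rho+Q\rho$ and handles each piece directly: for $S_\omega[\rho]$, the fact that $\rho\in L^1$ together with the exponential decay of $K_0(\omega|\gamma(t)+r\hat{n}(t)-\gamma(s)|)$ (and its $t$-derivative) in $|t-s|$ already forces $(\partial_t-iE)S_\omega[\rho]\to 0$; for $S_\omega[Q\rho]$, the key is that the kernel is asymptotically translation-invariant, so $(\partial_t+\partial_s)K_0(\omega|\gamma(t)+r\hat{n}(t)-\gamma(s)|)$ is exponentially small (by the decay of $\gamma''$, cf.\ \eqref{eq:dtds}). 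Writing $\partial_t - iE = (\partial_t+\partial_s)-(\partial_s+iE)$, the first piece is negligible and the second is integrated by parts onto $Q\rho$, where the explicit formula
\begin{align*}
(\partial_t - iE)Q\rho(t) = -2im^2\int_t^\infty e^{-iE(t-t')}\rho(t')\,{\rm d}t' \longrightarrow 0 \quad (t\to\infty),
\end{align*}
valid because $\rho\in L^1$, finishes the argument. That integration-by-parts mechanism is the actual content of the proof and is what your proposal would need to supply.
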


For proofs of the above theorems, see Section \ref{sec:proofs}.

\subsection{Intuition for the integral formulation}\label{sec:intuitive}
In this section we outline an intuitive derivation of the boundary integral equations, and sketch a proof of invertibility for the case of a flat interface. 

Recall that the PDE \eqref{eqn:pde} we wish to solve is %are
\begin{align} \label{eq:equalm}
    -\Delta u(x) + \omega^2 u(x) = f(x),\quad x\in \mathbb{R}^2\setminus \Gamma,
\end{align}
where $f$ is the function equal to $f_2$ in $\Omega_2$ and to $f_1$ in $\Omega_1$.

Then if we define the functions $u_i$ and $u_s$ by
\begin{align}\label{eq:ui1_old}
    u_i (x) := \int_{\mathbb{R}^2} G_\omega (x,y) f(y) dy, \qquad u_s := u-u_i,
\end{align}
and assume the existence of a {\it density} $\mu \in L^2 (\mathbb{R})$ such that $u_s (x) = S_\omega [\mu] (x)$ for all $x \in \mathbb{R}^2 \setminus \Gamma$, it follows that
\begin{align}\label{eq:mu}
    -\mu (t) + 2m \cS_\omega [\mu] (t) = -2mu_i (\gamma (t)), \quad t \in \mathbb{R}.
\end{align}
%To verify \eqref{eq:mu}, observe that
Indeed, $u_i$ is smooth across $\Gamma$ and hence the definition of $u_s$ implies
%$u_s$ satisfies
\begin{align}\label{eq:us}
      \begin{cases}
    \Delta u_s(x) - \omega^2 u_s(x) = 0,\quad x\in \mathbb{R}^2\setminus \Gamma,\\
    \left[[\hat{n}\cdot \nabla u_s \right]] + 2m u_s = -2m u_i, \quad x \in \Gamma,
    \end{cases}  
\end{align}
with the added condition that $u_s + u_i$ is continuous across $\Gamma$. It is clear that the first line of \eqref{eq:us} holds for any choice of $\mu$, while
the boundary equation \eqref{eq:mu} is a consequence of the second line of \eqref{eq:us} together with
%then follows from 
\eqref{eq:jumpGradS} and the definition of $\mu$.

We recognize the left-hand side of \eqref{eq:mu} as $-\cL [\mu]$, where we recall the definition of $\cL$ in \eqref{eq:LP}.
%It is easily seen %\cite{} 
%that %for the above integral equation \eqref{eq:mu}, the linear operator acting on $\mu$ 
We will see below that for the flat-interface case, $\cL$ has a continuous spectrum which passes through zero. As such, in general the solution will not be unique without imposing additional conditions. Moreover, if the domain is truncated, the resulting truncated operator will at best be poorly-conditioned in the limit as the length of the boundary tends to infinity and in general will not converge as the size of the truncated domain grows. For a flat interface, the problem is easily analyzed in the Fourier domain. Indeed, observing that
\begin{align}\label{eq:fp}\mathcal{F}_{t \rightarrow \xi} \Big\{\frac{1}{E} e^{iE|t|}\Big\} =\lim_{\eps \downarrow 0}\mathcal{F}_{t \rightarrow \xi} \Big\{\frac{1}{E+i\eps} e^{i(E+i\eps)|t|}\Big\} = -\frac{2i}{\xi^2 - E^2},\end{align} 
and combining with $\mu = \cP [\rho]$, we get
\begin{equation}
\begin{aligned}
   \mathcal{F}\{ \cL [\mu]\}(\xi) &= \mathcal{F} \{\cL \cP [\rho]\} (\xi) \\
 &= \Big( 1 - \frac{m}{\sqrt{\xi^2 + \omega^2}}\Big)\Big(1- \frac{2im^2}{\xi^2 - E^2}\Big)\tilde{\rho} (\xi) \\
 &=: a(\xi)\tilde{\rho} (\xi),
\end{aligned}
\label{eq:symbolKP}
\end{equation}
where $\cP$ and $\cL$ are respectively defined in \eqref{eq:LP}, and $\tilde{\rho}$ denotes the Fourier transform of $\rho$. Here, we used \eqref{eq:K} to derive the Fourier representation of $\cL$.
%with $a: \mathbb{R} \rightarrow \mathbb{R}$ analytic.
The fact that $\cL\cP$ is a point-wise multiplication in the Fourier domain follows immediately from translation invariance of $\cL$ and $\cP$ (both kernels $k(t,t')$ are functions of only $t-t'$).
%It follows that the Weyl symbol of $KP$ is
%\begin{align}\label{eq:symbolKP}
%    a(t,\xi) = \Big( 1 - \frac{m}{\sqrt{\xi^2 + \omega^2}}\Big)\Big(1- \frac{2i}{\xi^2 - E^2}\Big),
%\end{align}
%which is independent of $t$ and analytic in $\xi$.
%Indeed, 
The singularities of the second factor of $a$ at $\xi = \pm E$ are canceled by the zeros of the first factor, making $a$ an analytic function.
Since $a$ is nonzero for all $\xi$ and converges to $1$ as $|\xi| \rightarrow \infty$, there exist constants $0<c<C$ such that $c \le |a| \le C$ uniformly in $\xi$.

We conclude that %\tred{for all $E \in (-m,m)$,}
the operator $(\cL\cP)^{-1}$ is
bounded on $L^2 (\mathbb{R})$ with (bounded and analytic) Fourier symbol \begin{align}\label{eq:ainv}
    a^{-1} = \Big[\Big( 1 - \frac{m}{\sqrt{\xi^2 + \omega^2}}\Big)\Big(1- \frac{2im^2}{\xi^2 - E^2}\Big)\Big]^{-1} = \Big(1+ \frac{m}{\sqrt{\xi^2 + \omega^2}}\Big) \Big(1 + \frac{(2i+1)m^2}{\xi^2 -E^2 - 2im^2}\Big).
\end{align}
This means there exists a unique function $\rho \in L^2 (\mathbb{R})$ such that $\cL\cP [\rho] = 2m u_i$, and hence the boundary integral equation is invertible (with a bounded solution).

Note that the above derivation does not motivate the factor of $m^2/E$ in the definition \eqref{eq:Q} of $Q$; this is addressed by Remarks \ref{remark:m2_1} and \ref{remark:m2_2} below.

%{\color{red}TODO: redo $m_1 \neq m_2$ in this style}

\subsection{Extension to two masses}\label{subsec:extension}
As suggested by Section \ref{subsec:rel}, the PDE \eqref{eqn:pde}-\eqref{eqn:jump} can be extended to model insulators with different `masses' in the upper and lower half-planes. To this end, let $m_1, m_2$ be positive real numbers, fix $E \in \mathbb{R}$ such that $|E| < \min (m_1, m_2)$, and define $\omega_j := \sqrt{m_j^2 - E^2}$ for $j=1,2$. The two-mass PDE is then
\begin{align}\label{eqn:pde2}
    \begin{array}{rcll}
    -\Delta u(x) + \omega_2^2 u(x) &=& f_2(x),\quad &x\in \Omega_2,\\[2mm]
    -\Delta u(x) + \omega_1^2 u(x) &=& f_1(x),\quad &x\in \Omega_1,
\end{array} 
\end{align}
and the jump conditions are
\begin{align}\label{eqn:jump2}
    \begin{array}{rcll}
    \left[[\hat{n}\cdot \nabla u \right]](\gamma(t)) &=& -(m_1 + m_2) u(\gamma(t)), \quad &t \in \mathbb{R},\\
        \left[[u\right]](\gamma(t)) &=& 0, \quad &t \in \mathbb{R} \, .
    \end{array}
\end{align}
The radiation conditions remain unchanged, and are given by
\begin{align}
    \lim_{t \rightarrow \pm \infty} (\pm\partial_t - iE) u (\gamma (t) + r \hat{n} (t))&=0, \qquad r \in \mathbb{R}, \label{eqn:outgoing2}\\
    \lim_{d(x,\Gamma) \rightarrow \infty} u (x) &= 0 \label{eqn:decayu2},
\end{align}
where $d (x,\Gamma) := \min \{|x-y| : y \in \Gamma\}$ as before is the distance between $x$ and $\Gamma$.

We first derive the corresponding boundary integral equation.
Define $\bar{m} := \frac{1}{2} (m_1 + m_2)$.
Analogous to \eqref{eq:Q}, let $Q_2 : L^2 (\mathbb{R}) \rightarrow L^2 (\mathbb{R})$ be given by
\begin{equation} \label{eq:Q2}
 Q_2 [\rho](t) = \frac{\bar{m}^2}E \int_{\Rm} e^{iE|t-t'|} \rho(t'){\rm d}t', \qquad \rho\in L^2 (\mathbb{R}).
\end{equation}
We define the operator $\cL_2: (L^2(\mathbb{R}))^2 \to (L^2(\mathbb{R}))^2$ by
%{\color{red}Can we replace operators $M,P,L$ with $\mathcal{M,P,L}?$ Otherwise there is a lot of namespace collision.}
\begin{align}\label{eq:L2}
    \cL_2 := I - 
    \begin{pmatrix}
        \mS'_{\omega_2}-\mS'_{\omega_1} +\bar{m}[\mS_{\omega_2}+\mS_{\omega_1} ] &
        \bar{m}\left[\mD_{\omega_2}+\mD_{\omega_1}\right] +\mD'_{\omega_2} - \mD'_{\omega_1}\\[2mm]
        -(\mS_{\omega_2}-\mS_{\omega_1}) & -(\mD_{\omega_2}-\mD_{\omega_1})
    \end{pmatrix}
\end{align}
and $\cP_2 : (L^2(\mathbb{R}))^2 \to (L^2(\mathbb{R}))^2$ by
\begin{align}\label{eq:P2}
\cP_2 := \left(I+ V\begin{pmatrix}Q_2 &0\\0& 0 \end{pmatrix} V^{-1}\right), \qquad V:= \begin{pmatrix} 
-1 & \frac{1}{2m_2}-\frac{1}{2m_1}\\
 \frac{1}{2m_2}-\frac{1}{2m_1} & 1
    \end{pmatrix}.
\end{align}
%with $Q$ given by \eqref{eq:Q}.

Our integral equation is then to %solve
find a density $\sigma \in (L^2 (\mathbb{R}))^2$ such that
\begin{align}\label{eq:bif2}
    \cL_2 \cP_2 [\sigma] = r,
\end{align}
where
\begin{equation}
\begin{aligned}\label{eq:ui}
    r :&= \begin{pmatrix}\left[[ \hat{n} \cdot \nabla u_i \right]]+2\bar{m} u_{i})\\
    -\left[[ u_i\right]]
    \end{pmatrix}, \\
u_i(x) :&= \begin{cases}
\int_{\Omega_2} G_{\omega_2}(x,y) f_2(y)\,{\rm d} y, \quad & x \in \Omega_2,\\
\int_{\Omega_1} G_{\omega_1}(x,y) f_1(y)\,{\rm d} y, \quad & x \in \Omega_1.
\end{cases}
\end{aligned}
\end{equation}

We can then relate the solutions of \eqref{eq:bif2} to solutions of 
(\ref{eqn:pde2}, \ref{eqn:jump2}, \ref{eqn:outgoing}, \ref{eqn:decayu}).

\begin{theorem}\label{thm:outgoing2}
Suppose $f_j \in C_c (\Omega_j)$ for $j=1,2$ and let $\alpha > 0$.
Suppose $\sigma \in \cap_{s \in \mathbb{N}} (H_\alpha^s)^2$ is a solution of \eqref{eq:bif2}, and set
\begin{align}\label{eq:mrus}
    \begin{pmatrix}
    \mu \\ \rho
    \end{pmatrix} := \cP_2 [\sigma], \qquad 
    u_s(x) = \begin{cases}
    D_{\omega_2}[\rho](x)+S_{\omega_2}[\mu](x), \quad & x \in \Omega_2,\\
     D_{\omega_1}[\rho](x)+S_{\omega_1}[\mu](x), \quad & x \in \Omega_1.\\
\end{cases} 
\end{align}
%Set $u = u_i + u_s$, where $u_i$ and $u_s$ are given by \eqref{eq:ui} and \eqref{eq:ansatz} respectively.
%, and $(\rho, \mu)$ by \eqref{eq:rhomu}. 
Then $u:=u_{i} + u_{s}$ satisfies \eqref{eqn:pde2}, \eqref{eqn:jump2}, \eqref{eqn:outgoing2} and \eqref{eqn:decayu2}.
\end{theorem}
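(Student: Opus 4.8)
The plan is to reduce Theorem~\ref{thm:outgoing2} to Theorem~\ref{thm:outgoing} by a change of variables that diagonalizes the jump conditions, rather than re-deriving everything from scratch. First I would represent the scattered field in each half-space by a combined layer potential $u_s = D_{\omega_j}[\rho] + S_{\omega_j}[\mu]$ and compute, using the jump relations \eqref{eq:jumpD}, \eqref{eq:jumpGradS} together with the continuity of $S_\omega$ and of $\hat n\cdot\nabla D_\omega$ across $\Gamma$, the two boundary traces $[[u_s]]$ and $[[\hat n\cdot\nabla u_s]]$ as well as the averaged traces. Since $u = u_i + u_s$ must satisfy \eqref{eqn:jump}, i.e.\ $[[\hat n\cdot\nabla u]] = -2\bar m\, u$ on $\Gamma$ (using $m_1+m_2 = 2\bar m$) and $[[u]]=0$, substituting the layer-potential traces and moving the known data $u_i$ to the right-hand side yields exactly a system of the form $(1 - \mathcal K)\,(\mu,\rho)^{\T} = r$ for the integral operator $\mathcal K$ appearing inside \eqref{eq:L2}; matching the off-diagonal $\bar m[\mathcal D_{\omega_2}+\mathcal D_{\omega_1}]$, $\mathcal D'_{\omega_2}-\mathcal D'_{\omega_1}$, $\mathcal S'_{\omega_2}-\mathcal S'_{\omega_1}$, $\bar m[\mathcal S_{\omega_2}+\mathcal S_{\omega_1}]$ terms against the two jump conditions is the bookkeeping core of this step. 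This shows that if $(\mu,\rho)^{\T}$ solves $\cL_2 (\mu,\rho)^{\T} = r$, then $u$ solves the PDE \eqref{eqn:pde} and the jump conditions \eqref{eqn:jump}; the decay \eqref{eqn:decayu} away from $\Gamma$ is immediate from the exponential decay of $G_{\omega_j}$ since $\mu,\rho\in L^1$ and $u_i$ has compactly supported sources.

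The role of $\cP_2$ is then to enforce the outgoing radiation condition \eqref{eqn:outgoing} along $\Gamma$, exactly as $\cP = 1+Q$ did in the $m_1=m_2$ case. Here the key algebraic observation is that $V$ is chosen precisely so that $V^{-1}$ (or a suitable conjugate) decouples the two jump conditions along $\Gamma$ into a ``topological'' component governed by a one-dimensional Helmholtz operator with source $\bar m^2\rho$ and a trivial component; concretely, writing $\sigma = (\sigma_1,\sigma_2)^{\T}$ and $(\mu,\rho)^{\T} = \cP_2\sigma$, the combination $V^{-1}(\mu,\rho)^{\T}$ has first entry $\sigma_1 + Q_2(\cdot)$ and second entry $\sigma_2$. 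I would verify that the ``$\frac{1}{2m_2}-\frac{1}{2m_1}$'' entries in $V$ are the unique ones making the cross terms between the normal-derivative jump and the Dirichlet jump cancel, so that the effective one-dimensional equation reduces to $\Delta_\Gamma v + E^2 v = \bar m^2(\text{density})$ with outgoing fundamental solution $\frac{\bar m^2}{E}\int e^{iE|t-t'|}(\cdot)$, i.e.\ $Q_2$. Then one checks, as in the second Remark after Theorem~\ref{thm:outgoing}, that for an $L^1$ density $Q_2\rho(t) - \frac{\bar m^2}{E}e^{iE|t|}\widehat\rho(\pm E)\to 0$ as $t\to\pm\infty$, which (after transporting along normals via the explicit far-field asymptotics of $K_0$ and its derivatives) gives \eqref{eqn:outgoing}.

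So the steps in order are: (1) compute the interior/exterior traces of the combined layer potential $D_{\omega_j}[\rho]+S_{\omega_j}[\mu]$; (2) impose \eqref{eqn:jump} and read off that this is equivalent to $\cL_2(\mu,\rho)^{\T}=r$; (3) establish \eqref{eqn:decayu} from exponential decay of the kernels; (4) conjugate by $V$ to diagonalize, reducing the radiation analysis to a scalar one-dimensional Helmholtz fundamental-solution computation, thereby reusing the argument from Theorem~\ref{thm:outgoing}; (5) conclude \eqref{eqn:outgoing} from the $L^1$ far-field asymptotics of $Q_2$ and of $K_0,K_0'$. The main obstacle I anticipate is step (2) combined with step (4): getting the signs and the $\pm\frac12$ jump terms to line up so that the $2\times2$ operator that emerges is exactly $\cL_2$ as written, and in particular confirming that the specific matrix $V$ in \eqref{eq:P2} is the one that both diagonalizes the coupling and is consistent with the $m_1=m_2$ limit (where $V\to\mathrm{diag}(-1,1)$ and $\cP_2$ collapses to $\cP$). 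The radiation-condition verification itself is, as in the symmetric case, mostly a matter of citing the one-dimensional Helmholtz Green's function asymptotics and the known decay of $u_i$, so I expect it to be routine once the decoupling in step (4) is in place.
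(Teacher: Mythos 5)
Your overall strategy is aligned with the paper's: steps (1)--(3) mirror Section~\ref{sec:intuitive}, and the decomposition you get from the structure of $\cP_2 = I + V\,\mathrm{diag}(Q_2,0)\,V^{-1}$ is exactly what the paper uses, phrased as $\mu = \mu_0 + \mu_1$ and $\rho = \rho_0 + \rho_1$ with $\mu_0,\rho_0 \in L^1$ and $\mu_1,\rho_1$ obeying the one-dimensional outgoing asymptotics inherited from $Q_2$ acting on an $L^1$ function. Two points should be corrected.

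First, a bookkeeping slip: you write $\sigma = (\sigma_1,\sigma_2)^{\T}$ and claim $V^{-1}(\mu,\rho)^{\T}$ has first entry ``$\sigma_1 + Q_2(\cdot)$'' and second entry ``$\sigma_2$''. Setting $\tau := V^{-1}\sigma$, one actually gets $V^{-1}(\mu,\rho)^{\T} = \tau + \mathrm{diag}(Q_2,0)\tau = (\tau_1 + Q_2\tau_1,\ \tau_2)$; the entries are components of $V^{-1}\sigma$, not of $\sigma$ itself. The conclusion you need ($\mu$ and $\rho$ are each an $L^1$ function plus $Q_2$ of an $L^1$ function) still holds, since $\sigma \in (L^1)^2$ implies $\tau \in (L^1)^2$, but the intermediate identity as stated is wrong.

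Second, and more substantively, the claim that the radiation-condition verification becomes ``a scalar one-dimensional Helmholtz fundamental-solution computation, thereby reusing the argument from Theorem~\ref{thm:outgoing}'' and that it is ``routine once the decoupling is in place'' understates what is actually needed. Theorem~\ref{thm:outgoing} only handles single-layer potentials $S_\omega[\mu]$, whose kernel is $K_0(\omega|g(t,s)|)$. Here you also have the double-layer contributions $D_{\omega_j}[\rho]$, whose kernel is $\hat n(s)\cdot\frac{g(t,s)}{|g(t,s)|}K_1(\omega_j|g(t,s)|)$. Verifying \eqref{eqn:outgoing} for these requires a genuinely new estimate: beyond bounding $(\partial_t+\partial_s)|g(t,s)|$ as in the single-layer proof, you must also control $(\partial_t+\partial_s)\hat n(s)$ and the vector part $(\partial_t+\partial_s)\frac{g(t,s)}{|g(t,s)|}$, and show these decay exponentially using \eqref{eq:beta}; this is what the paper establishes on the way to \eqref{eq:Dout}. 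Moreover, the $V$-conjugation does not diagonalize the layer-potential representation itself (the fields $D_{\omega_j}[\rho]+S_{\omega_j}[\mu]$ still mix both densities and both wavenumbers $\omega_1,\omega_2$), so the problem never literally reduces to two decoupled copies of the $m_1=m_2$ case. The decomposition of $\mu,\rho$ into $L^1$-plus-outgoing parts is the useful content of the conjugation; the rest is a fresh calculation for $D_{\omega_j}$, not a citation of the earlier theorem.
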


We refer to Section \ref{sec:proofs} for the proof of this theorem. 
Note that if the data $f_j \in C^\infty_c (\Omega_j)$, then any solution $\sigma$ of \eqref{eq:bif2} must also belong to $\cap_{s \in \mathbb{N}} (H_\alpha^s)^2$. The proof of such a result follows from a simple modification of the proof of Theorem \ref{thm:reg0}.

Echoing Section \ref{sec:intuitive}, we now turn to the intuitive derivation of the two-mass integral equation \eqref{eq:bif2}.
This case is slightly more complicated, though the reasoning is similar to the equal mass case. Define the functions $u_i$ and $u_s$ by
\begin{align} \label{eq:ui2}
u_i(x) = \begin{cases}
\int_{\Omega_2} G_{\omega_2}(x,y) f_2(y)\,{\rm d} y, \quad & x \in \Omega_2,\\
\int_{\Omega_1} G_{\omega_1}(x,y) f_1(y)\,{\rm d} y, \quad & x \in \Omega_1,
\end{cases}, \qquad u_s := u-u_i.
\end{align}
%and set $u_s = u-u_i.$ 
Note that $u_i$ is no longer continuous across $\Gamma$. For notational convenience, on the boundary between $\Omega_1$ and $\Omega_2$ we set $u_i$ to be the average of the limits from above and below. If there are densities $\mu, \rho \in L^2 (\mathbb{R})$ such that
\begin{align}\label{eq:ansatz}
u_s(x) = \begin{cases}
    D_{\omega_2}[\rho](x)+S_{\omega_2}[\mu](x), \quad & x \in \Omega_2,\\
     D_{\omega_1}[\rho](x)+S_{\omega_1}[\mu](x), \quad & x \in \Omega_1,\\
\end{cases}    
\end{align}
then we eventually arrive at the following linear system of equations for $\mu$ and $\rho$,
{\small
\begin{align}
\rho(t)&+\Bigg[\mD_{\omega_2}[\rho](t)-\mD_{\omega_1}[\rho](t)\Bigg]+\Bigg[\mS_{\omega_2}[\mu](t)-\mS_{\omega_1}[\mu](t) \Bigg] = -\left[[ u_i\right]](\gamma(t)),\label{eqn:BIE_1} %\quad t \in \mathbb{R},
\end{align}
\begin{align}
    -\mu(t) + \Bigg[\mS'_{\omega_2}[\mu](t)-\mS'_{\omega_1}[\mu](t) \Bigg] +& \Bigg[\mD'_{\omega_2}[\rho](t) - \mD'_{\omega_1}[\rho](t)\Bigg]\nonumber\\
     +\bar{m} \Bigg[\mS_{\omega_2}[\mu](t)+\mS_{\omega_1}[\mu](t) \Bigg]&+\bar{m}\Bigg[\mD_{\omega_2}[\rho](t)+\mD_{\omega_1}[\rho](t)\Bigg]\label{eqn:BIE_2} \\
    %\\ &\quad \quad %+\frac{m_2+m_1}{2} (u_{i,1}(t)+u_{i,2}(t)) = -\left[[ \hat{n} \cdot \nabla u_i \right]](t),\nonumber
    &+2\bar{m} u_{i}(\gamma(t)) = -\left[[ \hat{n} \cdot \nabla u_i \right]](\gamma(t)),
    \nonumber
\end{align}
}
which are understood to hold for all $t \in \mathbb{R}$. Recall the definition $\bar{m} := \frac{1}{2} (m_1 + m_2)$.
To derive the above, observe that the definition of the scattered field $u_s$ implies that
\begin{align}\label{eqn:pde_us}
      \begin{cases}
    \Delta u_s(x) - \omega_2^2 u_s(x) = 0,\quad x\in \Omega_2,\\
    \Delta u_s(x) - \omega_1^2 u_s(x) = 0,\quad x\in \Omega_1,\\
    \left[[\hat{n}\cdot \nabla u_s \right]] + (m_2+m_1) u_s = -\left[[\hat{n}\cdot \nabla u_i \right]] -(m_2+m_1) u_i, \quad x \in \Gamma,
    \end{cases}  
\end{align}
with $u_s+u_i$ continuous across $\Gamma$ as before. The first two lines of \eqref{eqn:pde_us} hold for any choice of $\mu$ and $\rho$.
Enforcing continuity of $u$ at the interface, and using the jump relations (\ref{eq:jumpD}, \ref{eq:jumpGradS}) for the layer potentials, we %see that $\mu$ and $\rho$ must satisfy
obtain \eqref{eqn:BIE_1}.
Since $u$ on $\Gamma$ takes the form
{\small
\begin{align}\label{eq:u_Gamma}
u(\gamma(t)) = \frac{1}{2} \left[\mD_{\omega_2}[\rho](t)+\mD_{\omega_1}[\rho](t)\right]+\frac{1}{2}\Bigg[\mS_{\omega_2}[\mu](t)+\mS_{\omega_1}[\mu](t) \Bigg] + u_i(\gamma (t)), 
\end{align}
}
%Enforcing 
the derivative jump condition in (\ref{eqn:pde_us}) %and the jump relations (\ref{eq:jumpD},\ref{eq:jumpGradS}) for the layer potentials 
implies \eqref{eqn:BIE_2}.

Observe that (\ref{eqn:BIE_1}, \ref{eqn:BIE_2}) reads 
\begin{equation}\label{eq:ip2}
\cL_2 \begin{pmatrix}\mu\\ \rho\end{pmatrix} = r, 
\end{equation}
where $\cL_2$ and $r$ are defined in \eqref{eq:L2} and \eqref{eq:ui}, respectively.
As above, $\mathcal{L}_2$ may admit a continuous spectrum passing through zero, %{\color{purple}do we know this for certain? reference?}, 
hence solving for $(\mu, \rho)$ in \eqref{eq:ip2} is in general an ill-posed problem.

To remedy this issue, we again turn our attention to the special case of a flat boundary, where Fourier representations of the relevant integral operators are given by \eqref{eq:K} and \eqref{eq:H}.
Taking the Fourier transform of (\ref{eqn:BIE_1}) in the flat case, we find that
\begin{align}\label{eqn:ft_rho}
    \tilde{\rho}(\xi) + \frac{1}{2}\left[\frac{1}{\sqrt{\xi^2+\omega_2^2}}-\frac{1}{\sqrt{\xi^2+\omega_1^2}} \right] \tilde{\mu}(\xi) = -\left[[\tilde{u}_i \right]](\xi).
\end{align}
Similarly, upon taking the Fourier transform of (\ref{eqn:BIE_2}), we obtain
\begin{align}\label{eqn:ft_mu}
    -\tilde{\mu}& -\frac{1}{2}\left[\sqrt{\xi^2+\omega_2^2}-\sqrt{\xi^2+\omega_1^2} \right]\tilde{\rho} + \frac{m_2+m_1}{4} \left[\frac{1}{\sqrt{\xi^2+\omega_2^2}}+\frac{1}{\sqrt{\xi^2+\omega_1^2}} \right] \tilde{\mu}\nonumber\\
    &\quad \quad = -\frac{m_2+m_1}{2}(\tilde{u}_{i,1}+\tilde{u}_{i,2}) - \left[\left[ \hat{n}\cdot \widetilde{\nabla u_i}\right]\right].
\end{align}
After solving (\ref{eqn:ft_rho}) for $\tilde{\rho}$ and substituting it into (\ref{eqn:ft_mu}) we see that
\begin{align}\label{eqn:flat_ft}
\left[-1+\frac{1}{4}(\xi_2-\xi_1)(\xi_2^{-1}-\xi_1^{-1})+\frac{m_2+m_1}{4}(\xi_2^{-1}+\xi_1^{-1}) \right]\tilde{\mu} = \tilde{\psi}
\end{align}
where $\xi_{1,2} = \sqrt{\xi^2+\omega_{1,2}^2}=\sqrt{\xi^2+m_{1,2}^2-E^2}$, and
\begin{align*}
\tilde{\psi}(\xi) =-\frac{m_2+m_1}{2}(\tilde{u}_{i,1}+\tilde{u}_{i,2}) - \left[\left[ \hat{n}\cdot \widetilde{\nabla u_i}\right]\right]-\frac{1}{2}(\xi_2-\xi_1)\left[[ \tilde{u}_i\right]](\xi).
\end{align*}

Let $R(\xi)$ denote  the Fourier multiplier from the left-hand side of (\ref{eqn:flat_ft}) defined by
\begin{align}\label{eq:Rxi}
    R(\xi):= -1+\frac{1}{4}(\xi_2-\xi_1)(\xi_2^{-1}-\xi_1^{-1})+\frac{m_2+m_1}{4}(\xi_2^{-1}+\xi_1^{-1}).
\end{align}
%We verify that, as expected, $R(\pm E) = 0.$
%First, we note that as expected $R(\pm E) = 0.$ Moreover, differentiating with respect to $\xi^2$ we see that
%\begin{proof}[Proof of Theorem \ref{thm:m12}]
Differentiating \eqref{eq:Rxi} with respect to $\xi^2$ we see that
\begin{align*}
\frac{{\rm d}}{{\rm d}\xi^2} R(\xi) = -\frac{m_2+m_1}{8}\frac{\xi_2^3+\xi_1^3}{\xi_1^3\xi_2^3}+\frac{1}{8}\frac{(\xi_2^2-\xi_1^2)^2}{\xi_1^3\xi_2^3}.    
\end{align*}
Now, $\xi_2^2-\xi_1^2 = \omega_2^2-\omega_1^2,$ $\xi_{2}\ge \omega_2,$ $\xi_1\ge \omega_1,$ and $m_1+m_2 > \omega_1+\omega_2,$ %for $|E|>0,$ 
from which it follows that
$$(m_2+m_1)(\xi_2^3+\xi_1^3) - (\xi_2^2-\xi_1^2)^2 \ge \omega_2^4+\omega_1^4 - \omega_2^4-\omega_1^4+2\omega_2^2\omega_1^2>0.$$
%for all $0<|E|< m_1,m_2.$ 
In particular, $R(\xi)$ is decreasing for $\xi <0$ and increasing for $\xi >0,$ from which it follows immediately that $\xi = \pm E$ are the only roots. 

Finally, we observe that the nullvectors $(\tilde\mu(\xi),\tilde\rho(\xi))^t$ associated with $\xi=\pm E$ are both
\begin{align}\label{eq:vnull}
    v = \begin{pmatrix}-1 \\ \frac{1}{2m_2}-\frac{1}{2m_1}\end{pmatrix}.
\end{align}

Motivated by this, we change variables to remove the singularity captured by the nullvectors in \eqref{eq:vnull}. 
Namely, we introduce the new unknowns, $\sigma_1$ and $\sigma_2$ defined implicitly by
\begin{align}\label{eq:rhomu}
    \begin{pmatrix}\mu \\ \rho \end{pmatrix} =\cP_2 \begin{pmatrix}\sigma_1 \\ \sigma_2 \end{pmatrix},
\end{align}
with $\cP_2$ given by \eqref{eq:P2}.
As with $Q$ for the one-mass case, the Fourier transform of $Q_2$ has singularities at $\xi=\pm E$. %which cancel the zeros of $\cL_2$.
Thus the singularities of $\cP_2$ exactly cancel the zeros of $\cL_2$.
Substituting \eqref{eq:rhomu} into the Fourier transformed boundary integral equations (\ref{eqn:ft_rho}) and (\ref{eqn:ft_mu}), we obtain an invertible system for $\sigma_1$ and $\sigma_2.$ In particular, we recover \eqref{eq:bif2}, where
$\cL_2 \cP_2$ is bounded with bounded inverse.

\begin{remark}
We do not prove two-mass analogues of the well-posedness results Theorems \ref{thm:invL}  and \ref{thm:invE}.
    In principle, similar arguments should extend to the $m_1 \ne m_2$ case, though the presentation would be more involved due to the presence of terms involving double layer potentials and their derivatives.
\end{remark}

\section{Proofs of main analytical results}\label{sec:proofs}
This section is devoted to proving the statements from Section \ref{sec:bie}, and is laid out as follows.
In Section \ref{subsec:thm:invL}, we first turn our attention to the proof of the uniqueness result for the integral equation with a single mass, Theorem \ref{thm:invL}. The proof of the other uniqueness result for the single mass integral equation, Theorem \ref{thm:invE}, requires that $u$ obtained by any smooth enough solution to the integral equation \eqref{eq:bif} solves the PDE (\ref{eqn:pde}, \ref{eqn:jump}, \ref{eqn:outgoing}, \ref{eqn:decayu}). Thus in Section \ref{subsec:reg}, we first prove regularity results for the solution of the integral equation, Theorem \ref{thm:reg0}, and that smooth solutions to the integral equation satisfy the PDE, Theorem \ref{thm:outgoing}. Finally, we prove
the main uniqueness result for the single mass integral equation, Theorem \ref{thm:invE}, in Section \ref{subsec:invE}. Theorem \ref{thm:outgoing2}, which proves that smooth solutions of the integral equation corresponding to bulk regions with two different masses satisfy the PDE is presented in Section \ref{subsec:twomasses}

Recall that the operators $\cL$ and $\cP$ are defined by
\begin{align*}
    \cL [\mu] (t) &= \mu (t) - \frac{m}{\pi}\int_\mathbb{R} K_0 (\omega|\gamma (t)-\gamma(t')|) \, \mu (t') \, {\rm d}t',\\
    \cP [\rho] (t) &= \rho (t) + \frac{m^2}E \int_{\Rm} e^{iE|t-t'|} \rho(t'){\rm d}t'.%, \qquad \rho\in L^2 (\mathbb{R}),
\end{align*}
To distinguish between the arbitrary and flat cases, let $\flatL :L^2 (\mathbb{R}) \rightarrow L^2 (\mathbb{R})$ be defined by
\begin{align}\label{eq:flatL}
    \flatL [\mu] (t) := \mu (t) - \frac{m}{2\pi}\int_\mathbb{R} \int_{\mathbb{R}} \frac{e^{i\xi (t-t')}}{\sqrt{\xi^2+\omega^2}} {\rm d}\xi \,\mu(t')\, {\rm d}t', \qquad \mu\in L^2 (\mathbb{R}),
\end{align}
so that $\flatL$ is the operator $\cL$ above when the boundary is flat.
%while $\flatL$ defined in \eqref{eq:flatL} is the operator $\cL$ when the interface is flat.
Recall %also \eqref{eq:M}, which states
that the integral equation \eqref{eq:bif} that we wish to solve is
\begin{align*}
    \cL \cP [\rho] (t) = 2m\,u_i (\gamma (t)), \qquad t \in \mathbb{R}
\end{align*}
%we wish to solve 
which can be rewritten as
\begin{align}\label{eq:cM}
    (I + \cM)[\rho] = 2m (\flatL \cP)^{-1} u_i, \qquad \cM := (\flatL \cP)^{-1}(\cL-\flatL)\cP.
\end{align}
Our main results concern the solvability of the above integral equation on the space %an exponentially 
%the weighted $L^2$ space defined by 
$L^2_\alpha := \{\rho \in L^2 (\mathbb{R}) : w_\alpha \rho \in L^2(\mathbb{R})\}$, where $w_\alpha (t) := e^{\alpha|t|}$. We require that $w_\alpha$ not grow too rapidly, which %in Theorem \ref{thm:invL} 
is captured by the assumption $0 < \alpha < \astar (\omega)$, where
%introduced in \eqref{eq:alphamax}.
\begin{align*}
    \astar (\omega) := \frac{\beta c \omega}{\beta + c\omega}
\end{align*}
and the constants $\beta$ and $c$ are defined by \eqref{eq:beta} and \eqref{eq:c}.
Observe that the function $\astar$ is monotonically increasing on the interval $(0,\infty)$; thus in order to ensure that \eqref{eq:alphamax} holds for all $\omega$ in a given set $S \subset (0,\infty)$, it suffices to verify \eqref{eq:alphamax} for the infimum of $S$. Since $\beta$ and $c\omega$ are both positive with $0 < c \le 1$, we also see that $\astar (\omega) < \min \{ \beta, c \omega\} \le \min \{ \beta, \omega\}$.

Much of the analysis in this section relies on well-known properties of the modified Bessel functions of the second kind $K_0 : (0,\infty) \to (0,\infty)$ and $K_1 = -K_0'$. These functions are smooth, monotonically decreasing and satisfy $K_j (r) \sim e^{-r}$ as $r \to \infty$. Moreover, $K_0 (r) \sim -\log r$ and $K_1 (r) \sim 1/r$ as $r \downarrow 0$.

Throughout the proofs in this section, we will use the notation $C$ to denote an arbitrary positive constant whose value might change from one line to another. That is, $a \le Cb$ is equivalent to $a \lesssim b$. We will use $\norm{\cdot}$ to denote the operator norm.
More specifically, if $\cB_1$ and $\cB_2$ are Banach spaces and $\cA : \cB_1 \rightarrow \cB_2$ is a linear operator, then $\norm{\cA} := \sup \{\norm{\cA \psi}_2 : \norm{\psi}_1 = 1\}$, where $\norm{\cdot}_j$ denotes the norm on $\cB_j$.

\subsection{Proof of Theorem \ref{thm:invL} \label{subsec:thm:invL}}
As in the statement of the theorem, suppose that 
$m_0 > 0$ and $E_0 \in (-m_0, m_0)\setminus \{0\}$ are fixed, with $\omega_0 := \sqrt{m_0^2 - E_0^2}$. The parameters $m$ and $E$ of our integral equation \eqref{eq:bif} will depend on the parameter $\lambda \in [1,\infty)$ via the relations $m=\lambda m_0$, $E=\lambda E_0$, and $\omega = \lambda \omega_{0}$.

We begin with the following important lemmas.
\begin{lemma}\label{lemma:int}
    Suppose $0 < \alpha < \astar(\omega_0)$, define $\g : \mathbb{R}^2 \to [0,\infty)$ by
    \begin{align*}
        \g (s,t) := \begin{cases}
    \min\{|s|,|t|\}, &s \cdot t > 0\\
    0, &\text{else}
    \end{cases}
    \end{align*} 
    and let $K_1 = - K_0'$ denote the modified Bessel function of the second kind.
    Then
    \begin{align*}
        \finint := \int_{\mathbb{R}^2} \Big( e^{\alpha |t|} e^{\eta |s|} &|t-s|^3e^{-\beta \g (s,t)} K_1 (c \omega_0 |t-s|)\Big)^2 {\rm d} t {\rm d} s < \infty
    \end{align*}
    for all $\eta < \astar (\omega_0) - \alpha$.
\end{lemma}
\begin{proof}
    We first observe that $\g (s,t) = \frac{1}{2} (|s+t| - |s-t|) 1\{ |s+t| - |s-t| > 0\}$, where $1\{\cdot \}$ is the indicator function. Moreover, the function $\mathbb{R} \ni \xi \mapsto |\xi|^3 K_1 (c \omega_0 |\xi|)$ is continuous and exponentially decaying at infinity with rate $c \omega_0$. It follows from the change of variables $(\xi,\zeta) = (t-s, t+s)$ in the integral that
    \begin{align*}
        \finint \le C \int_{\mathbb{R}^2} \Big( e^{\alpha_\eta|\xi|} e^{\alpha_\eta|\zeta|} e^{-\frac{\beta}{2} (|\zeta| - |\xi|) 1\{ |\zeta| - |\xi| > 0\}} e^{-c \omega_0 |\xi|}\Big)^2 {\rm d} \xi {\rm d} \zeta,
    \end{align*}
    where $\alpha_\eta := (\alpha + \eta)/2$.
    Fixing $\delta > 0$, we now %split the above integral into integrals over the regions 
    control the integral over the regions $U := \{|\zeta| < (1+\delta) |\xi|\}$ and $U^c := \mathbb{R}^2 \setminus U$ separately. We see that
    \begin{align*}
        \finintU :&= \int_{U} \Big( e^{\alpha_\eta |\xi|} e^{\alpha_\eta |\zeta|} e^{-\frac{\beta}{2} (|\zeta| - |\xi|) 1\{ |\zeta| - |\xi| > 0\}} e^{-c \omega_0 |\xi|}\Big)^2 {\rm d} \xi {\rm d} \zeta\\
        &\le \int_U \left( e^{\alpha_\eta (2+\delta) |\xi|} e^{-c \omega_0 |\xi|} \right)^2 {\rm d} \xi {\rm d} \zeta = 4 (1+\delta) \int_{0}^\infty \xi \left( e^{\alpha_\eta (2+\delta) \xi} e^{-c \omega_0 \xi} \right)^2 {\rm d} \xi,
    \end{align*}
    which is finite provided $\alpha_\eta < c\omega_0/(2+\delta)$. Moreover,
    \begin{align*}
        \finintUc :&= \int_{U^c} \Big( e^{\alpha_\eta |\xi|} e^{\alpha_\eta |\zeta|} e^{-\frac{\beta}{2} (|\zeta| - |\xi|) 1\{ |\zeta| - |\xi| > 0\}} e^{-c \omega_0 |\xi|}\Big)^2 {\rm d} \xi {\rm d} \zeta\\
        &\le \int_{U^c} \left( e^{\frac{\alpha_\eta (2+\delta)}{1+\delta} |\zeta|} e^{-\frac{\beta \delta}{2 (1+\delta)} |\zeta|} e^{-c \omega_0 |\xi|} \right)^2 {\rm d} \xi {\rm d} \zeta
    \end{align*}
    is finite so long as $\alpha_\eta < \frac{\beta \delta}{2 (2+\delta)}$. The optimal choice of $\delta$ thus satisfies
    \begin{align*}
        \frac{c\omega_0}{2+\delta} = \frac{\beta \delta}{2 (2+\delta)}.
    \end{align*}
    Hence we take $\delta = 2c\omega_0/\beta$ and conclude that $\finint \le C (\finintU + \finintUc)$ is finite whenever %$\alpha < \frac{\beta c \omega_0}{2 (\beta + c\omega_0)} = \astar (\omega_0)$.
    $\alpha_\eta < \frac{\beta c \omega_0}{2 (\beta + c\omega_0)}$. %Therefore, by fixing $\eta $
    We have thus shown that if $\alpha< \astar (\omega_0)$, then $\finint$ is finite %when $\eta := (\astar (\omega_0) - \alpha)/2$.
    for any $\eta < \astar (\omega_0) - \alpha$. This completes the result.
\end{proof}

\begin{lemma}\label{lemma:invLP} 
For all $0 < \alpha < \astar (\omega_0)$, 
the operator $(\cL_0 \cP)^{-1}$ is bounded on $L^2_\alpha$ with $\norm{(\cL_0 \cP)^{-1}} \le C$ uniformly in $\lambda \in [1,\infty)$.
\end{lemma}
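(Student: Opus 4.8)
\textbf{Proof plan for Lemma \ref{lemma:invLP}.}

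The plan is to work entirely in the Fourier domain on the flat-interface operator and then transfer the boundedness from $L^2$ to the weighted space $L^2_\alpha$ via a standard complex-shift / Paley--Wiener argument. First I would recall from \eqref{eq:symbolKP}--\eqref{eq:ainv} that on $L^2(\mathbb{R})$ the composition $\cL_0\cP$ acts as multiplication by the symbol
\begin{align*}
a_\lambda(\xi) = \Big(1-\frac{\lambda m_0}{\sqrt{\xi^2+\lambda^2\omega_0^2}}\Big)\Big(1-\frac{2i\lambda^2 m_0^2}{\xi^2-\lambda^2 E_0^2}\Big),
\end{align*}
whose removable singularities at $\xi=\pm\lambda E_0$ cancel, so that $a_\lambda$ is analytic, nowhere vanishing, and tends to $1$ at infinity. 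The explicit factored form of $a_\lambda^{-1}$ in \eqref{eq:ainv} shows $\|(\cL_0\cP)^{-1}\|_{L^2\to L^2} = \sup_\xi |a_\lambda(\xi)|^{-1}$, and I would verify that this supremum is bounded uniformly in $\lambda\in[1,\infty)$ — either by the scaling substitution $\xi = \lambda\eta$, under which $a_\lambda(\lambda\eta)$ becomes independent of $\lambda$ in the first factor and only improves (the pole term $2im_0^2/(\eta^2-E_0^2)$ shrinks like $1/\lambda^0$... actually is $\lambda$-independent after rescaling of the constant too — care needed here) so that a single $\lambda$-independent lower bound $c>0$ on $|a_\lambda|$ applies, or more robustly by noting $|1-m/\sqrt{\xi^2+\omega^2}|$ is bounded below away from $\xi=\pm E$ and the second factor's modulus is bounded below by $1$ there, with the product of the two remaining analytic near $\pm E$ with a $\lambda$-uniform positive lower bound obtained from the Taylor expansion.

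Next I would pass to $L^2_\alpha$. The map $\rho\mapsto w_\alpha\rho = e^{\alpha|t|}\rho$ is an isometry $L^2_\alpha\to L^2$, so it suffices to show $w_\alpha (\cL_0\cP)^{-1} w_{-\alpha}$ is bounded on $L^2$ uniformly in $\lambda$. Because $\alpha<\alpha_*$ and the symbol constraints above Theorem \ref{thm:invL} are in force, the convolution kernel of $(\cL_0\cP)^{-1}$ — obtained by inverse Fourier transform of $a_\lambda^{-1}$, using the factored form \eqref{eq:ainv} which is rational-plus-$(\xi^2+\omega^2)^{-1/2}$ — decays exponentially at a rate strictly larger than $\alpha$. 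Concretely, $a_\lambda^{-1}$ extends holomorphically to the strip $|\operatorname{Im}\xi|<\alpha$ (the poles of the rational factor sit at $\pm\sqrt{\lambda^2E_0^2+2i\lambda^2 m_0^2}$, which have imaginary part bounded below uniformly in $\lambda\ge 1$, hence outside a fixed strip once $\alpha$ is small enough relative to $\alpha_*$; the branch point of $\sqrt{\xi^2+\lambda^2\omega_0^2}$ is at $\xi=\pm i\lambda\omega_0$, even farther out), with the shifted symbol $a_\lambda(\cdot+i\alpha')$, $|\alpha'|\le\alpha$, still bounded below uniformly in $\lambda$ and $\alpha'$. Shifting the contour of integration then gives the weighted-kernel bound, so the conjugated operator is bounded on $L^2$ with a $\lambda$-uniform norm; equivalently one argues directly that multiplication by the analytic bounded-below family $\{a_\lambda(\cdot+i\alpha')\}$ has uniformly bounded inverse and applies the Paley--Wiener characterization of $L^2_\alpha$ as the space of $L^2$ functions whose Fourier transform extends to $L^2$ on horizontal lines in the strip $|\operatorname{Im}\xi|<\alpha$.

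The main obstacle I anticipate is the $\lambda$-uniformity near the removable singularities $\xi=\pm\lambda E_0$: as $\lambda$ grows these move off to infinity, and one must check that the cancellation between the vanishing first factor and the pole of the second factor does not degenerate, i.e. that the analytic continuation of $a_\lambda$ across $\pm\lambda E_0$ retains a lower bound independent of $\lambda$. This is where the rescaling $\xi=\lambda\eta$ is cleanest: it freezes the geometry, reducing everything to a single symbol $a_1(\eta)$ (after also rescaling $\alpha\mapsto\alpha/\lambda$, which only helps since a smaller weight is easier), and the claimed bounds follow from compactness of $|\eta|\le R$ plus the decay-to-$1$ at $|\eta|=\infty$. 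A secondary technical point is making the contour-shift / Paley--Wiener step rigorous at the branch cut of $\sqrt{\xi^2+\lambda^2\omega_0^2}$, but since $\alpha_*<\omega_0\le\lambda\omega_0$ by the defining inequalities of $\alpha_*$, the cut stays outside the relevant strip for all $\lambda\ge 1$ and this causes no difficulty.
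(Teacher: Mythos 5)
Your approach is correct in substance and takes a genuinely different route from the paper on the weighted-space step. Both proofs rest on the factored symbol $a_\lambda^{-1}$ from \eqref{eq:ainv} and (implicitly or explicitly) on the dilation $\xi=\lambda\eta$: one checks directly that $a_\lambda(\lambda\eta)=a_1(\eta)$ exactly, since $m=\lambda m_0$, $E=\lambda E_0$, $\omega=\lambda\omega_0$ all scale homogeneously, so your hesitation about the pole term can be discharged outright — the zero/pole cancellation at $\pm E$ rescales along with everything else. The divergence is in how $L^2_\alpha$-boundedness is obtained. The paper stays in physical space: it writes $(\flatL\cP)^{-1}=(1+R_0)(1+R_1)$ with explicit convolution kernels $r_0(t)=mK_0(\omega|t|)$ and $r_1(t)=\frac{(i-2)m^2}{2\zeta}e^{i\zeta|t|}$, observes the scaling bound $|r_j(t)|\le\lambda r(\lambda t)$ for a $\lambda$-independent $r$ with $e^{\alpha|\cdot|}r\in L^1$, and closes with Young's inequality on the weighted kernels — the uniformity in $\lambda\ge1$ falls out because rescaling only shrinks the effective weight, $\int e^{\alpha|s|/\lambda}|r(s)|\,ds\le\int e^{\alpha|s|}|r(s)|\,ds$. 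You instead stay in the Fourier domain and invoke a Paley--Wiener/strip argument: $a_\lambda^{-1}(\xi)=a_1^{-1}(\xi/\lambda)$ extends holomorphically and uniformly boundedly to $|\Im\xi|\le\alpha$, and a bounded holomorphic multiplier on that strip preserves $L^2_\alpha$. These are dual executions of the same idea; yours avoids computing the kernels explicitly but leans on Paley--Wiener for two-sided exponential weights, whereas the paper's is more elementary. One small correction: your parenthetical ``even farther out'' is backwards. Since $a_-^2=\tfrac12(\sqrt{4m_0^4+E_0^4}-E_0^2)>m_0^2-E_0^2=\omega_0^2$ whenever $E_0\ne0$, the branch point at $\pm i\lambda\omega_0$ is the \emph{nearer} singularity, not the pole at $\pm\lambda\zeta$; the binding constraint is $\alpha_*<\omega_0$, which you do correctly recover at the end from the requirement $w_{\alpha_*}K_0(\omega_0|\cdot|)\in L^1$, so the conclusion is unaffected.
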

\begin{proof}
Fix $0 < \alpha < \astar (\omega_0)$. By \eqref{eq:ainv} we know that 
$(\flatL \cP)^{-1}=(1+R_0) (1+R_1)$, where $R_0$ and $R_1$ are convolutions by $r_0 := mK_0 (\omega |\cdot|)$ and $r_1 :=\frac{(i-2)m^2}{2\zeta} e^{i \zeta |\cdot|}$, respectively, with $\zeta := a_+ + i a_-$ and 
\begin{align*}
    a_\pm := \sqrt{\frac{\sqrt{4m_0^4+E_0^4}\pm E_0^2}{2}}.
\end{align*}
The function $r_0$ has a logarithmic singularity at $0$, is smooth otherwise and decays exponentially at infinity with rate $\omega$. The exponential decay rate of $r_1$ is $a_- > \omega_0$.
Recalling the relation $\omega= \sqrt{m^2 - E^2} = \lambda \omega_{0}$ and the fact that $\astar (\omega_0) < \omega_0$, it follows that there exists $r \in L^1$ such that $e^{\alpha |\cdot|} r(\cdot) \in L^1$ and $|r_j (t)| \le \lambda r (\lambda t)$ for all $\lambda \in [1,\infty)$ and $j \in \{0,1\}$.
Using the identity $\norm{g_1 * g_2}_{L^2} \le \norm{g_1}_{L^1} \norm{g_2}_{L^2}$
with $g_1 (t) = \lambda e^{\alpha|t|} r(\lambda t)$ and $g_2(t) = e^{\alpha |t|}|\rho| (t)$, we obtain that for $\rho \in L^2_\alpha$ and $j \in \{0,1\}$,
\begin{align*}
    \norm{R_j [\rho]}^2_{L^2_\alpha} \le &\int_{\mathbb{R}} e^{2\alpha |t|} \Big(\int_{\mathbb{R}} \lambda r(\lambda (t-t')) \rho (t') {\rm d} t' \Big)^2 {\rm d} t \\
    \le
    &\int_{\mathbb{R}} \Big(\int_{\mathbb{R}} \lambda e^{\alpha|t-t'|} r(\lambda (t-t')) e^{\alpha |t'|}|\rho| (t') {\rm d} t' \Big)^2 {\rm d} t \le C \norm{\rho}_{L^2_\alpha}^2
\end{align*}
is bounded uniformly in $\lambda \in [1,\infty)$.
This completes the proof.
\end{proof}

\begin{remark}\label{remark:m2_1}
    The above proof %illustrates an advantage of including the factor of $m^2$ in the definition of $Q$ \eqref{eq:Q}.
    motivates the factor of $m^2$ appearing in the definition of $\cP$. If $m^2$ were replaced by $m^p$ for some $p \ne 2$, then the exponential decay rate of $r_1$ would become %$a_\pm$ would be replaced by
    \begin{align*}
        a_- = \sqrt{\frac{\sqrt{4m_0^p+E_0^4}- E_0^2}{2}},
    \end{align*}
    which is no longer necessarily greater than $\omega_0$. Thus we would not be able to establish boundedness of $(\cL_0 \cP)^{-1}$ on $L^2_\alpha$ for the full range of $\alpha$ we consider.
\end{remark}

For the following lemma, recall the definition of $\cM$ in \eqref{eq:cM}.
\begin{lemma}\label{thm:MHS}
For all $0 < \alpha < \astar (\omega_0)$ and $\lambda \in [1,\infty)$, %sufficiently small, %$\alpha \in 2 \mathbb{N}_+$, 
the operator $\cM$ is Hilbert-Schmidt (hence compact) on $L^2_\alpha$.
\end{lemma}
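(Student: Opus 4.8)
The plan is to show that $\cM = (\flatL \cP)^{-1}(\cL - \flatL)\cP$ is Hilbert--Schmidt by factoring through the structure already established: by Lemma \ref{lemma:invLP}, $(\flatL \cP)^{-1}$ is bounded on $L^2_\alpha$ uniformly in $\lambda$, so it suffices to prove that $(\cL - \flatL)\cP$ is Hilbert--Schmidt on $L^2_\alpha$, since the composition of a bounded operator with a Hilbert--Schmidt operator is Hilbert--Schmidt. I would first write down the kernel of $(\cL - \flatL)$ explicitly. By \eqref{eq:LP} and \eqref{eq:flatL}, this is the operator with kernel
\begin{align*}
k(t,t') = -\frac{m}{\pi}\Big( G_\omega(\gamma(t),\gamma(t')) - \tfrac{1}{2} K_0(\omega |t-t'|)\Big) = -\frac{m}{2\pi^2}\Big( K_0(\omega|\gamma(t)-\gamma(t')|) - K_0(\omega|t-t'|)\Big),
\end{align*}
using $G_\omega(x,y) = \tfrac{1}{2\pi} K_0(\omega|x-y|)$ and that the flat kernel is the $|\gamma(t)-\gamma(t')| = |t-t'|$ specialization. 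Note $|\gamma(t)-\gamma(t')| \le |t-t'|$ (arclength) while \eqref{eq:c} gives $|\gamma(t)-\gamma(t')| \ge c|t-t'|$, so both arguments are comparable to $|t-t'|$.

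The key estimate is a bound on $k(t,t')$ that decays in both $t$ and $t'$. Using the mean value theorem on $K_0$ together with $K_0' = -K_1$, one gets $|k(t,t')| \lesssim m\, |t-t'| \cdot \big| |t-t'| - |\gamma(t)-\gamma(t')| \big| \cdot \sup K_1$ over the relevant range; the factor $\big||t-t'| - |\gamma(t)-\gamma(t')|\big|$ is controlled by integrating the curvature bound \eqref{eq:beta}, which produces the crucial factor $e^{-\beta \g(s,t)}$ measuring how flat the curve is between the two points (flat branches contribute nothing, exactly as noted after \eqref{eq:H}). Combined with the exponential decay of $K_1$ at infinity, this yields a pointwise bound of the schematic form $|k(t,t')| \lesssim |t-t'|^3 e^{-\beta \g(s,t)} K_1(c\omega|t-t'|)$ — precisely the quantity appearing in the displayed integrability hypothesis defining $\astar$ above Theorem \ref{thm:invL}. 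I would then compose with $\cP = 1 + Q$: the kernel of $(\cL - \flatL)\cP$ is $k(t,t') + \frac{m^2}{E}\int_\mathbb{R} k(t,t'') e^{iE|t''-t'|}\,{\rm d}t''$. The first term's Hilbert--Schmidt norm on $L^2_\alpha$ — i.e. $\iint (e^{\alpha|t|} e^{\alpha|t'|} |k(t,t')|)^2\,{\rm d}t\,{\rm d}t' < \infty$ — is exactly the integral in the $\astar$-defining display, hence finite for $\alpha \le \astar$. For the $Q$-term, I would absorb the $e^{iE|t''-t'|}$ (modulus $1$) and use that $k(t,\cdot)$ decays exponentially: the hypothesis $|\cdot|^3 K_1(c|\cdot|) \le C e^{-\astar|\cdot|}$ and $\astar < \beta$ let the $t''$-integral of $|k(t,t'')|$ be bounded by something decaying exponentially in $|t|$, so the resulting kernel still has finite $L^2_\alpha$-weighted Hilbert--Schmidt norm after a Schur-type or direct Cauchy--Schwarz argument.

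The main obstacle is the bookkeeping around the geometric factor $\g(s,t)$ and ensuring all exponential weights close: one must simultaneously exploit (a) the curvature decay $\beta$ giving flatness away from the origin, (b) the $\astar < \beta$ and $\astar < \sqrt{(\sqrt{4m_0^4+E_0^4}-E_0^2)/2}$ constraints that keep the exponential weights $e^{\alpha|t|}$ from overpowering the kernel decay, and (c) the polynomial losses $|t-t'|^3$ coming from differentiating/expanding $K_0$ and from the curvature integration. The $\lambda$-uniformity is not needed for this lemma (it only asserts Hilbert--Schmidtness for each fixed $\lambda \ge 1$), which simplifies matters slightly, though one should still track that $m = \lambda m_0$, $E = \lambda E_0$ enter only through harmless constants. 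Once the pointwise kernel bound is in hand, the remaining steps are routine applications of Minkowski's integral inequality and the finiteness hypotheses built into the definition of $\astar$.
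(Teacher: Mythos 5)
Your high-level strategy — factor out the bounded $(\flatL\cP)^{-1}$, bound the kernel $\ell_\Delta(t,s)=K_0(\omega|\gamma(t)-\gamma(s)|)-K_0(\omega|t-s|)$ by $|t-s|^3 e^{-\beta\chi(s,t)}K_1(c\omega|t-s|)$ via the geometric estimates, and invoke the defining integrability properties of $\astar$ — lines up with the paper's ingredients. But there are two linked problems in how you then close the argument.

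First, the formula you write for the Hilbert--Schmidt norm on $L^2_\alpha$ is incorrect. An operator $T$ on $L^2_\alpha$ with kernel $K$ is Hilbert--Schmidt iff the conjugated operator $W_\alpha T W_\alpha^{-1}$ is Hilbert--Schmidt on $L^2$, whose kernel is $w_\alpha(t)K(t,t')w_\alpha(t')^{-1}$; the condition is therefore $\iint\bigl(e^{\alpha|t|}|K(t,t')|\,e^{-\alpha|t'|}\bigr)^2\,{\rm d}t\,{\rm d}t'<\infty$, with a \emph{negative} exponent in $t'$, not the $e^{\alpha|t|}e^{\alpha|t'|}$ weight you wrote. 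The $e^{\alpha|t|}e^{\alpha|t'|}$ weight appearing in the definition of $\astar$ is the Hilbert--Schmidt condition for the \emph{different} operator $W_\alpha(\flatL-\cL)W_\alpha$ acting on the unweighted $L^2$ — and this distinction is exactly what makes the paper's proof work. Second, and consequently, your treatment of the $Q$-piece does not close at the endpoint $\alpha=\astar$. Once you bound the kernel of $(\cL-\flatL)Q$ in modulus by $h(t):=\frac{m^2}{|E|}\int|k(t,t'')|\,{\rm d}t''$ (uniformly in $t'$), one has $h(t)\lesssim e^{-\astar|t|}$ and no better, so the correct weighted HS integral factors as $\bigl(\int e^{-2(\astar-\alpha)|t|}{\rm d}t\bigr)\bigl(\int e^{-2\alpha|t'|}{\rm d}t'\bigr)$, which diverges when $\alpha=\astar$. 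Your approach would prove the lemma only for $\alpha<\astar$, not for $\alpha\in(0,\astar]$ as stated.

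The paper avoids both issues by inserting $W_\alpha W_\alpha^{-1}$ in the middle of the composition, writing
$\cM = (\flatL\cP)^{-1}\,\underbrace{W_\alpha^{-1}}_{A_1}\,\underbrace{W_\alpha(\flatL-\cL)W_\alpha}_{A_2}\,\underbrace{W_\alpha^{-1}\cP}_{A_3}$,
so that the whole Hilbert--Schmidt burden falls on $A_2\colon L^2\to L^2$ (where both $W_\alpha$'s carry positive exponents, matching the $\astar$ definition exactly and working at the endpoint), while $A_3\colon L^2_\alpha\to L^2$ is merely bounded: one uses $L^2_\alpha\subset L^1$ so that $Q\rho\in L^\infty$ with $\|Q\rho\|_\infty\lesssim\|\rho\|_{L^2_\alpha}$, and then the $w_\alpha^{-1}\in L^2$ factor absorbs the lack of decay. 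That refactoring is the missing idea in your proposal; without it, the $Q$-term estimate cannot reach $\alpha=\astar$.
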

\begin{proof}
Fix $0 < \alpha < \astar (\omega_0)$, $0 < \eta < \astar (\omega_0) - \alpha$ and $\lambda \in [1,\infty)$. %, and note that $L^2_\alpha \subset L^1 \cap L^2$.
Define $A_1: L^2 \rightarrow L^2_\alpha, A_2: L^2 \rightarrow L^2, A_3: L^2_\alpha \rightarrow L^2$ by
\begin{align*}
    A_1 = W_\alpha^{-1}, \qquad A_2 = W_\alpha (\flatL - \cL) W_\eta, \qquad A_3 = W_\eta^{-1} \cP,
\end{align*}
where $W_\alpha$ is point-wise multiplication by $w_\alpha$. %the function $t \mapsto (1+t^2)^{\alpha/2}$.
Our strategy is to bound each term on the right-hand side of $\cM = (\flatL \cP)^{-1} A_1 A_2 A_3$.
We immediately have $\norm{A_1 [\rho]}_{L^2_\alpha} = \norm{\rho}_{L^2}$, so $A_1$ is bounded.
Using that
\begin{align}\label{eq:P}
    |\cP [\rho]| (t) \le |\rho| (t) + \frac{m^2}{|E|}\norm{\rho}_{L^1} \le |\rho| (t) + C \norm{\rho}_{L^2_\alpha},
\end{align}
we have
$
    \norm{A_3 [\rho]}_{L^2} \le C \norm{\rho}_{L^2_\alpha},
$
hence $A_3$ is also bounded.
%Recall that 
By Lemma \ref{lemma:invLP}, $(\flatL \cP)^{-1}$ is bounded on $L^2_\alpha$.

%We know that $(\flatL \cP)^{-1}$ is diagonal in the Fourier domain with symbol $a^{-1} (\xi)$ given by \eqref{eq:ainv}.
%Since 
%$\partial_\xi^j (a^{-1})$ is bounded for all $j \in \{0,1, \dots, \alpha\}$,
%the operator $\tilde{A}:H^\alpha \rightarrow H^\alpha$ defined by $\tilde{A} f (\xi) = a^{-1} (\xi) f(\xi)$ is bounded.
%It follows that $(\flatL \cP)^{-1}$ is bounded on $L^2_\alpha$.

It remains to show that $A_2$ is Hilbert-Schmidt.
To do so, we will prove exponential decay of the kernel of $\flatL - \cL$ using the decay properties of $K_0$ and the fact that our interface rapidly converges to a straight line at infinity \eqref{eq:beta}.
Let $\ell_\Delta (t,s) = K_0 (\omega |\gamma (t) - \gamma(s)|) - K_0 (\omega |t-s|)$ so that $-\frac{m}{\pi} \ell_\Delta (t,s)$ is the kernel of $\cL - \flatL$.
Since $K_0$ and $K_1 = -K_0'$ are both positive and monotonically decreasing, and $|\gamma (t) - \gamma (s)| \le |t-s|$, it follows that
\begin{align*}
    0 \le \ell_\Delta (t,s) =\int_{\omega |\gamma (t) - \gamma (s)|}^{\omega |t-s|} K_1 (z) {\rm d}z &\le \omega (|t-s| - |\gamma (t) - \gamma (s)|) K_1 (\omega |\gamma (t) - \gamma (s)|).
\end{align*}
Hence
\begin{align}\label{eq:lD1}
    0 \le \ell_\Delta (t,s) \le \omega (|t-s| - |\gamma (t) - \gamma (s)|) K_1 (c\omega |t-s|),
\end{align}
where $c>0$ is defined by \eqref{eq:c}.
The factor $K_1 (c \omega |t-s|)$ decays exponentially in $|t-s|$ but has a singularity at $t = s$.
To show that $\ell_\Delta$ is bounded and decays exponentially in $|t|+|s|$, we will control the difference $|t-s| - |\gamma (t) - \gamma (s)|.$

By smoothness of $\gamma$, for every $s, t \in \mathbb{R}$ there exist
$r_1, r_2 \in [\min\{s,t\}, \max\{s,t\}]$ such that
\begin{align}\label{eq:gamma}
    \gamma (t) = \gamma (s) + (t-s)\gamma'(s) + \frac{1}{2} (t-s)^2 \gamma'' (s) + \frac{1}{3!} (t-s)^3 (\gamma_1 ''' (r_1), \gamma_2 ''' (r_2)).
\end{align}
Since $|\gamma '| \equiv 1$ and thus $\gamma ' \cdot \gamma '' \equiv 0$, we have
\begin{align*}
    \gamma ' (s) \cdot (\gamma (t) -\gamma (s)) = t-s+ \frac{1}{3!} (t-s)^3 \gamma ' (s) \cdot (\gamma_1 ''' (r_1), \gamma_2 ''' (r_2)).
\end{align*}
Multiplying both sides by $t-s$ and using \eqref{eq:gamma} to expand $(t-s) \gamma ' (s)$, it follows that
\begin{align*}
    |\gamma (t) -\gamma (s)|^2 = (t&-s)^2 + \frac{1}{3!} (t-s)^4 \gamma ' (s) \cdot (\gamma_1 ''' (r_1), \gamma_2 ''' (r_2)) \\
    &+ \frac{1}{2} (t-s)^2 \gamma '' (s) (\gamma (t) - \gamma (s)) + \frac{1}{3!} (t-s)^3 (\gamma_1 ''' (r_1), \gamma_2 ''' (r_2)) \cdot (\gamma (t) - \gamma (s)).
\end{align*}
Again using \eqref{eq:gamma} to expand $\gamma (t) - \gamma (s)$ on the bottom line, we obtain that
$$(t-s)^2 \le |\gamma (t) -\gamma (s)|^2 + C \J(s,t) (t-s)^4, \qquad \J(s,t) := |\gamma ''|_{s,t}^2 + |\gamma '''|_{s,t}$$ for some $C > 0$, where the semi-norm $|\cdot|_{s,t}$ is defined by
\begin{align*}
    |\eta|_{s,t} := \norm{\eta}_{L^\infty [\min\{s,t\},\max\{s,t\}]}.
\end{align*}
This implies $0 \le |t-s| - |\gamma (t) - \gamma (s)| \le C \J(s,t) |t-s|^3$ for some possibly larger constant $C$.
From the rapid decay \eqref{eq:beta} of $|\gamma ''|^2 + |\gamma '''|$ at infinity,
it follows that
\begin{align*}
    0 \le |t-s| - |\gamma (t) - \gamma (s)| \le C |t-s|^3 e^{-\beta \g (s,t)}, \qquad \g (s,t) := \begin{cases}
    \min\{|s|,|t|\}, &st > 0\\
    0, &\text{else}.
    \end{cases}
\end{align*}
Using \eqref{eq:lD1}, we have %conclude that
\begin{align}\label{eq:bdl}
    0 \le \ell_\Delta (t,s)
    \le C \omega |t-s|^3e^{-\beta \g (s,t)} K_1 (c \omega |t-s|).
\end{align}
%with the above right-hand side bounded and decaying exponentially in $|s|+|t|$.
%Indeed, $K_1 (x)$ decays exponentially as $x \rightarrow \infty$ and $x K_1 (x)$ is bounded on $(0, \infty)$. 
%Thus for $\alpha > 0$ sufficiently small, $w_\alpha (t) \ell_\Delta (t,s) w_\alpha (s)$ is bounded with exponential decay in $|s| + |t|$.
By Lemma \ref{lemma:int}, we conclude that the kernel of $A_2$ is square-integrable: $$\int_{\mathbb{R}^2} w^2_\alpha (t) w^2_\eta (s) \ell^2_\Delta (t,s) {\rm d} t {\rm d} s < \infty.$$
This means
$A_2$ is Hilbert-Schmidt and the proof is complete.
\end{proof}

Henceforth, $\norm{\cdot}_2$ denotes the Hilbert-Schmidt norm on $L^2_\alpha$.

\begin{remark}\label{rem:E0}
Since the constant $C$ in \eqref{eq:P} is proportional to $1/|E|$, we have shown that $\norm{\cM}_2 \le C/|E|$ as $E \rightarrow 0$. 
The singularity at $E = 0$ should not be surprising, as outgoing and incoming conditions at infinity are the same in this case. It turns out that an appropriate linear combination of outgoing and incoming conditions produces an operator that behaves better in the $E \rightarrow 0$ limit. Indeed,
%To obtain an operator that behaves better in this limit, 
we could have instead defined $Q : L^2_\alpha \to L^\infty$ by $Q [\rho] (t) = \frac{im^2}{2E}\int_{-\infty}^\infty (e^{iE|t-t'|} - e^{-iE|t-t'|}) \rho(t') {\rm d}t'$, which is bounded %in $L^2_\alpha$
uniformly in $E$ (for any $\alpha > 0$). %, so long as $\alpha > 3/2$. \gb{Again, old notation.}
%But the resulting solution would no longer obey our outgoing conditions at infinity \eqref{eqn:outgoing}, as it 
The resulting solution would now look like $\sin (E |t|)$ as $|t| \rightarrow \infty$.
\end{remark}

Lemma \ref{thm:MHS} implies that if \eqref{eq:cM} does not have a unique solution $\rho$, then %there exists a function $\rho \in L^2_\alpha$ such that
the kernel of $1+\cM$ is a nontrivial finite-dimensional subspace of $L^2_\alpha$.
%We will now provide sufficient conditions for ensuring that \eqref{eq:M} has a unique solution.
An extension of Lemma \ref{thm:MHS} is the following
%Next, we show that the operator norm of $\cM$ goes to zero as $\lambda \rightarrow \infty$.
%\begin{prop}

%\end{prop}

%\medskip

%For completeness, we now state and prove the following extension of Lemma \ref{thm:MHS}. As with Theorem \ref{thm:op}, Proposition \ref{thm:hs} states that $1+\cM$ is invertible for all $\lambda$ sufficiently large.

\begin{prop}\label{thm:hs}
%Fix $m_0 > 0$ and $E_0 \in (-m_0, m_0) \setminus \{0\}$, and set $m=\lambda m_0$ and $E = \lambda E_0$ for $\lambda \in \mathbb{R}$.
For any $0 < \alpha < \astar (\omega_0)$, $\norm{\cM}_2 \le C/\sqrt{\lambda}$ %as $\lambda \rightarrow \infty$.
uniformly in $\lambda \in [1,\infty)$.
\end{prop}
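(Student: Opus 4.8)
The plan is to reuse the factorization of $\cM$ from the proof of Lemma~\ref{thm:MHS}, namely $\cM = (\flatL\cP)^{-1}A_1 A_2 A_3$ with $A_1 = W_\alpha^{-1}$, $A_2 = W_\alpha(\flatL-\cL)W_\alpha$, $A_3 = W_\alpha^{-1}\cP$ ($W_\alpha$ being multiplication by $w_\alpha$), and to track the $\lambda$-dependence of each factor. By submultiplicativity of the Hilbert--Schmidt norm against operator norms,
\begin{align*}
\norm{\cM}_2 \le \norm{(\flatL\cP)^{-1}}\,\norm{A_1}\,\norm{A_2}_2\,\norm{A_3}.
\end{align*}
Lemma~\ref{lemma:invLP} already gives $\norm{(\flatL\cP)^{-1}}\le C$ uniformly in $\lambda\in[1,\infty)$, and $\norm{A_1}=1$. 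From \eqref{eq:P}, together with $m=\lambda m_0$ and $E=\lambda E_0$ so that $m^2/|E| = \lambda\,m_0^2/|E_0|$, one obtains $\norm{A_3}\le C\lambda$ for all $\lambda\ge1$ (this is essentially already in the proof of Lemma~\ref{thm:MHS}). Hence everything reduces to showing $\norm{A_2}_2 \le C\lambda^{-3/2}$.

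For $\norm{A_2}_2$ I would start from \eqref{eq:bdl}: the kernel of $A_2$ is $-\frac{m}{\pi}w_\alpha(t)\ell_\Delta(t,s)w_\alpha(s)$ and $0\le\ell_\Delta(t,s)\le\frac{\omega}{\sqrt2}C_0|t-s|^3 e^{-\beta\g(s,t)}K_1(c\omega|t-s|)$ with $\omega=\lambda\omega_0$, so
\begin{align*}
\norm{A_2}_2^2 \le \frac{\lambda^4 m_0^2\omega_0^2 C_0^2}{2\pi^2}\int_{\mathbb{R}^2} e^{2\alpha|t|}e^{2\alpha|s|}|t-s|^6 e^{-2\beta\g(s,t)}K_1^2(c\lambda\omega_0|t-s|)\,{\rm d}t\,{\rm d}s.
\end{align*}
In this integral I would substitute $u=t-s$, $v=s$; using the identity $|t|+|s|=|t-s|+2\g(s,t)$ the integrand becomes $e^{2\alpha|u|}e^{(4\alpha-2\beta)\g(v,v+u)}|u|^6 K_1^2(c\lambda\omega_0|u|)$, and since the conditions defining $\astar$ force $\alpha<\beta/2$, a short computation gives $\int_{\mathbb{R}}e^{(4\alpha-2\beta)\g(v,v+u)}\,{\rm d}v=\frac{1}{\beta-2\alpha}+|u|$. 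Hence the double integral is at most $C\int_{\mathbb{R}}e^{2\alpha|u|}(|u|^6+|u|^7)K_1^2(c\lambda\omega_0|u|)\,{\rm d}u$, and the rescaling $w=\lambda u$ bounds this by $C\lambda^{-7}\int_0^\infty e^{2\alpha w}(w^6+w^7)K_1^2(c\omega_0 w)\,{\rm d}w$ for $\lambda\ge1$, the last integral being finite since $\alpha<c\omega_0$ (again from the $\astar$ conditions) controls the decay at infinity while $K_1^2(c\omega_0 w)=O(w^{-2})$ at the origin is harmless against the weight $w^6$. This yields $\norm{A_2}_2^2\le C\lambda^4\cdot\lambda^{-7}=C\lambda^{-3}$, and combining the four bounds gives $\norm{\cM}_2\le C\lambda^{-1/2}$.

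The hard part is extracting the correct power of $\lambda$ from $\norm{A_2}_2$: I have to disentangle the difference variable $t-s$ (on which the rescaling $w=\lambda(t-s)$ acts, pulling the factor $\lambda^{-7}$ out of $|t-s|^6 K_1^2(c\lambda\omega_0|t-s|)$) from the ``location'' variable, whose integral must be kept finite --- contributing only an extra polynomial factor in $|t-s|$ --- by the exponential weight $e^{-\beta\g}$; this is precisely where the quantitative form of the conditions on $\astar$ is used. One also has to bookkeep carefully that the $\lambda^4$ coming from the prefactor $m^2\omega^2$, the $\lambda$ from $\norm{A_3}$, and the $\lambda^{-7}$ from the integral combine to exactly $\lambda^{-1}$ inside the square root.
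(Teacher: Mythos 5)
Your proposal is correct and follows essentially the same line of attack as the paper's proof: same factorization $\cM=(\flatL\cP)^{-1}A_1A_2A_3$, same appeal to Lemma~\ref{lemma:invLP} for the uniform bound on $(\flatL\cP)^{-1}$, same $\norm{A_3}\le C\lambda$ from \eqref{eq:P}, and the same starting kernel estimate \eqref{eq:bdl}. The only genuine variation is in how the $\lambda^{-3/2}$ decay of $\norm{A_2}_2$ is pulled out of the double integral. The paper changes variables to $(\xi,\zeta)=(t-s,t+s)$ and uses the comparison inequalities $|\zeta+\xi|+|\zeta-\xi|\le|\zeta+\lambda\xi|+|\zeta-\lambda\xi|$ and $\g\bigl(\tfrac{\zeta-\lambda\xi}{2},\tfrac{\zeta+\lambda\xi}{2}\bigr)\le\g\bigl(\tfrac{\zeta-\xi}{2},\tfrac{\zeta+\xi}{2}\bigr)$ to reduce everything to the fixed $\lambda$-independent integral appearing in the definition of $\astar$, so finiteness is invoked directly. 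You instead change to $(u,v)=(t-s,s)$, integrate out $v$ explicitly via the identity $|t|+|s|=|t-s|+2\g(s,t)$ and the closed form $\int_\mathbb{R}e^{(4\alpha-2\beta)\g(v,v+u)}\,\mathrm{d}v=\tfrac{1}{\beta-2\alpha}+|u|$, and then rescale $w=\lambda u$. Your route is more hands-on and makes the constants transparent, but it leans on the implicit facts $\astar<\beta/2$ and $\astar<c\omega_0$; these do follow from the requirement that the Hilbert--Schmidt integral in the definition of $\astar$ be finite (the $v$-integral diverges otherwise, and the $u$-integral diverges at infinity otherwise), but since the paper only states $\astar<\beta$ explicitly, it is worth spelling out that one-line deduction. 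The final bookkeeping $\lambda^4\cdot\lambda^{-7}\cdot\lambda^2=\lambda^{-1}$ (inside the square) is the same in both arguments.
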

%By $\norm{M}_2$, we mean the Hilbert-Schmidt norm of $M$ on $L^2_\alpha$.
\begin{proof}%[Proof of Theorem \ref{thm:hs}]
%\tred{Set $\alpha = 2$.}
Suppose $\lambda \ge 1$.
As in the proof of Lemma \ref{thm:MHS},
we will bound each factor on the right-hand side of $\cM = (\flatL \cP)^{-1} A_1 A_2 A_3$.
%Using \eqref{eq:ainv}, we see that $a^{-1}$ and all its derivatives (with respect to $\xi$) are bounded uniformly in $\xi$ and $\lambda$. Thus 
By Lemma \ref{lemma:invLP}, the operator norm of $(\flatL \cP)^{-1}$ on $L^2_\alpha$ is bounded uniformly in $\lambda$.
From \eqref{eq:P} it is clear that $\norm{A_3} \le C\lambda$.
Since $A_1$ is independent of $\lambda$, it remains to bound $\norm{A_2}_2$.

The kernel of $A_2$ is $k(t,s) = \frac{m}{\pi} w_\alpha (t) \ell_\Delta (t,s) w_\eta (s)$, with $\ell_\Delta (t,s) = K_0 (\omega |\gamma (t) - \gamma(s)|) - K_0 (\omega |t-s|)$ as before.
From \eqref{eq:bdl} it follows that
\begin{align*}
    k^2 (t,s) \le C (m\omega)^2
    w_\alpha^2 (t) w_\eta^2 (s) (t-s)^6 e^{-2\beta \g(s,t)} K_1^2 (c\omega |t-s|).
\end{align*}
%We see that $k^2$ is bounded, as the factor $(t-s)^6$ more than cancels the singularity of $K_1^2 (c \omega |t-s|)$ when $t-s \rightarrow 0$.
%Moreover, $k^2$ decays exponentially in $|s| + |t|$ due to the combined exponential decay of $e^{-\beta g (s,t)}$ and $K_1 (c \omega |t-s|)$.
%Thus $k \in L^2$ meaning that $A_2$ is Hilbert-Schmidt on $L^2$.
Performing the change of variables $(\xi,\zeta) := (t-s,t+s)$ and using the fact that $|\zeta+\xi| + |\zeta-\xi| \le |\zeta+\lambda \xi| + |\zeta-\lambda \xi|$ and $\g (\frac{\zeta-\lambda \xi}{2},\frac{\zeta+\lambda\xi}{2}) \le \chi (\frac{\zeta-\xi}{2},\frac{\zeta+\xi}{2})$ for all $\lambda \ge 1$ and $\xi, \zeta \in \mathbb{R}$, it follows that
\begin{align*}
    \norm{k}^2_{L^2} &\le C (m \omega)^2 \int_{\mathbb{R}^2}
    w_\alpha^2 \Big(\frac{\zeta+\lambda\xi}{2} \Big) w_\eta^2 \Big(\frac{\zeta-\lambda\xi}{2}\Big) \xi^6 e^{-2\beta \g(\frac{\zeta-\lambda \xi}{2},\frac{\zeta+\lambda\xi}{2})} K_1^2 (c\omega_0 \lambda |\xi|) {\rm d} \zeta {\rm d} \xi\\
    &=\frac{C}{\lambda^7}(m \omega)^2 \int_{\mathbb{R}^2}
    w_\alpha^2 \Big(\frac{\zeta+\xi}{2} \Big) w_\eta^2 \Big(\frac{\zeta-\xi}{2}\Big) \xi^6 e^{-2\beta \g(\frac{\zeta-\xi}{2},\frac{\zeta+\xi}{2})} K_1^2 (c\omega_0 |\xi|) {\rm d} \zeta {\rm d} \xi.
\end{align*}
Since the above integral is finite (by Lemma \ref{lemma:int}) and independent of $\lambda$, we have shown that
$\norm{k}_{L^2} \le C/\lambda^{3/2}$ for all %$\lambda$ sufficiently large.
$\lambda \ge 1$.
Thus
\begin{align}\label{eq:cM_bd}
    \norm{\cM}_2 \le \norm{(\flatL \cP)^{-1}} \norm{A_1} \norm{A_2}_2 \norm{A_3} \le C/\sqrt{\lambda}
\end{align}
and the proof is complete.
%This completes the proof.
\end{proof}

\begin{remark}\label{remark:m2_2}
    The above proof further motivates the factor of $m^2$ in the definition \eqref{eq:Q} of $Q$. In particular, if we instead were to define $Q$ by
    \begin{align*}
        Q[\rho](t) := \frac{m^p}E \int_{\Rm} e^{iE|t-t'|} \rho(t'){\rm d}t'
    \end{align*}
    with $p \in \mathbb{R}$, then %the same calculations would yield $\norm{(\cL_0 \cP)^{-1}} \ge C\lambda^{2-p}$ and 
    the symbol $a^{-1}$ of $(\cL_0 \cP)^{-1}$ in \eqref{eq:ainv} would be replaced by
    \begin{align*}
        a^{-1} (\xi) = \Big(1+ \frac{m}{\sqrt{\xi^2 + \omega^2}}\Big) \Big(1 + \frac{2im^p + m^2}{\xi^2 -E^2 - 2im^p}\Big),
    \end{align*}
    whose maximum absolute value over $\xi \in \mathbb{R}$ is $O(1 + \lambda^{2-p})$. %clearly bounded by $C (1 + \lambda^{2-p})$ uniformly in $\lambda$.
    %one could follow the same calculations to show that
    It would follow that
    $$\norm{(\cL_0 \cP)^{-1}}_{L^2_\alpha \to L^2_\alpha} = O(\lambda^{\max\{0,2-p\}}), \qquad \norm{A_3}_{L^2_\alpha \to L^2} = O (\lambda^{\max\{0,p-1\}}),$$
    with the second equality following immediately from \eqref{eq:P} with $m^2$ replaced by $m^p$.
    This means our bound in \eqref{eq:cM_bd} would get weaker if $p \notin [1,2]$.
\end{remark}

%The above result 
Proposition \ref{thm:hs} implies that $\norm{\cM} \rightarrow 0$ as $\lambda \rightarrow \infty$, with $\norm{\cM}$ the operator norm of $\cM$ on $L^2_\alpha$. %, where $\norm{\cdot}$ denotes the operator norm on $L^2_\alpha$. 
However, it is possible to get faster decay of $\norm{\cM}$ in $\lambda$.

\begin{prop}\label{thm:op}
    For any $0 < \alpha < \astar (\omega_0)$, $\norm{\cM}\le C/\lambda$ %as $\lambda \rightarrow \infty$.
uniformly in $\lambda \in [1,\infty)$.
\end{prop}
%\newpage
%We are now ready to prove Theorems \ref{thm:op} and \ref{thm:invL}.
\begin{proof}%[Proof of Theorem \ref{thm:op}]
%Observe as before 
Recall Lemma \ref{lemma:invLP}, which states that $\norm{(\flatL \cP)^{-1}} \le C$ uniformly in $\lambda$. Thus it remains to bound the norm of $(\cL - \flatL)\cP$. We will bound the terms $\cL - \flatL$ and $(\cL-\flatL) Q$ separately.

We begin with the latter. We have that $\norm{Q [\rho]}_\infty \le \frac{C m^2}{E} \norm{\rho}_{L^2_\alpha}$, thus for $(\cL - \flatL) Q$ it suffices to bound $\cL - \flatL$ from $L^\infty$ to $L^2_\alpha$.
The kernel of $\cL - \flatL$ is $-\frac{m}{\pi} \ell_\Delta (t,s)$, where
$\ell_\Delta (t,s) = K_0 (\omega |\gamma (t) - \gamma(s)|) - K_0 (\omega |t-s|)$.
Using \eqref{eq:bdl},
it follows that for $\mu \in L^\infty$ we have
\begin{align*}
    |(\cL - \flatL) [\mu] |(t) &\le \frac{m}{\pi}\norm{\mu} \int_\mathbb{R} \ell_\Delta (t,s) {\rm d}s  \\
    &\le C \norm{\mu} \lambda^2 \int_{\mathbb{R}} |t-s|^3 e^{-\beta \g(s,t)} K_1 (c \omega |t-s|) {\rm d}s \\
    &=: C \norm{\mu}\lambda^2 I_{1},
\end{align*}
%\begin{align*}
%    |(\tilde{L} - L) \rho |(t) \le \frac{m}{\pi}\norm{\rho} \int_\mathbb{R} \ell_\Delta (t,s) ds \le C \norm{\rho} \lambda^2 \int_{\mathbb{R}} |t-s|^3 e^{-\beta g(s,t)} K_1 (c \omega |t-s|) ds =: C \norm{\rho}\lambda^2 I,
%\end{align*}
where $\norm{\mu} := \norm{\mu}_\infty$.
For concreteness, suppose $t \ge 0$. This means
\begin{align*}
    \g(s,t) = \begin{cases}
        0, & s< 0\\
        s, & 0 \le s \le t\\
        t, & s > t
    \end{cases}
\end{align*}
so that
%\begin{align*}
%    I = \int_{-\infty}^0 |t-s|^3 K_1 (c \omega |t-s|) ds + \int_{0}^t |t-s|^3 e^{-\beta s} K_1 (c \omega |t-s|) ds + \int_{t}^\infty |t-s|^3 e^{-\beta t} K_1 (c \omega |t-s|) ds.
%\end{align*}
\begin{align*}
    I_{1} = \int_{-\infty}^0 \kappa_\omega (t-s) {\rm d}s + \int_{0}^t \kappa_\omega (t-s) e^{-\beta s} {\rm d}s + \int_{t}^\infty \kappa_\omega (t-s) e^{-\beta t} {\rm d}s =: I_{2} + I_3 + I_4,
\end{align*}
where we have defined $\kappa_\omega (x) := |x|^3 K_1 (c \omega |x|)$.
Recall the definition \eqref{eq:alphamax} of $\astar$, which implies that $0 < \astar (\omega_0) < \min \{\beta, c \omega_0\}$. Since $\kappa_\omega$ is continuous and decaying exponentially at infinity with rate $c\omega$, it follows that $\kappa_{\omega_0} (x) \le C e^{-\astar(\omega_0) |x|}$ for all $x \in \mathbb{R}$.
Therefore, using the shorthand $\astar = \astar (\omega_0)$,
\begin{align*}
    I_2 &\le C \omega^{-4} \int_{t}^\infty e^{-\astar s} {\rm d} s = \frac{C}{\astar \omega^{4}} e^{-\astar t}\\ %= C_1 \omega^{-4} e^{-\astar t},\\
    I_3 &=\int_{0}^t \kappa_\omega (s) e^{-\beta (t-s)} {\rm d} s \le C \omega^{-4} \int_0^t e^{-\astar s}e^{-\beta (t-s)} {\rm d} s \\ &= \frac{C}{(\beta - \astar) \omega^4}e^{-\beta t} (e^{(\beta-\astar)t} - 1)\le C \omega^{-4} e^{-\astar t},\\
    I_4 &= e^{-\beta t}\int_{-\infty}^0 \kappa_\omega (s) {\rm d} s\le \frac{C}{\astar \omega^{4}} e^{-\beta t}%= C_3 \omega^{-4} e^{-\beta t}
    \le C \omega^{-4} e^{-\astar t}
\end{align*}
uniformly in $t \ge 0$ and $\lambda \ge 1$. %, for some positive constants $C_1, C_2$ and $C_3$.
Repeating the same argument for $t<0$ and recalling that $\alpha < \astar$, it follows that $\norm{(\cL - \flatL) [\mu]}_{L^2_\alpha} \le \frac{C \norm{\mu}}{\lambda^2}$.
Thus we have shown that $\norm{(\cL - \flatL) Q} \le C/\lambda$ on $L^2_\alpha$.

It remains to bound $\cL - \flatL$ from $L^2_\alpha$ to itself. 
We already showed in the proof of Proposition \ref{thm:hs} that $A_2 = W_\alpha (\flatL - \cL) W_\alpha$ is Hilbert-Schmidt on $L^2$ with $\norm{A_2}_2 \le C/\lambda^{3/2}$. This means $\norm{\cL - \flatL}_{L^2_\alpha \rightarrow L^2_\alpha} \le C/\lambda^{3/2}$ and the proof is complete.
\end{proof}

Finally, we prove the %main 
first well-posedness result of Section \ref{sec:bie}.
\begin{theorem:invL}
    For any 
$0 < \alpha < \astar (\omega_0)$,
the integral equation \eqref{eq:bif} admits a unique solution $\rho \in L^2_\alpha$
for all but a finite number of $\lambda \in [1,\infty)$.
\end{theorem:invL}
\begin{proof}%[Proof of Theorem \ref{thm:invL}]
Fix $0 < \alpha < \astar (\omega_0)$ and $\delta > 0$ as small as necessary. The proof of Lemma \ref{thm:MHS} easily extends to complex $\lambda$ in a sufficiently small neighborhood of the real axis. Thus $\cM:L^2_\alpha \rightarrow L^2_\alpha$ is compact for all $\lambda \in [1,\infty) \times (-\delta, \delta) \subset \mathbb{C}$. Since $\cM$ is holomorphic in $\lambda$,
we apply Lemma \ref{thm:MHS}, Proposition \ref{thm:op} and Kato perturbation theory \cite[Theorem VII.1.9]{kato2013perturbation} to prove that $\cM$ has an eigenvalue of $-1$ for only a finite number of $\lambda \in [1,\infty)$. The result then follows from \eqref{eq:cM}.
\end{proof}

%\tred{If we use exponential weights for our weighted $L^2$ spaces, Theorem \ref{thm:op} should imply that $1+M$ is invertible for all but a finite number of $\lambda \in [1,\infty)$. Indeed, $1+M$ would be holomorphic in $\lambda \in [1, \infty) \times (-\delta, \delta)\subset \mathbb{C}$ with $\delta>0$ as small as necessary and $M$ compact, thus we should be able to apply \cite[Theorem VII.1.9]{kato2013perturbation}.}

%We have shown that there is a unique solution $\mu$ of $\tilde{L} P \mu = f$ whenever $\lambda$ is sufficiently large.
%We will now analyze the invertibility of $\tilde{L} P$ in a non-asymptotic regime.
%We write $(LP)^{-1} \tilde{L} P = 1 + M$ with $M = (LP)^{-1} (\tilde{L}-L)P$ as before.

%Theorem \ref{thm:invE} requires the following

\subsection{Proofs of Theorems \ref{thm:reg0} and \ref{thm:outgoing} \label{subsec:reg}}
We first prove Theorem \ref{thm:reg0}, a regularity result for functions $\rho$ satisfying our boundary integral equation $\cL \cP \rho = 2mu_i \circ \gamma$. Recall the definition of the weighted Sobolev spaces $H^s_\alpha$ provided in~\eqref{eq:hsalpha}.

\begin{theorem:reg0}
Let $\so \in \mathbb{N} \cup \{0\}$ and $0 < \alpha < \astar (\omega)$. If the function $\rho \in L^2_\alpha$ satisfies $\cL \cP \rho\in H^\so_\alpha$, then $\rho\in H^\so_\alpha$. In particular, suppose $f_j \in C_c (\Omega_j)$ for $j=1,2$, then any function $\rho \in L^2_\alpha$ satisfying \eqref{eq:bif} also belongs to $\cap_{s \in \mathbb{N}} H_\alpha^s$.
\end{theorem:reg0}
\begin{proof}
We write
    \begin{align}\label{eq:3terms}
        \rho = (I-\cL_0 \cP) [\rho] + (\cL_0 - \cL) \cP [\rho] + \cL \cP [\rho].
    \end{align}
    Recalling \eqref{eq:symbolKP}, we see that in the Fourier domain, the first term on the above right-hand side is multiplication by the function
    \begin{align*}
        1 - \Big( 1 - \frac{m}{\sqrt{\xi^2 + \omega^2}}\Big)\Big(1&- \frac{2im^2}{\xi^2 - E^2}\Big)\\
        &= \frac{m}{\sqrt{\xi^2 + \omega^2}} + \frac{2im^2}{\xi^2 - E^2} - \frac{2im^3}{\sqrt{\xi^2 + \omega^2}(\xi^2 - E^2)}\\
        &= \frac{m}{\sqrt{\xi^2 + \omega^2}} + \frac{2im^2}{\sqrt{\xi^2 + \omega^2}(\sqrt{\xi^2 + \omega^2} + m)},
    \end{align*}
    which is holomorphic in $\xi \in \mathbb{R} \times (-\omega, \omega) \subset \mathbb{C}$ %in a neighborhood of the real axis 
    and, for any fixed $-\omega < \Im \xi < \omega$, decays like $1/|\xi|$ as $\Re \xi \to \pm \infty$. Since $\alpha < \astar (\omega) < \omega$, it follows that $I - \cL_0 \cP: H^j_\alpha \to H^{j+1}_\alpha$ is bounded for any $j \in \mathbb{N} \cup \{0\}$.
    
To control the second term on the right-hand side of \eqref{eq:3terms}, we first
    recall that %$\cP = I + Q$. The 
    the operator $Q$ defined in \eqref{eq:Q} acts as convolution with the function $t \mapsto \frac{m^2}{E} e^{iE|t|}$, %which is bounded and has bounded derivative
    whose derivative $i m^2 \sgn (t) e^{iE |t|}$ is uniformly bounded in $t$. It follows that $Q:H^j_\alpha \to C^{j+1}$ defines a bounded map %and hence $\cP = I+Q : H^j_\alpha \to C^{j}$ is bounded 
    for any $j \in \mathbb{N}\cup \{0\}$.

Therefore, in order to prove that $(\cL_0 - \cL) \cP = (\cL_0 - \cL) (I+Q)$ defines a bounded map from $H^{j}_\alpha$ to $H^{j+1}_\alpha$, it suffices to
    %We now 
    show that $\cL_0 - \cL : \hc^{j} \to H^{j+1}_\alpha$ is bounded when $\hc^j \in \{H^j_\alpha, C^j \}$. The kernel of $\cL_0 - \cL$ is proportional to $\ell_\Delta (t,s) = K_0 (\omega |\gamma (t) - \gamma(s)|) - K_0 (\omega |t-s|)$. %Using that
    %\begin{align*}
    %    (\partial_t + \partial_s) \ell_\Delta (t,s) &= (\partial_t + \partial_s) K_0 (\omega |\gamma (t) - \gamma(s)|)\\
    %    &= -\omega \frac{\gamma (t) - \gamma (s)}{|\gamma (t) - \gamma (s)|} \cdot (\gamma' (t) - \gamma' (s)) K_1 (\omega |\gamma (t) - \gamma (s)|),
    %\end{align*}
    Using the regularity of $\gamma$, one can verify that for any $j \in \mathbb{N} \cup \{0\}$, the function $\ell_\Delta^{(j)} (t,s) := (\partial_t + \partial_s)^j \ell_\Delta (t,s)$ is continuously differentiable and exponentially decaying in both variables. More specifically, the arguments used to obtain \eqref{eq:bdl} extend to show that
    \begin{align}\label{eq:bdlj}
        |\ell_\Delta^{(j)} (t,s)| + |\partial_t \ell_\Delta^{(j)} (t,s)|
    \le C e^{-\beta \g (s,t)} e^{-c \omega |t-s|}
    \end{align}
    for some positive constant $C$ depending on $j$, where we recall the definition
    \begin{align*}
        \g (s,t) := \begin{cases}
    \min\{|s|,|t|\}, &st > 0\\
    0, &\text{else}.
    \end{cases}
    \end{align*}

    It follows by induction that for any $j$,
    \begin{align}\label{eq:ind}
        \partial^{j}_t \int_\mathbb{R} \ell_\Delta (t,s) \mu (s) {\rm d} s = \int_\mathbb{R} \sum_{i=0}^j \binom{j}{i} \ell_\Delta^{(j-i)} (t,s) \mu^{(i)} (s) {\rm d} s.
    \end{align}
    Indeed, \eqref{eq:ind} trivially holds when $j=0$, while assuming that \eqref{eq:ind} holds for some $j \in \mathbb{N} \cup \{0\}$ implies that
    \begin{align*}
        \partial^{j+1}_t \int_\mathbb{R} \ell_\Delta (t,s) \mu (s) {\rm d} s &= \partial_t \int_\mathbb{R} \sum_{i=0}^j \binom{j}{i} \ell_\Delta^{(j-i)} (t,s) \mu^{(i)} (s) {\rm d} s \\
        &= \int_\mathbb{R} \sum_{i=0}^j \binom{j}{i} \ell_\Delta^{(j-i+1)} (t,s) \mu^{(i)} (s) {\rm d} s\\
        &\qquad +
        \int_\mathbb{R} \sum_{i=0}^j \binom{j}{i} \ell_\Delta^{(j-i)} (t,s) \mu^{(i+1)} (s) {\rm d} s\\
        &= \int_\mathbb{R} \sum_{i=0}^{j+1} \binom{j+1}{i} \ell_\Delta^{(j+1-i)} (t,s) \mu^{(i)} (s) {\rm d} s,
    \end{align*}
    where we have integrated by parts in $s$ to obtain the second equality and used the identity $\binom{j+1}{i} = \binom{j}{i} + \binom{j}{i-1}$ to get the last equality.
    Again using the regularity of $\ell_\Delta^{(j)}$, it follows from \eqref{eq:ind} that
    \begin{align}\label{eq:kerl}
        \partial_t^{j+1} \int_\mathbb{R} \ell_\Delta (t,s) \mu (s) {\rm d} s &=\int_\mathbb{R} \sum_{i=0}^j \binom{j}{i} \partial_t \ell_\Delta^{(j-i)} (t,s) \mu^{(i)} (s) {\rm d} s.
    \end{align}
    %\begin{align*}
    %    \partial_t^{j+1} \int_\mathbb{R} \ell_\Delta (t,s) \mu (s) {\rm d} s &=\int_\mathbb{R} \sum_{i=0}^j \binom{j}{i} \left[\partial_t (\partial_t + \partial_s)^{j-i} \ell_\Delta (t,s) \right]\mu^{(i)} (s) {\rm d} s\\
    %    &= \int_\mathbb{R} \sum_{i=0}^j k_i(t,s) \mu^{(i)} (s) {\rm d} s
    %\end{align*}
    %for some functions $k_i \in C (\mathbb{R}^2)$ that decay exponentially in both variables. 
In the proof of Proposition \ref{thm:op}, we used the exponential decay of $\ell_\Delta$ to show that $\cL_0 - \cL : S^0 \to L^2_\alpha$ is bounded when $S^0 \in \{L^2_\alpha, C^0\}$. By \eqref{eq:bdlj}, the same argument establishes boundedness from $S^0$ to $L^2_\alpha$ of the integral operator with kernel $(t,s) \mapsto \partial_t \ell_\Delta^{(j-i)} (t,s)$.
It follows from \eqref{eq:kerl} that $\cL_0 - \cL$ defines a bounded map from $\hc^j$ to $H^{j+1}_\alpha$, as desired. 

The above paragraphs show that if $\rho \in H^{j}_\alpha$, then %$$\rho = g_1 + g_2 + g_3, \qquad g_1 \in H^{j+1}_\alpha, \quad g_2 \in H^{j+2}_\alpha, \quad g_3 \in $$
    $\rho$ can be written as a sum of functions in $H^{j+1}_\alpha$ and $H^\so_\alpha$. Since it is assumed that $\rho \in L^2_\alpha = H^0_\alpha$, we conclude by induction that $\rho \in H^\so_\alpha$.
    
Finally, the assumption on the $f_j \in C_{c}(\Omega_{j})$ implies that the right-hand side of \eqref{eq:bif} is in $\cap_{s \in \mathbb{N}} H_\alpha^s$ and the proof is complete.
\end{proof}

\begin{thm:outgoing}
    Suppose $f_j \in C_c (\Omega_j)$ for $j=1,2$ and let $\alpha > 0$.
Suppose $\rho \in L^2_\alpha$ satisfies \eqref{eq:bif}, and set
\begin{align*}
    \mu := \cP [\rho], \qquad u_s := S_\omega [\mu], \qquad u := u_i + u_s.
\end{align*}
Then \eqref{eqn:pde}, \eqref{eqn:jump}, \eqref{eqn:outgoing} and \eqref{eqn:decayu} hold.
\end{thm:outgoing}

\begin{proof}%[Proof of Theorem \ref{thm:outgoing}]
That \eqref{eqn:decayu} holds is a consequence of the exponential decay of $K_0 (\omega |x-\gamma (t)|)$ in $|x-\gamma (t)|$ and the Lebesgue dominated convergence theorem. The latter also implies that \eqref{eqn:pde} holds, where we recall that $(-\Delta + \omega^2) u_i = f$ in $\mathbb{R}^2 \setminus \Gamma$ (with $f$ the function equal to $f_j$ in $\Omega_j$); see \eqref{eq:freePDE}--\eqref{eq:gf}. %, while the jump conditions \eqref{eqn:jump} follow immediately from the derivation of \eqref{eq:mu}.
To verify the jump conditions \eqref{eqn:jump}, we use the assumption $f_j \in C_c (\Omega_j)$ 
to conclude by
%for $j=1,2$ to 
%conclude that $2m u_i \circ \gamma \in H^s_\alpha$ for every $s \in \mathbb{N}$. It follows from 
Theorem \ref{thm:reg0} that $\rho \in \cap_{s \in \mathbb{N}} H^s_\alpha$, which in turn implies that %by Theorem \ref{thm:reg0}.
%obtain regularity of $2m u_i \circ \gamma$ and thus conclude that the assumptions of Theorem \ref{thm:reg0} are satisfied.
%It follows that 
\begin{align}\label{eq:mureg}
\mu = \cP [\rho] = \rho + Q [\rho] \in \left(\cap_{s \in \mathbb{N}} H^s_\alpha\right) + \left(C^\infty \cap L^\infty \right).
\end{align}
Applying standard results in integral equations \cite{colton2013integral}, we conclude that the limits
of $\hat{n}(t) \cdot \nabla S_\omega[\mu](\gamma(t) \pm s\,\hat{n}(t))$ as $s \to 0^+$ exist and are given by \eqref{eq:jumpGradS}.
Using that $S_\omega$ is continuous across $\Gamma$ with
$
    \lim_{s \to 0} S_\omega[\mu](\gamma(t) + s\,\hat{n}(t)) = \cS_\omega [\mu] (t),
$
%in \eqref{} as $x \to \Gamma$ exist.
the jump conditions \eqref{eqn:jump} then follow from the derivation of the integral equation \eqref{eq:mu}.

Let us now prove the outgoing condition \eqref{eqn:outgoing}.
Since $f \in C_c (\mathbb{R}^2)$, we %have that \sq{$u_i \in C^\infty (\mathbb{R}^2)$} 
know that $u_i$ is smooth outside a compact set with $u_i$ and all its derivatives decaying rapidly at infinity. Hence $u_i$ satisfies \eqref{eqn:outgoing}, so it remains to consider $u_s$.

We follow a similar strategy to the previous proofs. Fix $r \in \mathbb{R}$ and define 
\begin{align*}
    \ell_{\Delta, r}(t,s) := K_0 (\omega |\gamma (t) + r\hat{n}(t) - \gamma (s)|) - K_0 \left(\omega \sqrt{r^2 + (t-s)^2}\right),
\end{align*}
the difference kernel between the arbitrary and flat-interface geometries. Extending %the arguments above \eqref{eq:bdl} 
\eqref{eq:bdlj} to the $r \ne 0$ case, it follows that $\ell_{\Delta,r}$ and $\partial_t \ell_{\Delta, r}$ decay exponentially in both variables. Moreover, our assumptions \eqref{eq:beta} and \eqref{eq:gammainfty} on $\gamma$ ensure that for all $|t|$ sufficiently large, $\gamma (t) + r \hat{n} (t) \ne \gamma (s)$ for all $s \ne t$. Using the regularity \eqref{eq:mureg} of $\mu$, we conclude that
\begin{align*}
    \left| (\partial_t \pm iE) \int_{\mathbb{R}} \ell_{\Delta, r} (t,s) \mu (s) {\rm d} s \right| \le \left|\int_{\mathbb{R}} \partial_t \ell_{\Delta, r} (t,s) \mu (s) {\rm d} s \right| + |E|\left|\int_{\mathbb{R}} \ell_{\Delta, r} (t,s) \mu (s) {\rm d} s \right|
\end{align*}
goes to zero as $|t| \to \infty$. It thus suffices to prove the outgoing condition for the flat-interface $\gamma (t) = (t,0)$. In this case, %the function $u_s$ reduces to
\begin{align*}
    u_{s}(\gamma(t) + r \hat{n} (t)) = \frac{1}{2 \pi}\int_{\mathbb{R}} K_0 \left(\omega \sqrt{r^2 + (t-s)^2}\right) \mu (s) {\rm d} s
\end{align*}
is the convolution of two functions, hence 
\begin{align}\label{eq:us_flat}
    \partial_t u_{s}(\gamma(t) + r \hat{n} (t)) = \frac{1}{2 \pi}\int_{\mathbb{R}} K_0 \left(\omega \sqrt{r^2 + (t-s)^2}\right) \mu' (s) {\rm d} s
\end{align}
by the regularity of $K_0$ and $\mu = \rho + Q [\rho]$.

%\sq{Moreover, $\ell_{\Delta,r}$ has at most a finite number of logarithmic singularities corresponding to points where $\gamma (t) + r \hat{n} (t) - \gamma (s) = 0$, and is otherwise smooth. }

%We will first establish the decay of $(\partial_t - iE) \mu (t)$ as $t \to \infty$. %Recall that $\mu = \rho + Q \rho$; 
Above we verified that $\rho \in \cap_{s \in \mathbb{N}} H^s_\alpha$, while the fact that $\rho \in L^1$ implies that
\begin{equation*}
%\label{eq:Qrho}
\begin{aligned}
    (\partial_t - iE) Q [\rho] (t) &= im^2 \int_{-\infty}^\infty (\sgn (t-t') -1) e^{iE|t-t'|} \rho (t'){\rm d} t' \\
     &= -2im^2 \int_{t}^\infty e^{-iE(t-t')} \rho (t') {\rm d}t'
\end{aligned}
\end{equation*}
goes to $0$ as $t \rightarrow \infty$. A parallel calculation reveals that $(\partial_t + iE) Q [\rho] (t) \to 0$ as $t \to -\infty$. It follows from the %integrability and 
exponential decay of $K_0$ that
\begin{align}\label{eq:mu_out}
    %\int_{\mathbb{R}} \left( (-\partial_s - iE)K_0 (\omega |\gamma (t) + r\hat{n}(t) - \gamma (s)|)\right) \mu (s){\rm d}s =
    %\int_{\mathbb{R}} K_0 (\omega |\gamma (t) + r\hat{n}(t) - \gamma (s)|)(\partial_s - iE)\mu (s){\rm d}s \longrightarrow 0
    \int_{\mathbb{R}} K_0 \left(\omega \sqrt{r^2 + (t-s)^2}\right)(\partial_s \mp iE)\mu (s){\rm d}s \longrightarrow 0
\end{align}
as $t \to \pm \infty$. With \eqref{eq:us_flat} and \eqref{eq:mu_out}, we have shown that $u_s$ satisfies \eqref{eqn:outgoing}. This completes the proof.
\end{proof}

\subsection{Proof of Theorem \ref{thm:invE} \label{subsec:invE}}
We now turn our attention to our second well-posedness result, Theorem \ref{thm:invE}.
As opposed to Theorem \ref{thm:invL}, this result %the above, Theorem \ref{thm:invE} 
does not require $\cM$ to be small in any limit. Instead, we show that $-1$ belongs to the resolvent set of $\cM$ %cannot have an eigenvalue of $-1$
whenever $\Im E >0$. To this end, we state the following

\begin{lemma}\label{lemma:htrivial}
%For all $E \in \mathbb{C}$ with $\Im E \ne 0$ and , %the only solution $u$ of \eqref{eq:pdeh} is $u \equiv 0$.
%For all $E = E_R + i E_I$ with $0 < |E_R| < m$ and $0 \ne E_I \in \mathbb{R}$,
Fix $m > 0$.
For all $E \in \mathbb{C}$ with $0 < |\Re E| < m$ and $\Im E \ne 0$,
if $u \in H^1 (\mathbb{R}^2)$ solves \eqref{eq:pdeh} below, then $u \equiv 0$.
\end{lemma}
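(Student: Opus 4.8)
The plan is to establish a Green-type energy identity for $u$ and then read off the conclusion by taking imaginary parts. Recall that \eqref{eq:pdeh} is the homogeneous version of (\ref{eqn:pde}, \ref{eqn:jump}) with $m_1 = m_2 = m$ and complex $E$: writing $\omega^2 := m^2 - E^2$, it reads $-\Delta u + \omega^2 u = 0$ in $\Omega_1 \cup \Omega_2$, together with the interface conditions $[[u]] = 0$ and $[[\hat n \cdot \nabla u]] = -2m u$ on $\Gamma$. Since $u \in H^1(\mathbb{R}^2)$ and $\Delta u = \omega^2 u \in L^2_{\mathrm{loc}}$ on each side, interior elliptic regularity gives $u \in H^2_{\mathrm{loc}}(\Omega_j)$; the traces of $u$ from the two sides of $\Gamma$ lie in $H^{1/2}(\Gamma) \hookrightarrow L^2(\Gamma)$ (and coincide), and the one-sided conormal derivatives lie in $H^{-1/2}(\Gamma)$, so all the integrals below are meaningful.

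First I would pair the equation with $\bar u$ over $\Omega_1$ and $\Omega_2$ separately and apply Green's first identity (equivalently, test the weak formulation against $u$ itself). Summing the two contributions, the far-field boundary terms vanish --- either automatically because we test against $u \in H^1(\mathbb{R}^2)$, or, if one works on balls $B_R$, because $\int_1^\infty \big(\int_{\partial B_R} |u|\,|\nabla u|\,{\rm d}s\big)\,{\rm d}R \le \norm{u}_{L^2}\norm{\nabla u}_{L^2} < \infty$ forces the integrand to vanish along a subsequence $R_k \to \infty$. The interface terms combine: the outward normals of $\Omega_1$ and $\Omega_2$ along $\Gamma$ are $\hat n$ and $-\hat n$, so the total $\Gamma$-contribution is $-\int_\Gamma [[\hat n \cdot \nabla u]]\,\bar u\,{\rm d}s = 2m\int_\Gamma |u|^2\,{\rm d}s$ by the jump condition. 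This yields the identity
\begin{align*}
\int_{\mathbb{R}^2}\big(|\nabla u|^2 + \omega^2 |u|^2\big)\,{\rm d}x = 2m\int_\Gamma |u|^2\,{\rm d}s.
\end{align*}

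To finish, I would take the imaginary part. Both $\int_{\mathbb{R}^2} |\nabla u|^2$ and $\int_\Gamma |u|^2$ are real, so $\Im(\omega^2)\,\norm{u}_{L^2(\mathbb{R}^2)}^2 = 0$. Since $\omega^2 = m^2 - E^2$ we have $\Im(\omega^2) = -2\,\Re E\,\Im E$, which is nonzero precisely because $0 < |\Re E|$ forces $\Re E \ne 0$ while $\Im E \ne 0$ by hypothesis --- this is exactly where the assumptions on $E$ enter (and why the purely imaginary case, corresponding to genuine guided modes, must be excluded). Hence $\norm{u}_{L^2(\mathbb{R}^2)} = 0$, i.e.\ $u \equiv 0$. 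The only genuinely delicate point is the integration by parts across the rough, unbounded interface $\Gamma$ for $u$ merely of class $H^1$; this is where I would spend the most care, using the $H^{1/2}/H^{-1/2}$ trace duality on $\Gamma$ together with the asymptotic flatness of $\Gamma$ to control the trace at infinity, but it is otherwise routine.
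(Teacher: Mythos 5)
Your proof is correct and follows essentially the same route as the paper's: integrate by parts on each $\Omega_j$, combine the interface terms via the jump condition $[[\hat n\cdot\nabla u]] = -2mu$ to obtain $\int_{\mathbb{R}^2}(|\nabla u|^2+\omega^2|u|^2)\,{\rm d}x = 2m\int_\Gamma|u|^2\,{\rm d}s$, and take imaginary parts, using that $\Im(\omega^2)=-2\,\Re E\,\Im E\neq 0$ under the stated hypotheses. You give somewhat more attention than the paper to justifying the far-field and trace manipulations, which is a fine addition but does not change the argument.
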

\begin{proof}
Integrating by parts, we see that
\begin{align*}
    \int_{\Omega_j} |\nabla u|^2 {\rm d} x = -\omega^2 \int_{\Omega_j} |u|^2 {\rm d} x + \eta_j\int_\Gamma u \hat{n} \cdot \nabla u {\rm d}t,
\end{align*}
where $\eta_1 = 1$ and $\eta_2 = -1$.
Taking the sum over $j$ and using the second line of \eqref{eq:pdeh}, we obtain
\begin{align*}
    \int_{\mathbb{R}^2}|\nabla u|^2 {\rm d} x + \omega^2 \int_{\mathbb{R}^2} |u|^2 {\rm d} x = 2m\int_\Gamma |u|^2{\rm d}t
\end{align*}
Taking the imaginary part of both sides, we %are left with
%$-2 \Re E \Im E \omega^2 \int_{\mathbb{R}^2} |u|^2 {\rm d} x = 0,$
conclude that $\int_{\mathbb{R}^2} |u|^2 {\rm d} x = 0$
and the result is complete.
\end{proof}

Using the above lemma, we now prove the following. %second well-posedness result from Section \ref{sec:bie}.

\begin{theorem:invE}
    Fix $m>0$.
For any $0 < \eps_1, \eps_2 < m$ and
$0<\alpha<\astar\left(\sqrt{2\eps_2 m - \eps_2^2}\right)$, %sufficiently small (depending on $\eps$),
the integral equation \eqref{eq:bif} admits a unique solution $\rho \in L^2_\alpha$
for all but a finite number of $E \in [-m+\eps_2, -\eps_1] \cup [\eps_1, m-\eps_2]$.
\end{theorem:invE}
\begin{proof}%[Proof of Theorem \ref{thm:invE}]
%Let $0<\alpha<\astar\left(\sqrt{2\eps_2 m - \eps_2^2}\right)$, 
Fix $\eps := (\eps_1, \eps_2) \in (0, m)^2$ and define $$Z_\eps := [-m+\eps_2, -\eps_1] \cup [\eps_1, m-\eps_2].$$
Lemma \ref{thm:MHS} implies that for any $E \in Z_\eps$, the operator $\cM$ is Hilbert-Schmidt on $L^2_\alpha$ provided $0 < \alpha < \astar (\omega)$, where $\omega := \sqrt{m^2 - E^2}$. Since $\astar$ is monotonically increasing on the positive reals, its minimum value over $E \in Z_\eps$ is $\sqrt{m^2 - (m-\eps_2)^2} = \sqrt{2\eps_2 m - \eps_2^2}$.
It follows that for any $0 < \alpha < \astar \left(\sqrt{2\eps_2 m - \eps_2^2}\right)$, $\cM$ is Hilbert-Schmidt on $L^2_\alpha$ for all $E \in Z_\eps$. The above arguments easily extend to complex-valued $E$ in a sufficiently small neighborhood of the real axis. Therefore, $\cM: L^2_\alpha \rightarrow L^2_\alpha$ is Hilbert-Schmidt
for all $E \in Z_{\eps} \times (-\delta, \delta) \subset \mathbb{C}$ provided $\delta>0$ is sufficiently small.

Fix $E \in Z_{\eps} \times (0, \delta)$. %such that $\Im E > 0$.
By contradiction suppose $\cM$ has an eigenvalue of $-1$.
Then there exists $0 \ne \rho \in L^2_\alpha$ such that $(1+\cM) [\rho] = 0$. Letting $\mu := \cP [\rho]$, it follows from Theorem \ref{thm:outgoing} (which easily extends to the $\Im E > 0$ case) that $u = S_\omega [\mu]$ solves %\eqref{eqn:pde} with $f_2 = f_1 \equiv 0$ and $\omega_1 = \omega_2 = \sqrt{m^2 - E^2}$
the homogeneous problem
\begin{align}\label{eq:pdeh}
    \begin{cases}
    \Delta u(x) - \omega^2 u(x) = 0,\quad x\in \mathbb{R}^2 \setminus \Gamma,\\
    \left[[\hat{n}\cdot \nabla u \right]](\gamma(t)) = -2m u(\gamma(t)), \quad t \in \mathbb{R},\\
        \left[[u\right]](\gamma(t)) = 0, \quad t \in \mathbb{R},
    \end{cases}
\end{align}
where $\omega^2 = m^2 - E^2$.
Since $E$ has positive imaginary part,
$Q$ is bounded from $L^2_\alpha$ to itself, meaning that $\mu \in L^2_\alpha$.
It follows that $u$ and $\nabla u$ decay rapidly at infinity; for example, $u \in H^1 (\mathbb{R}^2)$. 
Lemma \ref{lemma:htrivial} then implies that $S_\omega [\mu] \equiv 0$;
%\sq{Applying Theorem \ref{thm:reg0} (which still holds when $\Im E > 0$) as in the proof of Theorem \ref{thm:outgoing}, we see that $\mu \in C^\infty$}
hence $\mu \equiv 0$ by \eqref{eq:mureg} and \eqref{eq:jumpGradS}. (The arguments used to obtain \eqref{eq:mureg} when $E \in \mathbb{R}$ easily extend to the case where $\Im E > 0$.)
It follows from \eqref{eq:fp} that
$0 = \tilde{\mu}(\xi) = (1-\frac{2im^2}{\xi^2-E^2}) \tilde{\rho} (\xi)$,
from which we conclude that $\rho \equiv 0$, a contradiction. %Indeed, if $\alpha$ is chosen sufficiently small, then $1$ is not in the image of the map $\mathbb{R} \ni \xi \mapsto \frac{2im^2}{\xi^2-E^2}$.
We have thus shown that $-1$ is not an eigenvalue of $\cM$ whenever $E \in Z_{\eps} \times (-\delta, \delta)$ has positive imaginary part. Since $\cM$ depends holomorphically on $E$,
this means we can again apply
\cite[Theorem VII.1.9]{kato2013perturbation} to complete the result.
\end{proof}

\begin{remark}
It is crucial that $Z_{\eps}$ be bounded away from $0$ as $Q$ is not holomorphic there. Thus we cannot guarantee that the number of ``bad'' $E$-values near $0$ is finite. As suggested by Remark \ref{rem:E0}, one could redefine $Q$ by
$Q [\rho] (t) = \frac{i}{2E}\int_{-\infty}^\infty (e^{iE|t-t'|} - e^{-iE|t-t'|}) \rho(t') {\rm d}t'$ to make it holomorphic, but then the outgoing conditions \eqref{eqn:outgoing} would not be satisfied.
\end{remark}

\subsection{Proof of Theorem \ref{thm:outgoing2} \label{subsec:twomasses}}
We conclude this section by proving that smooth solutions to the integral equation corresponding to two masses, satisfies the PDE (\ref{eqn:pde2}, \ref{eqn:jump2}, \ref{eqn:outgoing2}, \ref{eqn:decayu2}).
\begin{thm:outgoing2}
Suppose $f_j \in C_c (\Omega_j)$ for $j=1,2$ and let $\alpha > 0$.
Suppose $\sigma \in \cap_{s \in \mathbb{N}} (H_\alpha^s)^2$ is a solution of \eqref{eq:bif2}, and set
\begin{align*}
    \begin{pmatrix}
    \mu \\ \rho
    \end{pmatrix} := \cP_2 [\sigma], \qquad 
    u_s(x) = \begin{cases}
    D_{\omega_2}[\rho](x)+S_{\omega_2}[\mu](x), \quad & x \in \Omega_2,\\
     D_{\omega_1}[\rho](x)+S_{\omega_1}[\mu](x), \quad & x \in \Omega_1.\\
\end{cases} 
\end{align*}
%Set $u = u_i + u_s$, where $u_i$ and $u_s$ are given by \eqref{eq:ui} and \eqref{eq:ansatz} respectively.
%, and $(\rho, \mu)$ by \eqref{eq:rhomu}. 
Then $u:=u_{i} + u_{s}$ satisfies \eqref{eqn:pde2}, \eqref{eqn:jump2}, \eqref{eqn:outgoing2} and \eqref{eqn:decayu2}.
\end{thm:outgoing2}

\begin{proof}%[Proof of Theorem \ref{thm:outgoing2}]
As before, \eqref{eqn:decayu2} and \eqref{eqn:pde2} follow immediately from dominated Lebesgue and the definition \eqref{eq:ui} of $u_i$ in terms of the free-space Green's function. %$, while 
The derivation of (\ref{eqn:BIE_1}, \ref{eqn:BIE_2}) and assumed regularity of $\sigma$ imply \eqref{eqn:jump2}. %To prove \eqref{eqn:outgoing}, define $v$ by \eqref{eq:vnull}.
%We see 
We now prove the outgoing condition, %\eqref{eqn:outgoing}, 
namely that
\begin{align}\label{eqn:outgoing_pf}
    \lim_{t \rightarrow \pm \infty} (\pm\partial_t - iE) u (\gamma (t) + r \hat{n} (t))&=0
\end{align}
for all $r \in \mathbb{R}$.
Since $u_i$ and $u_s$ are no longer continuous across $\Gamma$, we handle the $r \ne 0$ and $r=0$ cases separately. %By 

Let us first assume that $r \ne 0$.
Recalling the definition of $(\mu,\rho)$ in \eqref{eq:mrus},
%Observe that $v \cdot (\mu,\rho) = (1+Q_2) (v \cdot (\sigma_1, \sigma_2))$ and
%$w \cdot (\mu, \rho) = w \cdot (\sigma_1, \sigma_2)$ for any $w$ orthogonal to $v$.
%It follows that for $\nu \in \{\mu, \rho\}$, 
%we have 
the same arguments from Theorem \ref{thm:outgoing} imply that $\mu = \mu_0 + \mu_1$ and $\rho = \rho_0 + \rho_1$, where 
%\sq{
%\begin{align}
%    \mu_0, \rho_0 \in \cap_{s \in \mathbb{N}} H_\alpha^s, \quad
%    \mu_1, \rho_1 \in C^\infty \cap L^\infty, \quad
%    \lim_{t \pm \infty} (\partial_t \mp iE) \mu_1 (t) = 0, \quad
%    \lim_{t \pm \infty} (\partial_t \mp iE) \rho_1 (t) = 0
%\end{align}
%}
$\mu_0, \rho_0 \in \cap_{s \in \mathbb{N}} H_\alpha^s$ and $\mu_1, \rho_1 \in C^\infty \cap L^\infty$ satisfy $\lim_{t \pm \infty} (\partial_t \mp iE) \mu_1 (t) = 0$ and $\lim_{t \pm \infty} (\partial_t \mp iE) \rho_1 (t) = 0$.
Hence (following the proof of Theorem \ref{thm:outgoing}) $u_i + S_{\omega_j} [\mu]$ satisfies \eqref{eqn:outgoing_pf} for $j \in \{1,2\}$. It remains to analyze $D_{\omega_j} [\rho]$.

We see that $D_{\omega_j} [\rho] (\gamma (t) + r \hat{n} (t)) =\frac{\omega_j}{2\pi} \int_\mathbb{R} k_{j,r} (t,s) \rho (s) {\rm d} s$, where
%\begin{align*}
%    D_{\omega_j} [\rho] (\gamma (t) + r \hat{n} (t)) =
%    \frac{\omega_j}{2\pi} \int_{\mathbb{R}} \hat{n} (s) \cdot
%    \frac{\gamma (t) + r \hat{n} (t) - \gamma (s)}{|\gamma (t) + r \hat{n} (t) - \gamma (s)|} K_1 (\omega_j |\gamma (t) + r \hat{n} (t) - \gamma (s)|) \rho (s) {\rm d} s
%\end{align*}
\begin{align*}
    k_{j,r} (t,s) :=
    \hat{n} (s) \cdot
    \frac{\gamma (t) + r \hat{n} (t) - \gamma (s)}{|\gamma (t) + r \hat{n} (t) - \gamma (s)|} K_1 (\omega_j |\gamma (t) + r \hat{n} (t) - \gamma (s)|).
\end{align*}
Define the reduction of $k_{j,r}$ to the flat interface by 
\begin{align*}
    k_{j,r}^0 (t,s) :=
    \frac{r}{\sqrt{r^2 + (t-s)^2}} K_1 \left( \omega_j \sqrt{r^2 + (t-s)^2}\right)
\end{align*}
and set $\ell_{\Delta, j,r} := k_{j,r} - k^0_{j,r}$. Since $r \ne 0$, the assumptions \eqref{eq:beta} and \eqref{eq:gammainfty} on $\gamma$ ensure that the restriction of $\ell_{\Delta, j, r} (t,s)$ to $|t|$ sufficiently large is smooth in both $t$ and $s$. As before, the exponential decay of $K_1$ %and the fact \eqref{eq:c} that $|\gamma (t) - \gamma (s)| \ge c |t-s|$ 
implies that both $k_{j,r} (t,s)$ and $k^0_{j,r} (t,s)$ decay exponentially in $|t-s|$, while our assumption \eqref{eq:beta} that $\gamma$ is approximately flat at infinity %implies that for fixed $|t-s|$, $\ell_{\Delta, j, r} (t,s)$ decays exponentially in $|t+s|$.
gives us exponential decay of $\ell_{\Delta, j ,r} (t,s)$ in $\min\{|t|,|s|\}$ when $t$ and $s$ have the same sign. The same decay properties apply to derivatives of $\ell_{\Delta, j ,r}$. We thus conclude that $\ell_{\Delta, j,r}$ and $\partial_t \ell_{\Delta, j, r}$ decay exponentially in both variables, meaning that
\begin{align*}
    \left| (\partial_t \pm iE) \int_{\mathbb{R}} \ell_{\Delta, j,r} (t,s) \rho (s) {\rm d} s \right| \le \left|\int_{\mathbb{R}} \partial_t \ell_{\Delta,j, r} (t,s) \rho (s) {\rm d} s \right| + \left| E\int_{\mathbb{R}} \ell_{\Delta,j, r} (t,s) \rho (s) {\rm d} s \right|
\end{align*}
goes to zero as $|t| \to \infty$.

To complete the proof for $r \ne 0$, we must therefore show that %$v(t) := \int_{\mathbb{R}} k^0_{j,r} (t,s) \rho (s) {\rm d} s$ satisfies \eqref{eqn:outgoing_pf}. Since $v$ is the convolution of two sufficiently nice functions, we have $v'(t) = \int_{\mathbb{R}} k^0_{j,r} (t,s) \rho' (s) {\rm d} s$ and hence
$D_{\omega_j} [\rho]$ satisfies \eqref{eqn:outgoing_pf} when the interface is flat. In this case, $D_{\omega_j} [\rho] (\gamma (t) + r \hat{n} (t))$ is the convolution of the function $\frac{\omega_j}{2\pi} k^0_{j,r} (\cdot, 0)$ with $\rho$, and thus
\begin{align}\label{eq:D}
    (\partial_t \mp iE) D_{\omega_j} [\rho] (\gamma (t) + r \hat{n} (t)) =\frac{\omega_j}{2\pi} \int_{\mathbb{R}} k^0_{j,r} (t,s) (\rho' (s) \mp iE \rho (s)) {\rm d} s. 
\end{align}
By the decay properties of $\rho = \rho_0 + \rho_1$ and $k^0_{j,r}$ established above, we conclude that \eqref{eq:D} indeed goes to zero as $|t| \to \infty$.

We now verify \eqref{eqn:outgoing_pf} 
%the outgoing condition
when $r=0$. By the regularity of $\mu$ and $\rho$, we can apply \eqref{eq:u_Gamma} which reads
\begin{align*}
    u(\gamma(t)) = \frac{1}{2} \left[\mD_{\omega_2}[\rho](t)+\mD_{\omega_1}[\rho](t)\right]+\frac{1}{2}\Bigg[\mS_{\omega_2}[\mu](t)+\mS_{\omega_1}[\mu](t) \Bigg] + u_i(\gamma (t)),
\end{align*}
where $u_i (\gamma(t)) := \lim_{r \downarrow 0}  \{ u_i (\gamma (t) + r \hat{n} (t)) + u_i (\gamma (t) - r \hat{n} (t)) \}$ and we recall that %of $\mD$ and $\mS$ below 
\begin{align*}
    \cS_\omega [\mu] (t)&:= \frac{1}{2\pi} \int_\mathbb{R} K_0 (\omega |\gamma (t)-\gamma(t')|) \,\mu(t')\, {\rm d}t'
\end{align*}
while
$
    \mD_{\omega} [\rho] (t) := \frac{\omega}{2\pi} \int_\mathbb{R} k_{\omega} (t,s) \,\rho(s)\, {\rm d}s
$ with
\begin{align*}
    k_\omega (t,s) := \hat{n} (s) \cdot
    \frac{\gamma (t) - \gamma (s)}{|\gamma (t) - \gamma (s)|} K_1 (\omega |\gamma (t) - \gamma (s)|).
\end{align*}
The proof of Theorem \ref{thm:outgoing} verifies \eqref{eqn:outgoing_pf} for the $\mS_{\omega_j}$ and $u_i$ so that
\begin{align*}
    \lim_{t \to \pm \infty} (\pm\partial_t - iE) \Bigg( \frac{1}{2}\Big[\mS_{\omega_2}[\mu](t)+\mS_{\omega_1}[\mu](t) \Big] + u_i(\gamma (t)) \Bigg) = 0.
\end{align*}
It remains to show that the $\mD_{\omega_j}$ satisfy the outgoing condition.
To this end, we %Taylor expand $\gamma (t) - \gamma (s)$ about $t=s$ 
use the fact that $\hat{n} (s) \cdot \gamma'(s) = 0$ to obtain
\begin{align*}
    \left | \hat{n} (s) \cdot (\gamma (t) - \gamma (s)) \right | \le C |t-s|^2 \norm{\gamma''}_{L^\infty [\min\{s,t\},\max\{s,t\}]}
\end{align*}
uniformly in $t$ and $s$. It follows from familiar arguments that the $k_{\omega_j}$ are continuous and exponentially decaying in both variables. We similarly verify by a direct calculation that the same is true for the $\partial_t k_{\omega_j}$. As before, we use the fact that $\rho = \rho_0 + \rho_1$ with $\rho_0 \in \cap_{s \in \mathbb{N}} H^s_\alpha$ and $\rho_1 \in L^\infty$ to conclude that
\begin{align*}
    \left| (\partial_t \pm iE) \int_{\mathbb{R}} k_{\omega_j} (t,s) \rho (s) {\rm d} s \right| \le \left|\int_{\mathbb{R}} \partial_t k_{\omega_j} (t,s) \rho (s) {\rm d} s \right| + \left| E\int_{\mathbb{R}} k_{\omega_j} (t,s) \rho (s) {\rm d} s \right|
\end{align*}
goes to zero as $|t| \to \infty$. This completes the result.
\end{proof}

\section{Numerical apparatus}\label{sec:num}
In this section we describe an algorithm for solving equations (\ref{eqn:pde}, \ref{eqn:jump}, \ref{eqn:outgoing}, \ref{eqn:decayu}) via the boundary integral equations \eqref{eq:bif} and \eqref{eq:ubif}. %For notational convenience we present only the case in which $m_1=m_2=:m.$ The more general case can be solved in a similar way. 
In subsection \ref{subsec:disc} we briefly discuss the details of the discretization used. Following this, in subsection \ref{subsec:accel} we describe several accelerations that were made to improve the computational efficiency.

\subsection{The discretization}\label{subsec:disc}
In this paper, we use the boundary integral equation package \texttt{chunkie} \cite{chunkie} to discretize the interface and construct the entries of the discrete approximation to the operators $\cL$ and $\cP$ given by \eqref{eq:LP}. %and $\cL_2$ and $\cP_2$ defined in \eqref{eq:L2} and \eqref{eq:P2}.
%To simplify the presentation of our method, we assume that $m_1 = m_2 =: m$ throughout this section.
%This means the relevant integral equation 
Recall that the integral equation we wish to solve is $\cL \cP [\rho] = 2mu_i$.

The interface is truncated to $[a,b]$ in parameter space, i.e. we restrict $\gamma(t): [a,b]$ to its image. This truncated curve is then adaptively split into ``chunks'' where each chunk is the image of a subinterval of $[a,b]$,  and is discretized using $16$ Gauss-Legendre points. Chunks are refined until the tails of the Legendre coefficients of the speed of parameterization, and of the $x,y$ coordinates of the curve are resolved. In particular, suppose that $x^{(j)}_{n}, y^{(j)}_{n}, s^{(j)}_{n}$ are the the Legendre coefficients computed using $32$ nodes on a chunk $[a_{j},b_{j}]$, i.e.
\begin{equation}
\begin{bmatrix}
x(t) \\
y(t) \\
s(t) 
\end{bmatrix}
= \sum_{n=1}^{32} \begin{bmatrix} x^{(j)}_{n} \\ y^{(j)}_{n} \\ s^{(j)}_{n} \end{bmatrix} P_{n} \left(a_{j} + \frac{(t+1)}{2} (b_{j}-a_{j}) \right) \, ,
\end{equation}
where $s(t) = |\gamma'(t)|$, and $P_{n}(t)$ is the Legendre polynomial of degree $n$ on $[-1,1]$. Then the chunk $[a_{j},b_{j}]$ is resolved if
\begin{equation}
\max{\left(\sqrt{\frac{\sum_{n=17}^{32} |x_{n}|^2}{16}}, \sqrt{\frac{\sum_{n=17}^{32} |y_{n}|^2}{16}}, \sqrt{\frac{\sum_{n=17}^{32} |s_{n}|^2}{16}} \right)} \leq \varepsilon \, ,
\end{equation}
for a specified tolerance $\varepsilon$. If the chunk is not resolved, then it is split into two chunks of equal length in parameter space $[a_{j}, (a_{j}+b_{j})/2]$, and $[(a_{j}+b_{j})/2,b_{j}]$. Once all of the chunks are resolved, they are subsequently balanced so that adjacent chunks satisfy a $2:1$ length restriction: if $\gamma_{j}$ and $\gamma_{\ell}$ are adjacent to each other then they satisfy $|\gamma_{j}|/|\gamma_{\ell}|  \in [0.5,2]$. At the end of the adaptive procedure, the restriction of $\gamma$ to $[a,b]$ is represented via a collection of $n_{c}$ chunks, $[a,b] = \cup_{j=1}^{n_{c}} [a_{j},a_{j+1}]$ with the understanding that $a_{1} = a$, and $a_{n_{c}+1} = b$.

\begin{remark}
In the proofs, we have assumed $\gamma' \equiv 1$ for convenience. 
The proofs can be suitably modified as long as $||\gamma '| - 1| \le C e^{-\alpha |t|}$ for some positive constants $C$ and $\alpha$. Relaxing this restriction provides greater flexibility for parameterizing complicated curves.
\end{remark}

Recall that the solution $\rho \in L^{2}_{\alpha}(\mathbb{R})$, and hence there exists an $M$ such that 
\begin{equation}
\sqrt{\frac{\int_{[-\infty, -M] \cup [M,\infty]} |\rho|^2 \,  ds}{\int_{-\infty}^{\infty} |\rho|^2 \,  ds}} \leq \varepsilon \, .
\end{equation}
This estimate justifies the existence of a truncation $[a,b]$ such that the solution can be accurately represented via its restriction to some bounded interval of $\mathbb{R}$.
In all the examples, the interval $[a,b]$ is taken to be the smallest interval satisfying the following two criteria: a) that the boundary is nearly flat outside of $[a,b]$, i.e. there exists a constant vector $c \in \mathbb{R}^{2}$, such that 
$|\gamma'(t) - c| \leq \varepsilon$ for all $c \in \mathbb{R}^{2} \setminus [a,b]$, and b) the boundary data $u_{i}$ is numerically supported on $[a,b]$ to precision $\varepsilon$, i.e. $|u_{i}|_{L^{2}(\mathbb{R} \setminus [a,b])}/|u_{i}|_{L^2(\mathbb{R})} \leq \varepsilon$.

Given these restrictions, the composition $\cL \cP$ is discretized as an intergral operator on $L^{2}$ functions defined on the interval $[a,b]$. Even though $\rho$ will be numerically supported on $[a,b]$, the operator $\cP$ maps compactly supported functions to an oscillatory function that is $O(1)$ on the whole real line. On the other hand, the kernel in the integral operator $\cL$ decays exponentially as $\exp{(-\omega d)}$, where $d$ denotes the Euclidean distance between points on the interface. 
Thus, in order to compute the solution $\rho$ accurately, one needs to discretize the integral operator $\cP$ from $L^2$ functions on $[a,b]$ to $L^2$ functions on $[a',b']$, and the integral operator $\cL$ from $L^2$ functions on $[a',b']$ to $L^2$ functions $[a,b]$, where $a' = a - \log{(1/\varepsilon)}/\omega$, and $b' = b+ \log{(1/\varepsilon)}/\omega$.
The necessity and sufficiency of this choice of the buffer region is illustrated through the results in Section~\ref{subsec:verification}.  

We now turn our attention to the discretization of the integral operators $\cL$ and $\cP$. Suppose that there are an additional $n_{\textrm{buffer}}$ points introduced in each of the buffer regions $[a',a]$ and $[b,b']$.  Suppose that $t_{j} \in [a',b']$, $j=1,2\ldots n_{\textrm{over}} = 16 n_{c} + 2n_{\textrm{buffer}} $ are the discretized values of $t$ in parameter space.
Let $n_{0} = n_{\textrm{buffer}} + 16 n_{c} $.  Suppose that the points are ordered in increasing values of $t$. 
Then the points $j=1,2\ldots \nbuffer$ correspond to the left buffer region $[a',a]$, the points $j=\nbuffer+1,\ldots n_{0}$ correspond to the interval $[a,b]$, and the points $j=n_{0}+1,\ldots n_{\textrm{over}}$ correspond to the right buffer region $[b,b']$.
Note that the density $\rho$ is discretized through its values at $\rho(t_{j})$, for $j=n_{\textrm{buffer}}+1,\ldots n_{0}$, and hence the discretized linear system corresponding to $\cL \cP$ will be of size $n_{0} - \nbuffer= 16 n_{c}$.

Both the kernels $\cL$ and $\cP$ have non-smooth and at most weakly singular kernels for small distances and require specialized quadrature rules for integrating them. We use the generalized Gaussian quadrature rules of~\cite{bremer2010universal,bremer2010nonlinear} for the accurate computation of these integrals. In particular, the quadrature rule is a target dependent locally-corrected quadrature rule which accurately integrates the specific non-smooth behavior of the kernel in the vicinity of the origin. 
For every point $t_{j} \in [a_{i}, a_{i+1}]$, there exist weights $w_{j,\ell}$ for all $t_{\ell} \in [a_{i-1}, a_{i+1}]$ such that the discretized versions of $\cP$ and $\cL$, denoted by $\mtx{P}$ and $\mtx{L}$ respectively, are given by

\begin{equation}
\begin{aligned}
\mtx{P}[\rho](t_{j}) &= \frac{m^2}{E}\sum_{\ell=\nbuffer+1}^{n_{0}}  (\delta_{j,\ell} + e^{iE|t_{j}-t_{\ell}|}  w_{\ell}) \rho(t_{\ell}) \, \, \\ 
&\qquad + 
\sum_{\substack{\ell= \nbuffer+1 \\ t_{\ell} \in [a_{i-1},a_{i+1}]}}^{n_{0}}  \rho(t_{\ell}) w_{j,\ell} \, , \quad j=1,2,\ldots n_{\textrm{over}} \\
\mtx{L}[\mu](t_{j}) &= \mu(t_{j}) - 2m\sum_{\substack{\ell=1 \\ \ell \neq j}}^{n_{\textrm{over}}} G_{\omega}(\gamma(t_{j}),\gamma(t_{\ell}))   \mu(t_{\ell}) w_{\ell}\,\, \\
&\qquad+ \sum_{\substack{\ell=1 \\ t_{\ell} \in [a_{i-1},a_{i+1}]}}^{\nover}  \mu(t_{\ell}) w_{j,\ell} \, , \quad j=\nbuffer+1,\nbuffer+2,\ldots n_{0}  \, .
\end{aligned}
\end{equation}

Here $\delta_{j,\ell}$ is the Kronecker delta, i.e. $\delta_{j,j} = 1$ and $\delta_{j,\ell} = 0$ otherwise, 
and $w_{j}$, $j=1,2\ldots \nover$, denote the quadrature weights for integrating smooth functions on the interface.  The factorized linear systems $\mtx{L}$ and $\mtx{P}$ are illustrated in Figure~\ref{fig:linalg}.

\begin{remark}
    %As suggested by \eqrer{eq:},
    Recall that $\cL$ may have a continuous spectrum passing through zero while $\cP$ is not bounded; see \eqref{eq:symbolKP}. Thus it is expected (and observed numerically) that in general, $\mtx{L}$ and $\mtx{P}$ both have very large condition numbers. %. 
    This could in principle lead
    %potentially leading 
    to catastrophic cancellation arising from the numerical implementation of the operator product $\mtx{L} \mtx{P}$. We postpone a thorough treatment of this potential issue to future study. Practically speaking, the accuracy of our numerical method does %highly accurate 
    %is not affected by 
    not seem to suffer from the poor condition numbers of $\mtx{L}$ and $\mtx{P}$. Indeed, as illustrated by Section \ref{subsec:verification}, we observe low errors for a variety of interfaces and wide range of parameters.
    %, though . This issue could in principle be treated more carefully, though %we do not address it further here. Practically speaking, 
    %our numerical method is highly accurate
\end{remark}

\begin{figure}[h!]
    \centering
    \includegraphics [scale=.22]{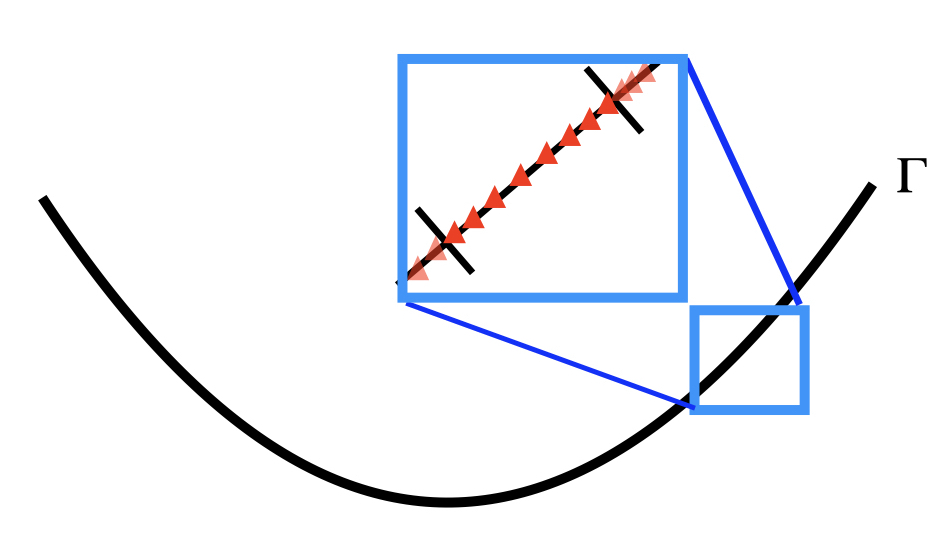}
    \caption{Schematic of the discretization approach used by \texttt{chunkie}. In the inlay, bounds between `chunks' are shown with vertical lines, and discretization nodes are denoted by red triangles. For clarity, the `panel' shown is $8^{\rm th}$ rather than $16^{\rm th}$.}
    \label{fig:chunkie}
\end{figure}

\begin{figure}[h!]
    \centering
    \includegraphics [scale=.22]{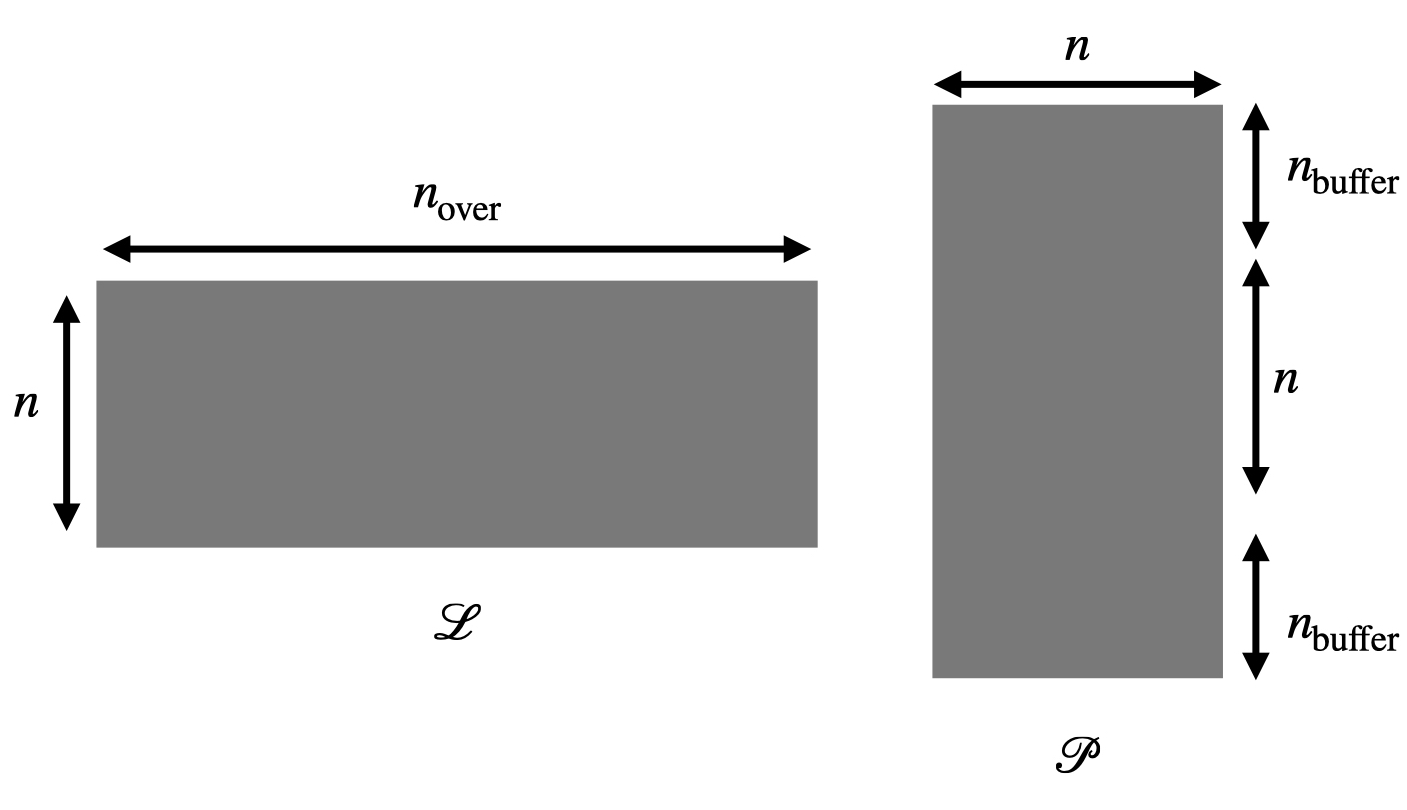}
    \caption{The factorized linear system, after discretization.}
    \label{fig:linalg}
\end{figure}
\subsection{Acceleration of the numerical method}\label{subsec:accel}
In this section, we discuss a fast algorithm for the evaluation of the matrix vector product $\mtx{L} \mtx{P}[\rho]$. The kernel of $\mtx{P}$ is the Green's function of a one-dimensional translation-invariant elliptic ordinary differential equation and hence can be accelerated using a sweeping algorithm. In order to apply $\mtx{P}$ rapidly, we require a fast algorithm for the evaluation of 
\begin{equation}
\mtx{P}[\rho](t_{j})= \frac{m^2}{E}\sum_{\ell=\nbuffer+1}^{n_{0}}  e^{iE|t_{j}-t_{\ell}|}  \rho(t_{\ell}) w_{\ell} \, ,  \quad j=1,2,\ldots \nover \, ,
\end{equation}
since the rest of the interaction is sparse and can be computed in $O(\nover)$ CPU time. Sums of this kind are frequently encountered in the application of integral equation based methods to elliptic PDEs, and can be efficiently computed using standard \emph{sweeping/sum-of-exponential} type algorithms (see \cite{GIMBUTAS2020815,jiang2014,jiang2021}). For completeness, we briefly summarize the approach as it applies in this context.
The main idea of the sweeping algorithm is to split the solution into two parts for any point $t_{j}$, $t\leq t_{j}$, and $t>t_{j}$, where both pieces can be updated in $O(1)$ operations as we move from $t_{j} \to t_{j+1}$ or $t_{j} \to t_{j-1}$. 
Let $v^{\uparrow}$ and $v^{\downarrow}$ denote the accumulation of the rightward moving solution (corresponding to $t\leq t_{j}$) and the leftward moving solution(corresponding to $t>t_{j}$) respectively.

In particular, we split the solution as follows,
\begin{equation}
\begin{aligned}
P[\rho](t_{j}) &= \frac{m^2}{E}\sum_{\substack{\ell=\nbuffer+1\\ \ell \leq j}}^{n_{0}} e^{iE (t_{j} - t_{\ell})} \rho(t_{\ell}) w_{\ell} + \frac{m^2}{E}\sum_{\substack{\ell=\nbuffer+1\\ \ell > j}}^{n_{0}} e^{iE (t_{\ell} - t_{j})} \rho(t_{\ell}) w_{\ell} \, \\
&= v^{\uparrow}_{j} + v^{\downarrow}_{j} \, .
\end{aligned}
\end{equation}
A simple calculation shows that $v^{\uparrow}$ and $v^{\downarrow}$ satisfy the following recurrence relations
\begin{equation}
\begin{aligned}
v^{\uparrow}_{j} &= v^{\uparrow}_{j-1} e^{iE(t_{j} - t_{j-1})}  + \frac{m^2}{E} \rho(t_{j}) w_{j} I_{j\in[\nbuffer+1,n_{0}]} \, , \\
v^{\downarrow}_{j} &= e^{iE(t_{j+1}-t_{j})} \left(v^{\downarrow}_{j+1}  + \frac{m^2}{E} \rho(t_{j+1}) w_{j+1} I_{j+1\in[\nbuffer+1,n_{0}]} \right)  \, ,
\end{aligned}
\end{equation}
where $I_{j\in A}$ is the indicator function of the set $A$, which is equal to $1$ if $j \in A$, and $0$ otherwise. 
Thus, $v^{\uparrow}$ satisfies an upward recurrence in $j$, while $v^{\downarrow}$ satisfies a downward recurrence, and both $v^{\uparrow}$ and $v^{\downarrow}$ can be computed for all $j$ in $O(\nover)$ work. The recurrences are initialized with $v^{\uparrow}_{1} = 0$, and $v^{\downarrow}_{\nover} = 0$.

On the other hand, the kernel of $\mtx{L}$ is the Green's function of the two dimensional Helmholtz equation with imaginary wave number $\omega$ and the bulk of the computation is given by
\begin{equation}
\label{eq:lsum}
 - 2m\sum_{\substack{\ell=1 \\ \ell \neq j}}^{n_{\textrm{over}}} G_{\omega}(\gamma(t_{j}),\gamma(t_{\ell}))   \mu(t_{\ell}) w_{\ell} \, , \quad j=\nbuffer+1, \, \ldots n_{0} \, .
\end{equation}
The above sum can be computed at all $t_{j}$, $j=\nbuffer+1, \ldots n_{0}$ in $O(\nover)$ CPU time using the standard fast multipole method~\cite{rokhlin1990rapid,greengard1987fast}. We use the fast multipole implementation in \texttt{fmm2d}~\cite{fmm2d} for evaluating the sum in~\eqref{eq:lsum}. The rest of the computation in $\mtx{L}$ is sparse whose number of nonzero elements is also $O(\nover)$.

Combining both of these fast algorithms, the matrix vector product $\mtx{L} \mtx{P}[\rho]$ can be applied in $O(\nover)$ CPU time. Thus, the solution $\rho$ can be obtained in $O(\nover \niter)$ CPU time using iterative methods like the generalized minimum residual (GMRES) method, where $\niter$ is the number of GMRES iterations required for the relative residual to drop below a prescribed tolerance. In practice, the integral equation tends to be well-conditioned which results in $\niter = O(1)$, and thus the computational complexity of obtaining the solution $\rho$ is $O(\nover)$.

\begin{remark}
The key components of the fast numerical solver include an adaptive piece-wise high-order discretization of the boundary using chunkie~\cite{chunkie}, a sweeping algorithm~\cite{GIMBUTAS2020815,jiang2014,jiang2021} for applying $\mtx{P}$ a discretization of the preconditioner $\cP$, and a Yukawa fast multipole~\cite{fmm2d} for accelerating the computation of $\mtx{L}$, a discretization of the integral operator $\cL$. While all the components of the numerical solver are existing discretization and acceleration tools, to the best of our knowledge, their use in the efficient and high-order accurate numerical solution of the boundary value problem~\ref{eq:pde} is the first of its kind.
\end{remark}

\section{Numerical illustrations and examples}\label{sec:illustrations}
In this section, we %provide numerical examples of the integral formulation described above.
%describe our numerical method for solving the above integral formulation, and 
provide several examples of the numerical method described in Section \ref{sec:num}. We demonstrate accuracy or self-convergence of the algorithm and test its speed for a variety of interfaces; see subsection \ref{subsec:verification}. In subsection \ref{subsec:examples}, we plot corresponding solutions and compute reflection coefficients for a scattering theory. %as a function of parameters describing the interface.
We refer to Figure \ref{fig:interfaces} for an illustration of the interfaces used in our examples.

\begin{remark}
Some of the examples presented below correspond to the two-mass problem detailed in Section \ref{subsec:extension}. The numerical method from Section \ref{sec:num} (with trivial modifications) still applies to this more general case.
\end{remark}

\begin{figure}
    \centering
    \includegraphics[scale=.15]{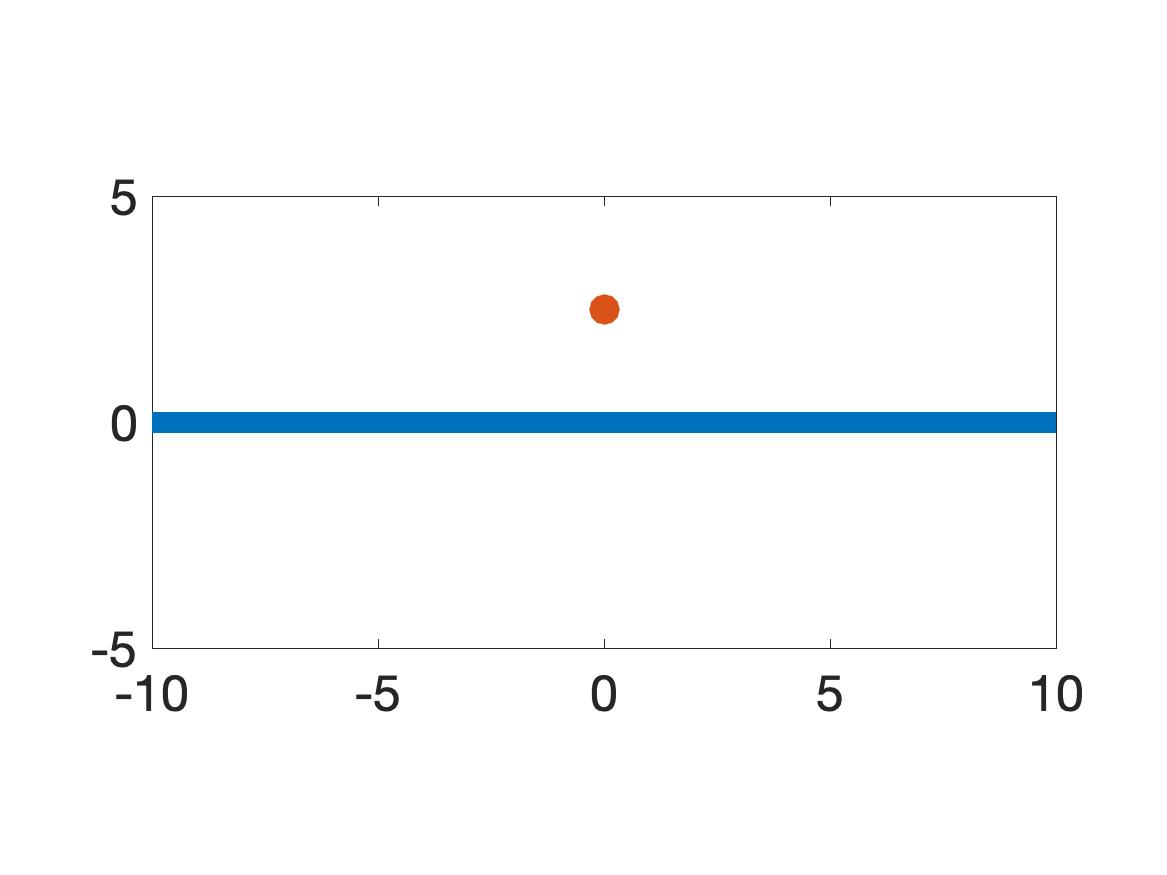}
    \includegraphics[scale=.15]{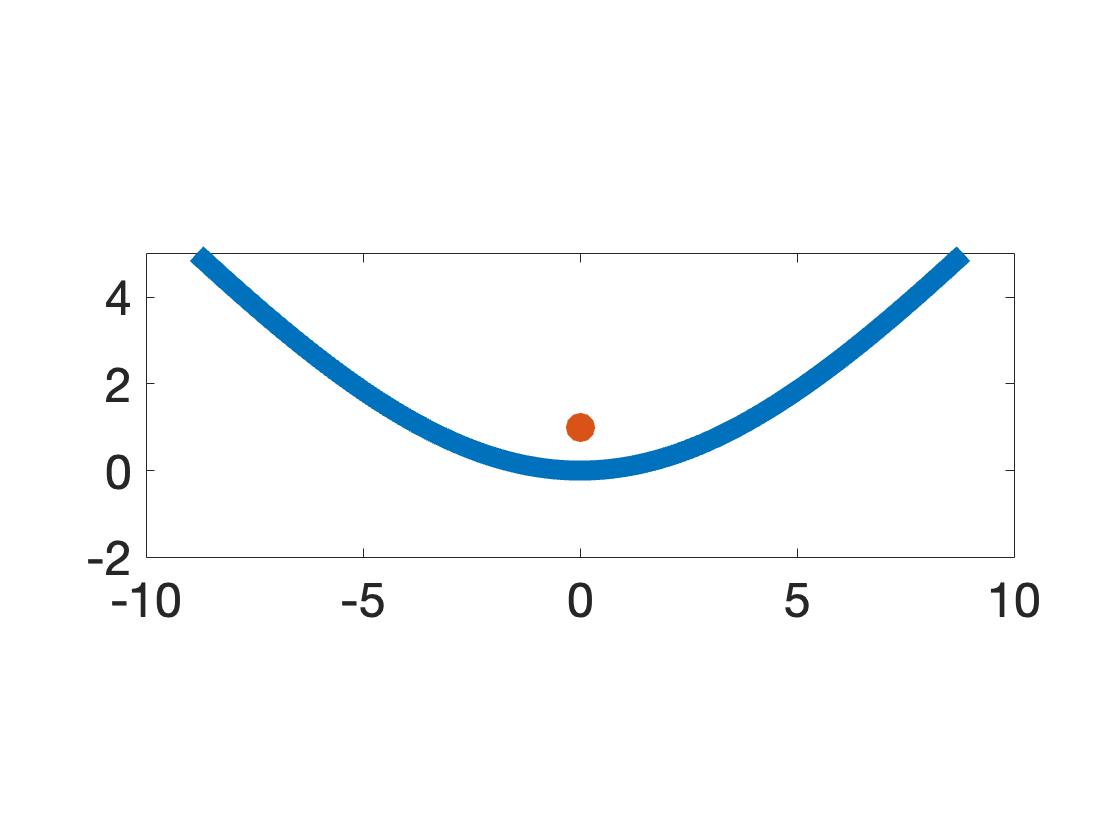}\\
    \includegraphics[scale=.15]{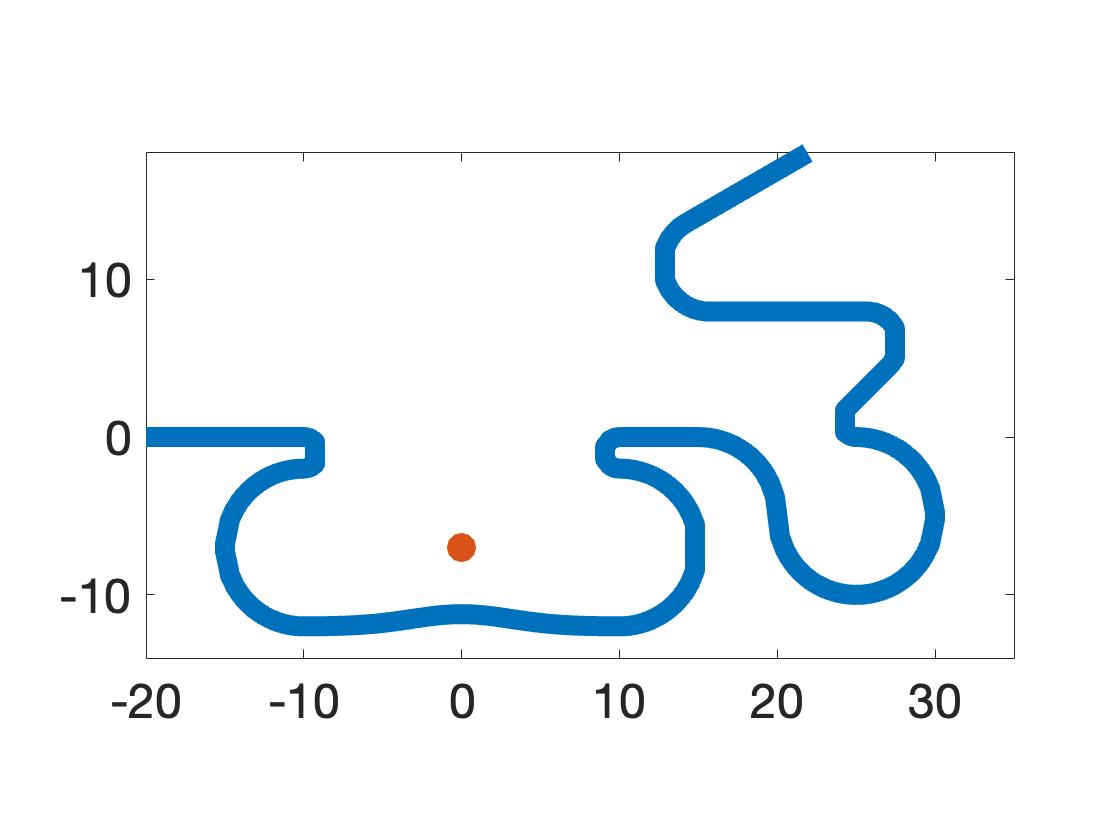}
    \includegraphics[scale=.15]{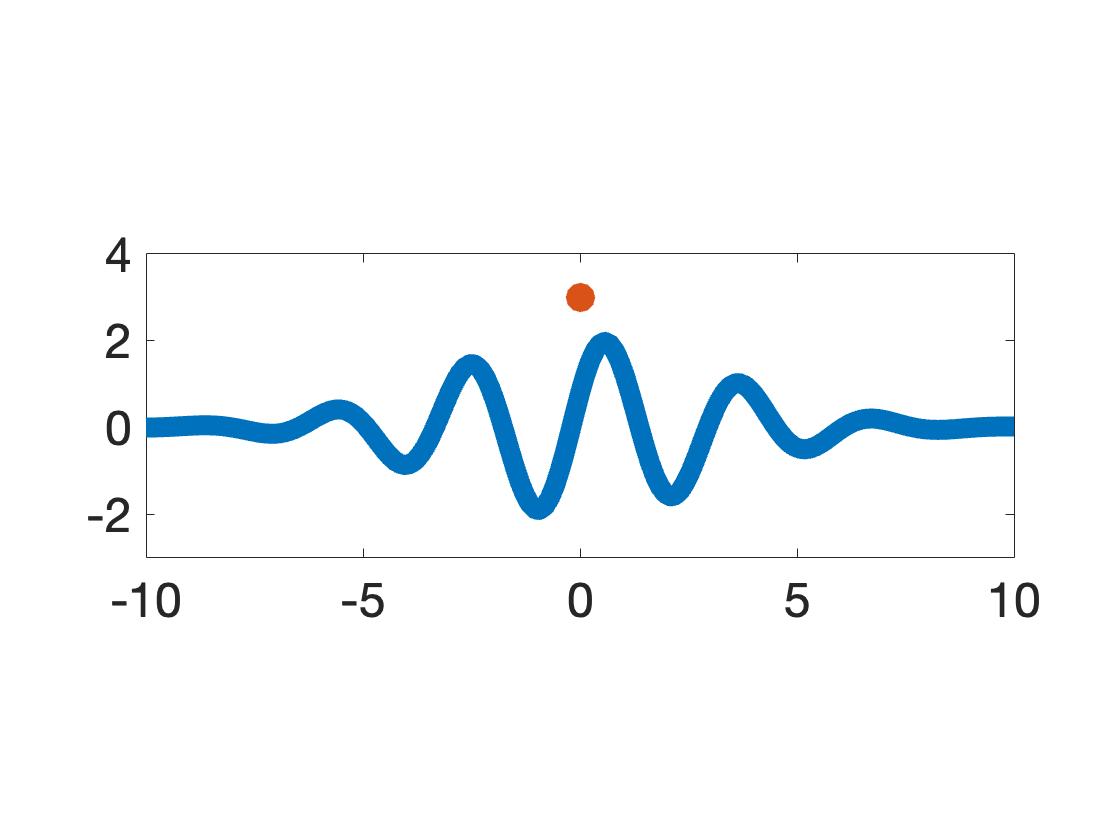}
    \caption{The interfaces $\Gamma_0$ (top left), $\Gamma_1$ (top right), $\Gamma_2$ (bottom left) and $\Gamma_3$ (bottom right), with respective sources at $(0,2.5)$, $(0,1)$, $(0,-7)$ and $(0,3)$ as indicated by the red dot. Outside the plotted region, the interfaces extend linearly to infinity.}
    \label{fig:interfaces}
\end{figure}

\subsection{Illustration of the numerical method}\label{subsec:verification}
This section presents the accuracy and speed of our numerical method for various interfaces.
We begin with the flat-interface ($\Gamma_0$) case, where there is an analytic expression for the Green's function. %For simplicity, we assume that $m_1=m_2 =: m$.
If $u$ denotes our computed solution,
we define the relative error of $u$ at the point $x_T$ by $|u (x_T) - u_* (x_T)|/|u_* (x_T)|$, where $u_*$ is the true solution.
Figure \ref{fig:flat} contains
a plot of this relative error (computed at four arbitrary points) as a function of $n_c$, as well as illustrations of the computed Green's function and densities.
Observe that our solution is highly accurate even for small values of $n_c$.

\begin{figure}
    \centering
    \includegraphics [width=0.45\linewidth]{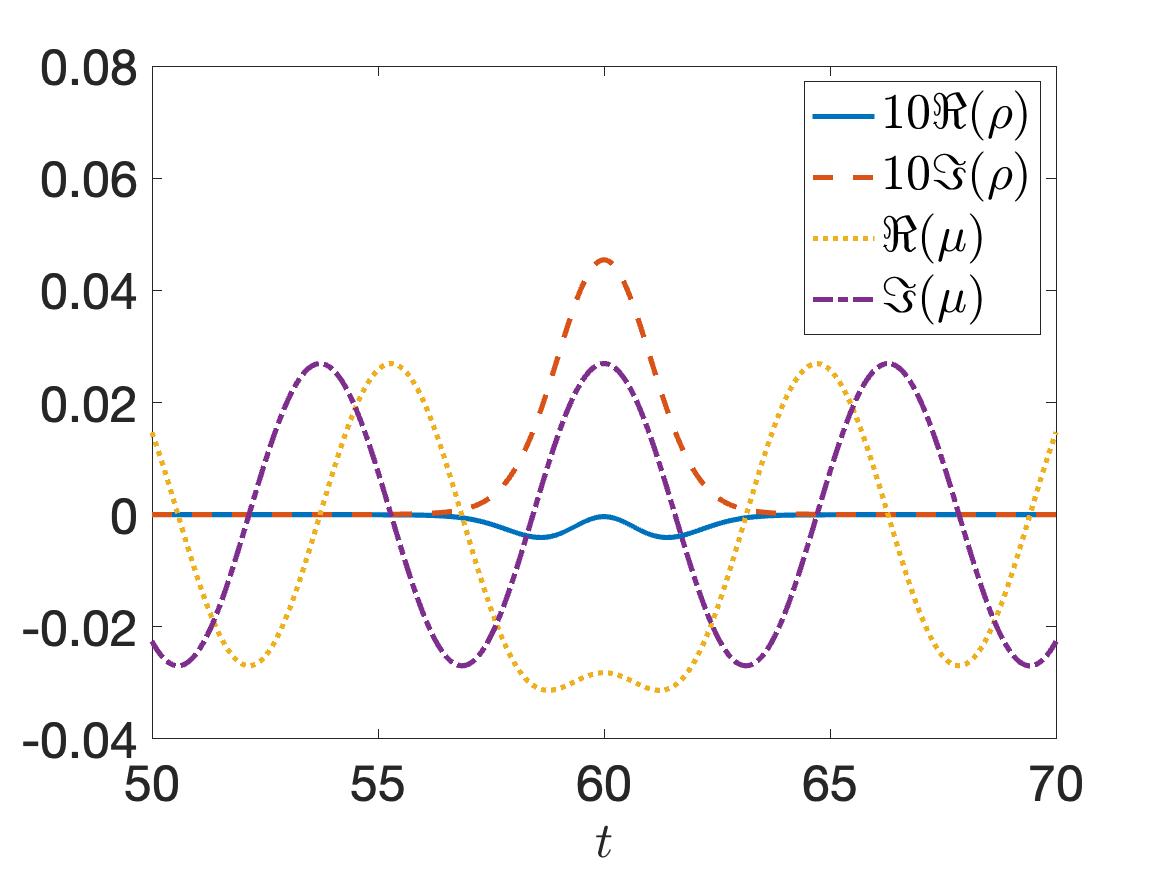}
    \includegraphics [width=0.45\linewidth]{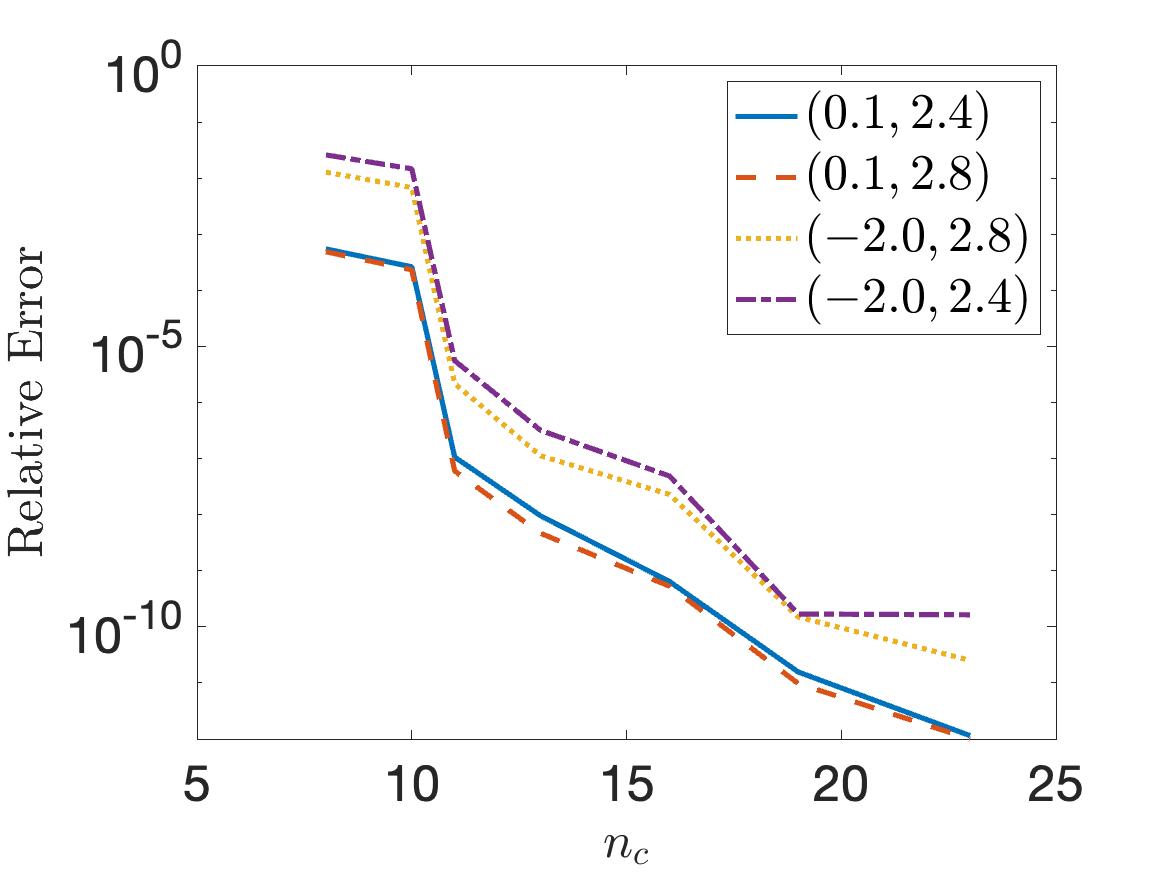}\\
    \includegraphics [width=0.45\linewidth]{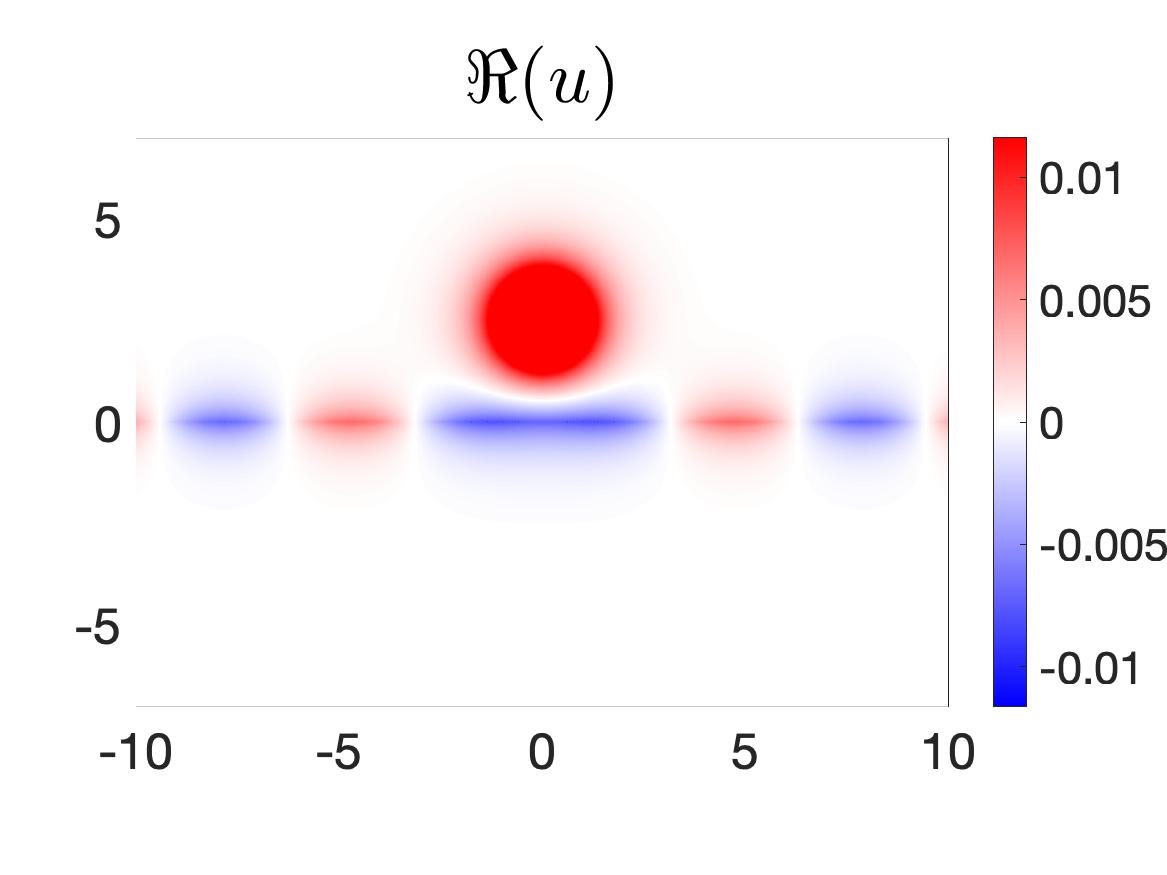}
    \includegraphics [width=0.45\linewidth]{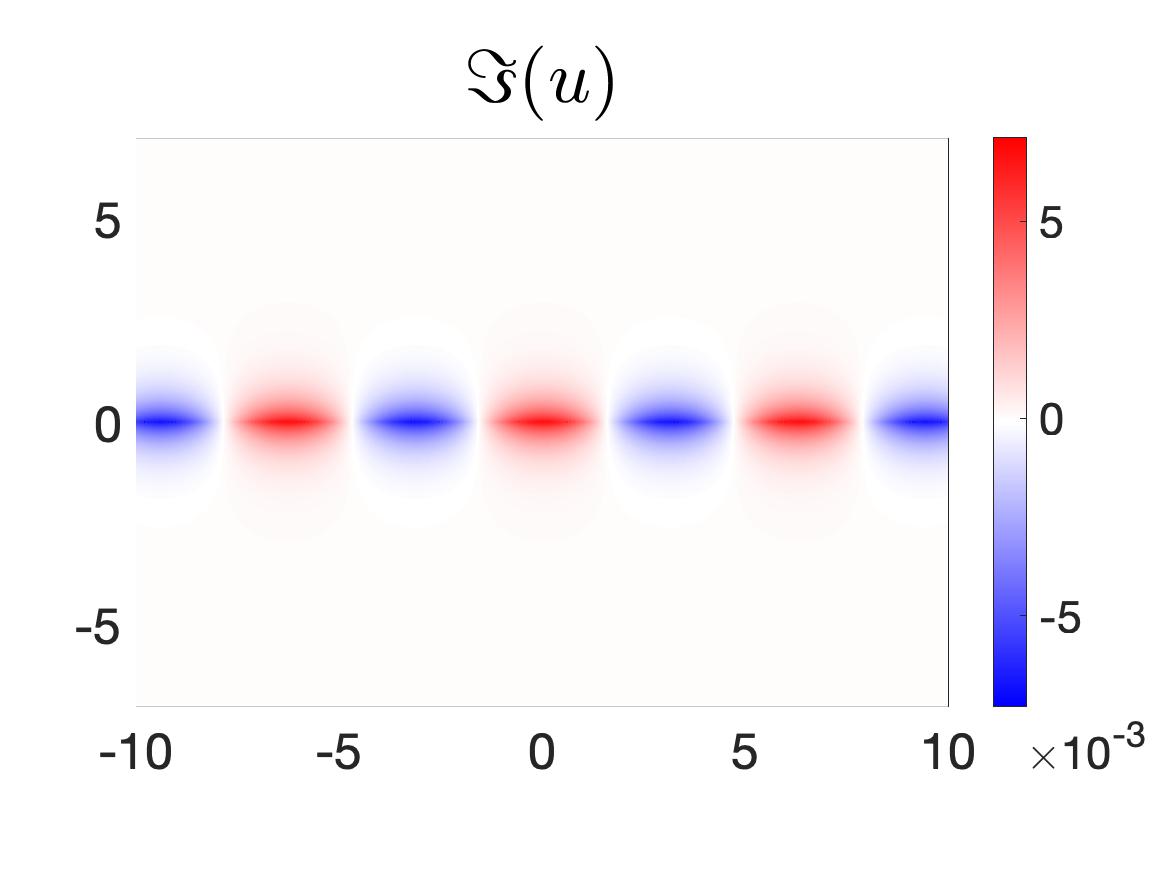}\\[-0.25cm]
    \includegraphics [width=0.45\linewidth]{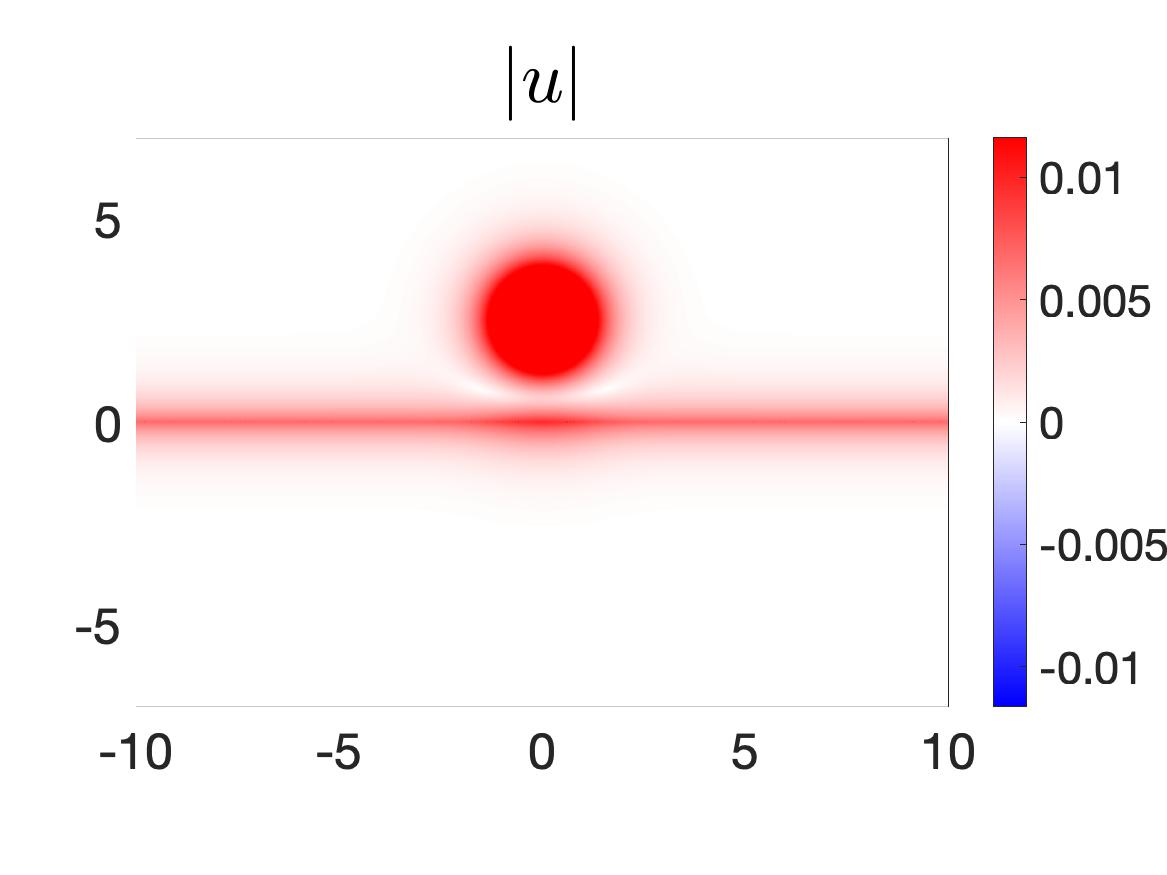}
    \includegraphics [width=0.45\linewidth]{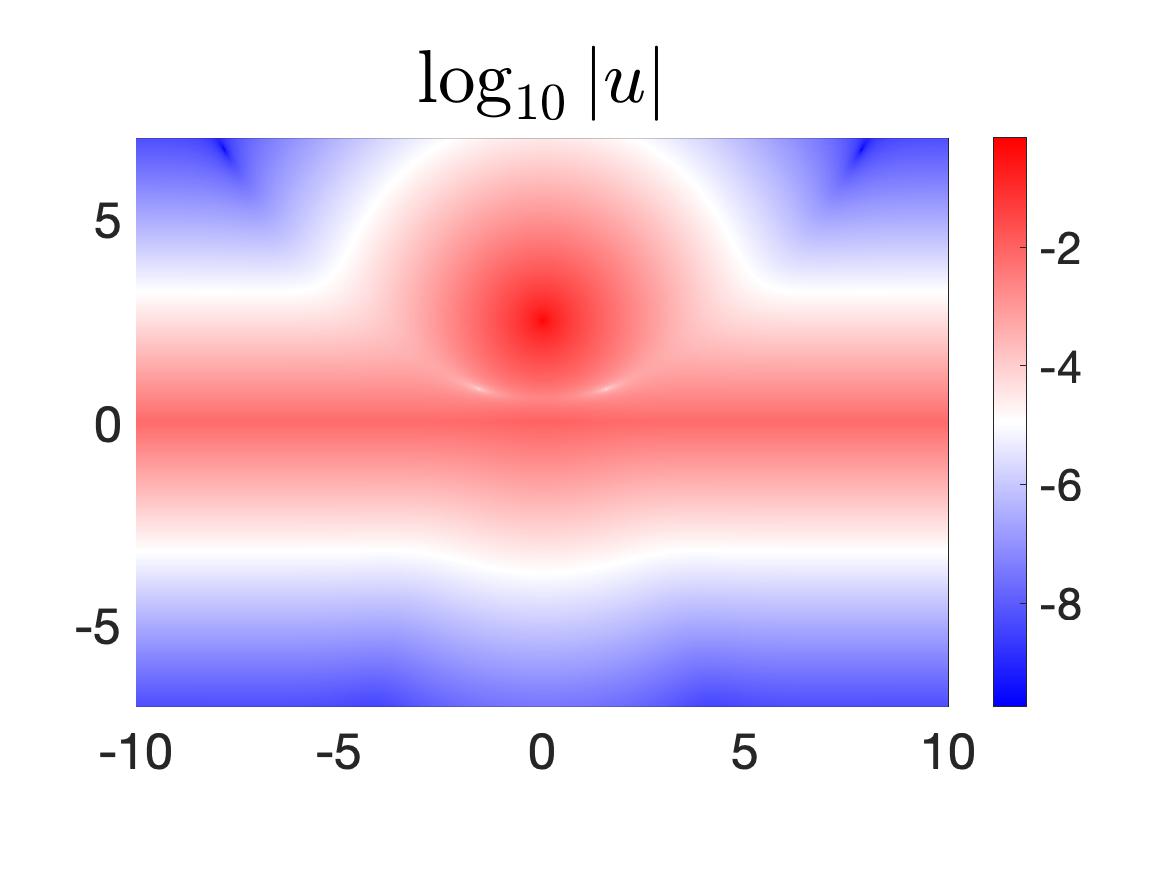}
    \caption{
    Densities $\mu$ and $\rho$ (top left panel), relative error of the computed solution at %$(1,1)$ 
    the indicated points as a function of $n_c$ (top right panel), and Green's function $u$ (bottom two rows) corresponding to the flat interface $\Gamma_0$ with $m=2$ and $E=1$. The top left panel zooms in on the region $[10,10]\times \{0\} \subset \Gamma_0$, with $t=60$ corresponding to the point $(0,0) \in \Gamma_0$.}
    \label{fig:flat}
    %flat_conv.m, plot_flat_conv.m
\end{figure}

For arbitrary non-flat interfaces %(including those from Figure \ref{fig:interfaces}),
%(including these examples), 
(such as $\Gamma_1, \Gamma_2, \Gamma_3$), analytic solutions are not known and hence we cannot compute the exact relative error. 
Instead, 
we perform a self-convergence test, %. This means 
which involves approximating the true solution by the numerical solution at some large value $N^*$ of $n_c$. 
If we now let $u_N$ denote the computed solution with $n_c=N$, our approximate relative error at the target $x_T$ is given by
$|u_{N} (x_T) - u_{N^*} (x_T)|/|u_{N^*} (x_T)|$.
%We refer to Figure \ref{fig:error} (left panel) for a corresponding plot, where we set $N^*=512$.
%For a plot of approximate relative error versus $n_c$ 
For a corresponding plot with $N^*=512$, see Figure \ref{fig:error} (top left panel).
We again observe fast convergence in $n_c$, though understandably not as fast as the flat-interface case.

Another parameter of interest is the truncation length $n_{\text{buffer}}$ (introduced in Section \ref{subsec:disc}).
%In all the numerical examples presented in this paper (including the above convergence tests), we set
Its default value (unless otherwise specified) is
$$\nb = 2\Bigg \lceil \frac{\nc \log (10^{16})}{m_{0} \Delta_t}\Bigg \rceil=:2\lceil M_b \rceil,$$
where $m_{0} := \min (m_1, m_2)$ and $\Delta_t$ is the arclength of $\Gamma$ over the entire discretized region (that is, the region discretized by all $\no = n+2\nb$ grid points; see Section \ref{subsec:disc}).
Here, $M_b\in \mathbb{N}$ is sufficiently large so that interactions between points separated by at least $M_b$ grid points are bounded by $10^{-16}$ in absolute value. %is the smallest positive integer for which truncation errors would not exceed $10^{-16}$. 
Some routines in the numerical experiments presented here were set to a tolerance of $10^{-12}$, so the error of $10^{-16}$ more than ensures that any error we observe would not be due to truncation.

To test the effect of $\nb$ on the convergence of our method, we introduce $\tau > 0$ and set $\nb = 2\lceil \tau M_b \rceil$.
In the top right panel of Figure \ref{fig:error}, we plot
$|u_{n_c,\tau}(x_T) - u_{512, \tau} (x_T)|/|u_{512, \tau} (x_T)|$ as a function of $\tau$, where $u_{N,\tau}$ is our computed solution with $n_c = N$ and $\nb =2\lceil \tau M_b \rceil$. We set $n_c=128,64,256$ for $\Gamma_1, \Gamma_2, \Gamma_3$, respectively.
%$(n_c (\Gamma_1), n_c (\Gamma_2), n_c (\Gamma_3))=(128,64,256)$.
Given %this conservative 
the small truncation error tolerance at $\tau = 1$, it makes sense that decreasing the value of $\tau$ from $1$ would not immediately increase the relative error. Still, once $\tau$ gets small enough (say, less than $0.5$), the truncation length is too small and the convergence of our method suffers. %If $\tau$ is too large, then we would not be keeping enough points on the grid, as illustrated by 
The relative error increases when $\tau$ increases from $1$, as we do not keep enough grid points in this case.

In summary, 
the top left panel illustrates relative error at fixed $\tau=1$ and varying $n_c$, while the top right panel illustrates relative error at fixed $n_c$ (depending on the interface) and varying $\tau$.
We refer to the bottom panel of Figure \ref{fig:error} for %plots of accuracy and 
a plot of the speed of our method as a function of $n_c$.
As predicted, the computation time grows only linearly in $n_c$.
%The average slope of these curves is approximately $9.5 \times 10^{-3}$.
The respective slopes of the line of best fit for $\Gamma_1, \Gamma_2, \Gamma_3$ are $1.06 \times 10^{-2}, 1.03 \times 10^{-2}, 8.43 \times 10^{-3}$, with an average value of $9.78 \times 10^{-3}$. These slopes were computed using only the data for $n_c > 50$ to eliminate the effect of the nonlinear behavior of the curves for small $n_c$. %, and were used to extrapolate the data to the right of the vertical line.
The lines of best fit are used to extrapolate the data to $n_c > 256$ in the plot. %, as indicated in the plot.
%>>>>>>> e166e7d45596cf5487a27359fac6e9ee4cb6c4e8

\begin{figure}
    \centering
    \includegraphics[width=0.45\linewidth]{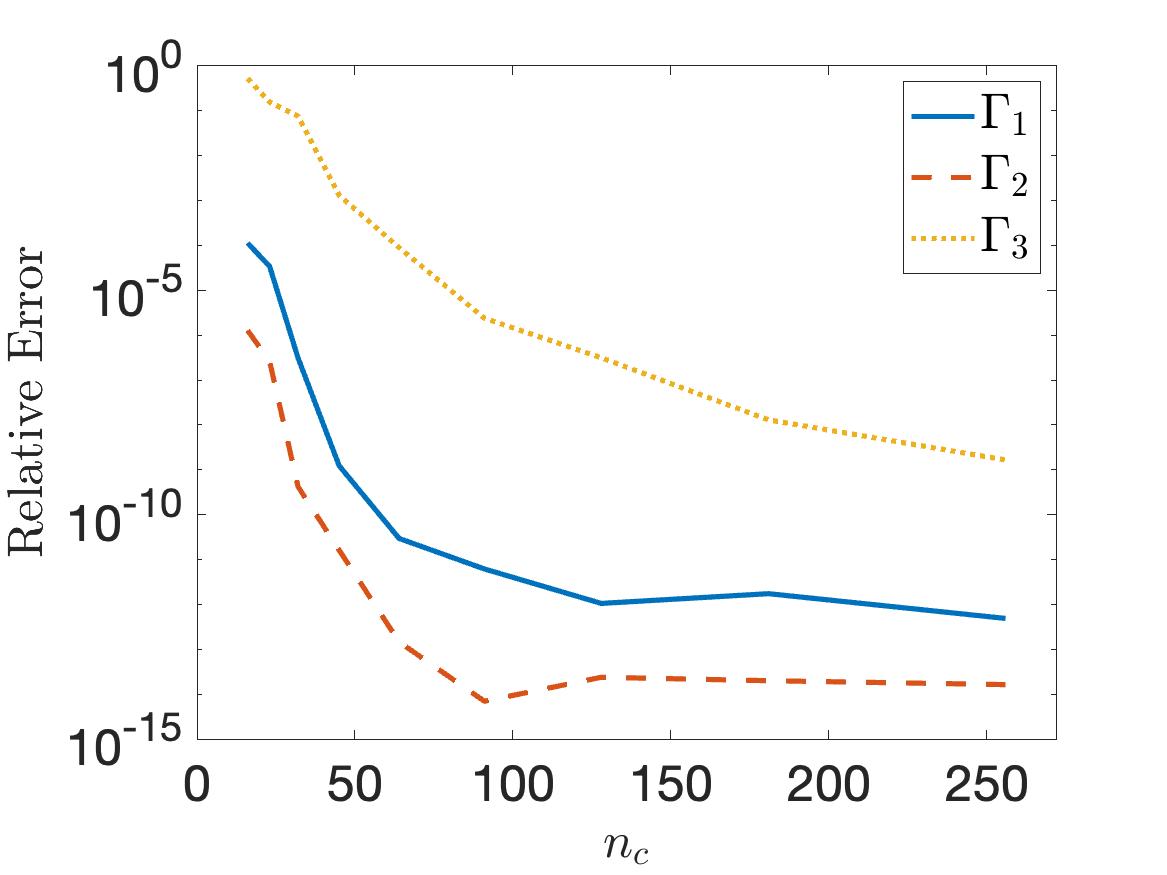}
    \includegraphics[width=0.45\linewidth]{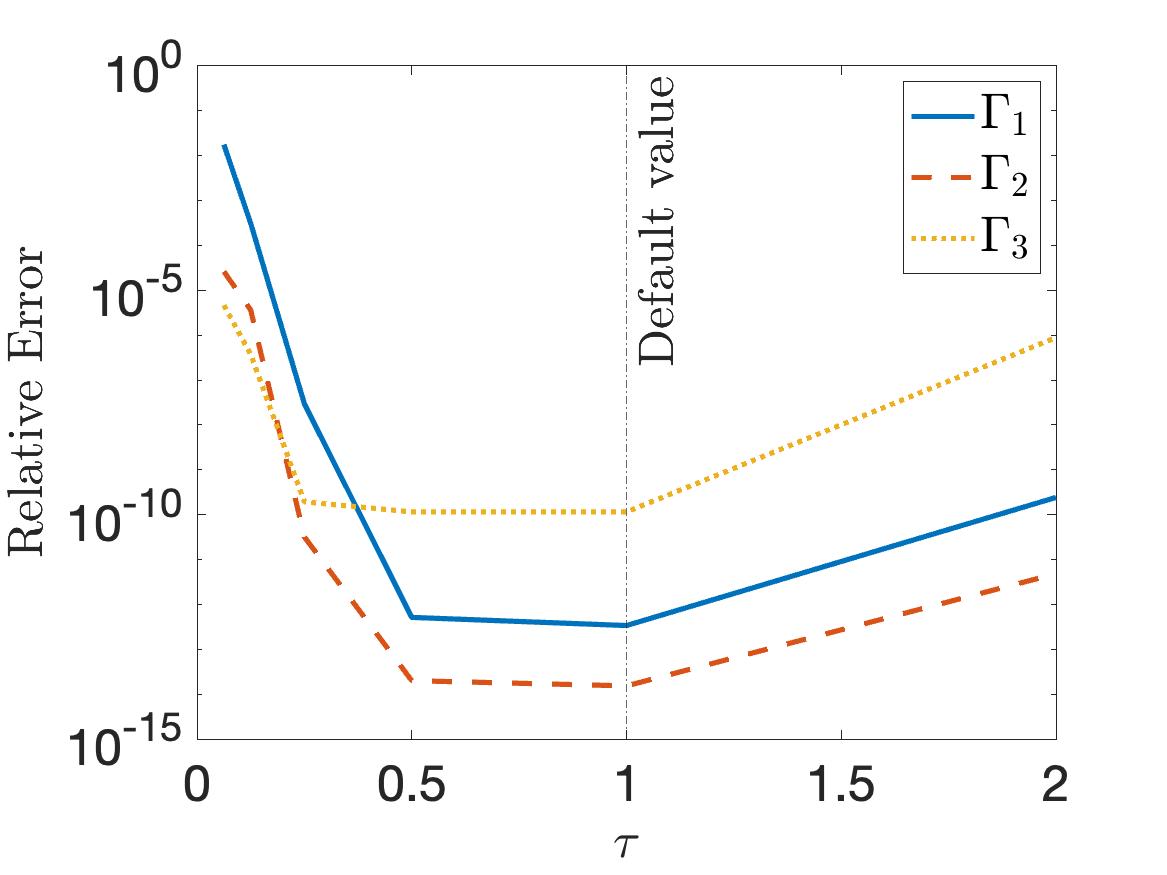}\\
    \includegraphics[width=0.45\linewidth]{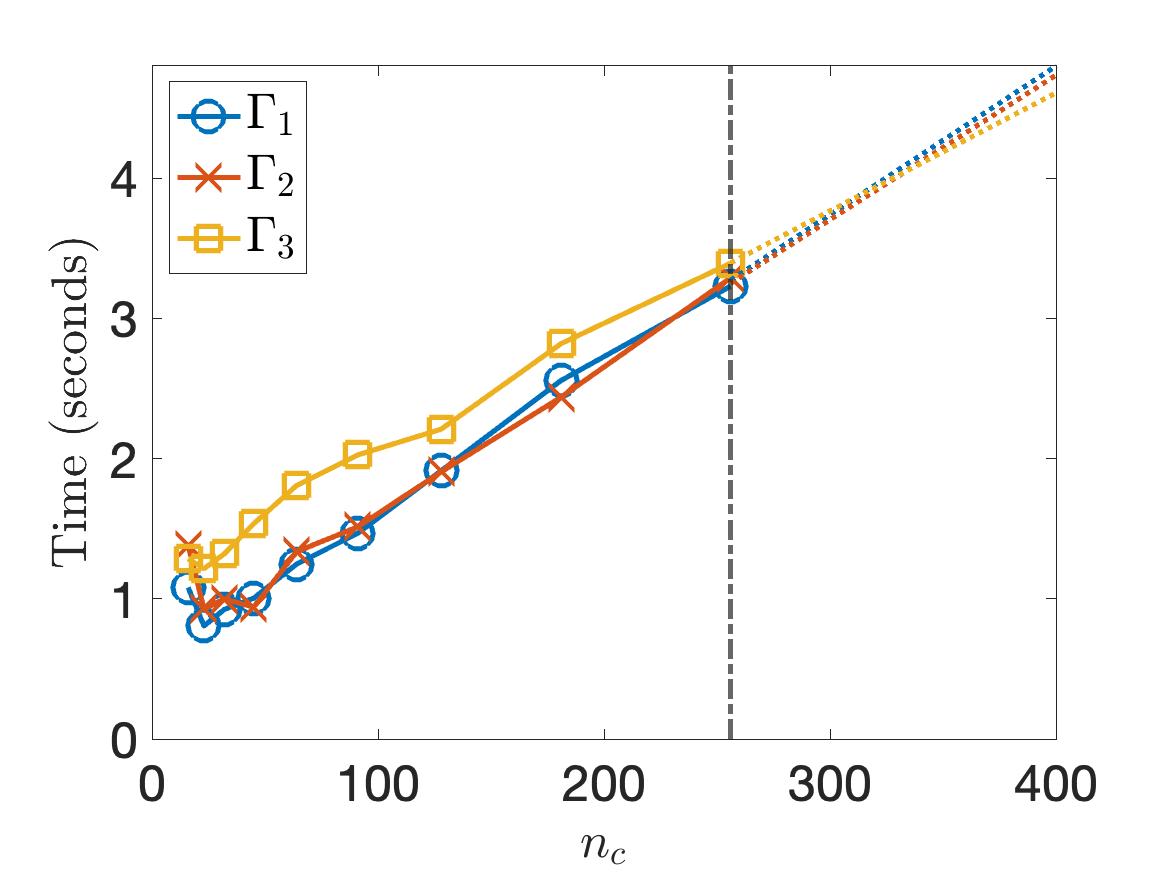}
    \caption{Self-convergence tests for varying $n_c$ (top left panel) and truncation length (top right panel). %, and computation time (right panel). 
    %We approximate the relative error at some target $x_T \in \mathbb{R}^2$ by $|u_{n_c}(x_T) - u_{512} (x_T)|/|u_{512} (x_T)|$, where $u_{N} (x_T)$ is our computed solution with $n_c = N$. Here, 
    The relative error is computed at $x_T = (-1,1), (7,-7), (1,-3)$ for $\Gamma_1, \Gamma_2,\Gamma_3$ respectively. For each interface, the source location is given by Figure \ref{fig:interfaces}. The respective values of $m$ and $E$ are those from Figures \ref{fig:exv} (middle panel), \ref{fig:exc} and \ref{fig:exw} below. The computational cost of obtaining $u(x_T)$ is illustrated by the bottom panel. %The left plot corresponds to $\tau=1$, while the center plot for $\Gamma_1, \Gamma_2, \Gamma_3$ corresponds to $n_c = 128, 64, 256$ respectively.
    }
    \label{fig:error}
    %To reproduce plots: v_conv.mat, c_conv.mat, w_conv.mat, v_times.mat, c_times.mat, w_times.mat
    %plotconv.m, plottrunc.m, plottime.m
\end{figure}

\subsection{Examples of applications}\label{subsec:examples}
We now present various examples corresponding to the non-flat interfaces from Figure \ref{fig:interfaces}. The Green's function, $u$, %as well as the 
and densities, $\mu$ and $\rho$, for a source (whose location is given by Figure \ref{fig:interfaces}) near a non-flat section of the interface are plotted in Figures \ref{fig:exv}--\ref{fig:exw2}.
%\medskip 
As expected, $\omega = \sqrt{m^2-E^2}$ dictates the rate at which solutions decay away from the interface. In particular, larger values of $\omega$ result in faster decaying solutions. In the case of different masses, we observe that $u$ decays more rapidly in the domain with larger $\omega$; see Figures \ref{fig:exv} and \ref{fig:exw2}.

\begin{figure}
    \centering
    %\includegraphics[width=0.50\linewidth]{}\\[-1.0cm]
    %\includegraphics[width=0.50\linewidth]{}\\[-1.0cm]
    %\includegraphics[width=0.45\linewidth]{}
    %\includegraphics[width=0.50\linewidth]{}
    %\caption{Green's function, $u$, for interface $\Gamma_1$ with $m_1=2$, $E=0.8$ and $m_2=3$ (top left) $m_2=2$ (top right), $m_2=1.5$ (bottom left) and $m_2=1$ (bottom right).}
    \includegraphics[width=0.75\linewidth]{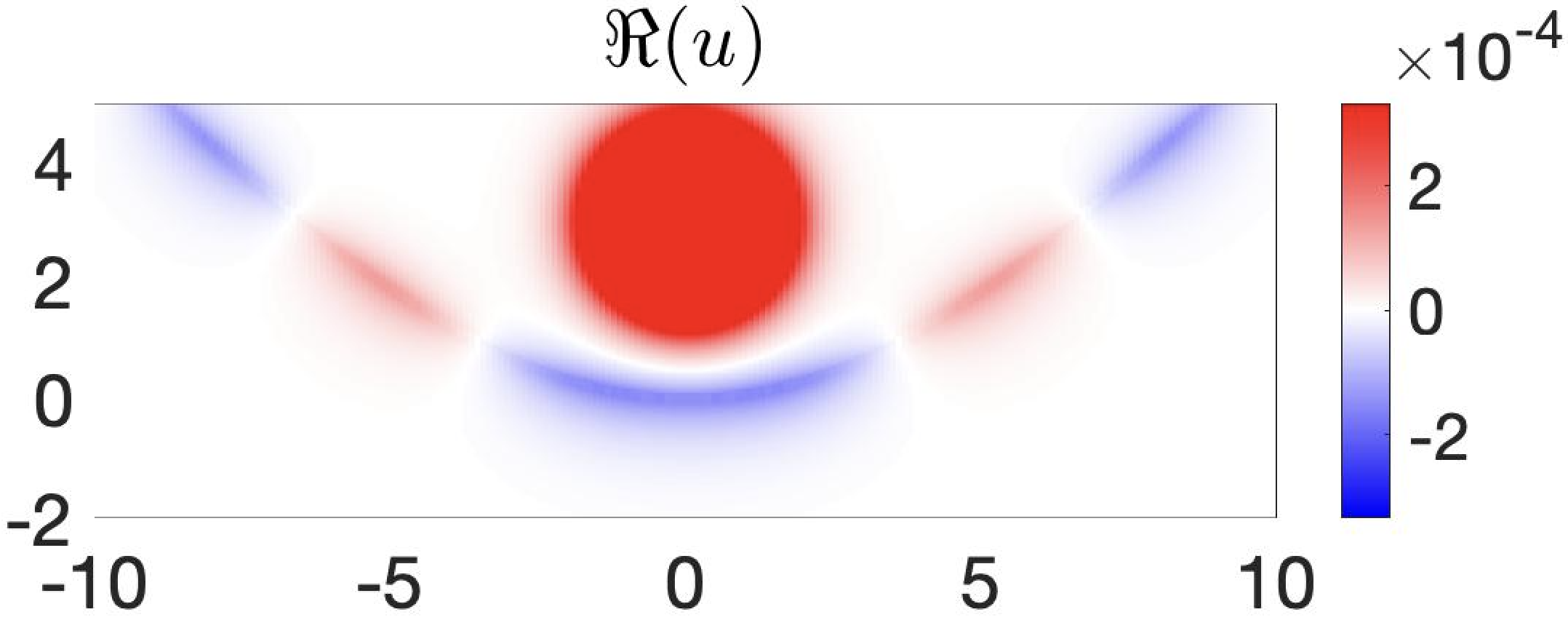}\\[0.5cm]
    \includegraphics[width=0.75\linewidth]{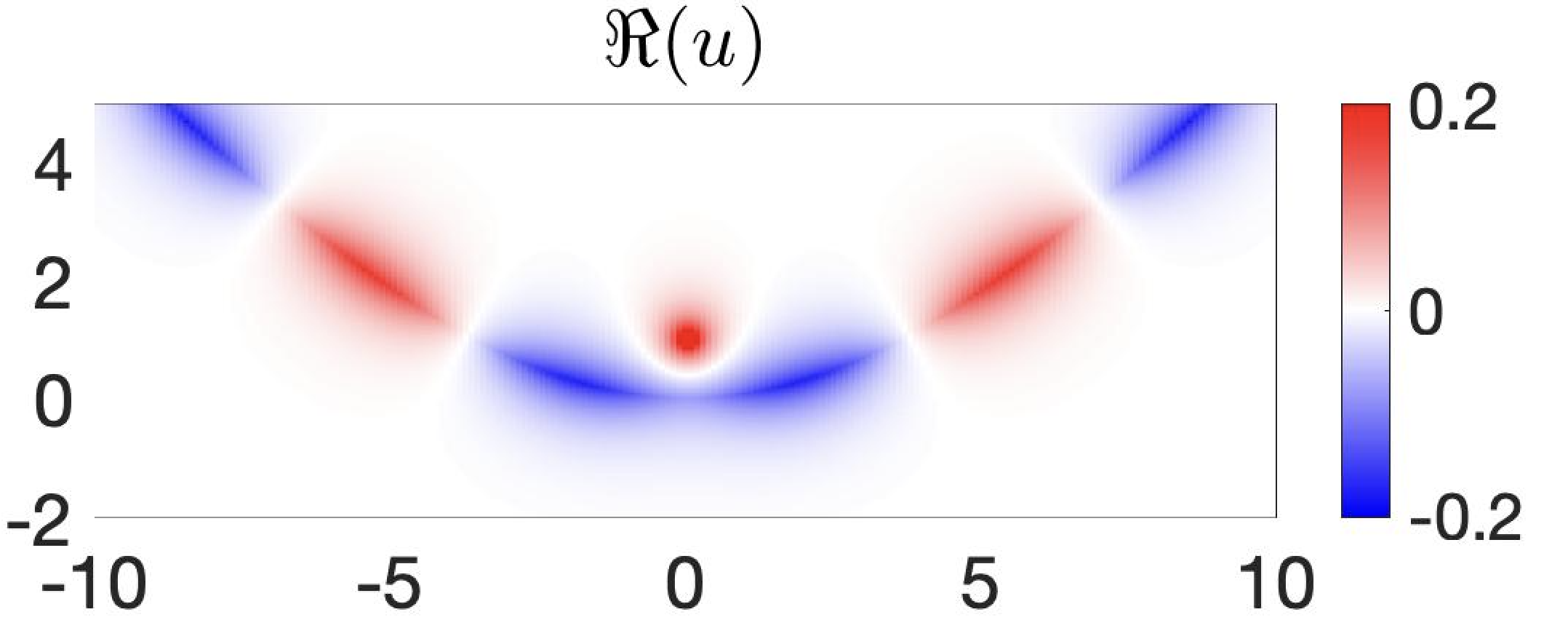}\\[0.5cm]
    \includegraphics[width=0.75\linewidth]{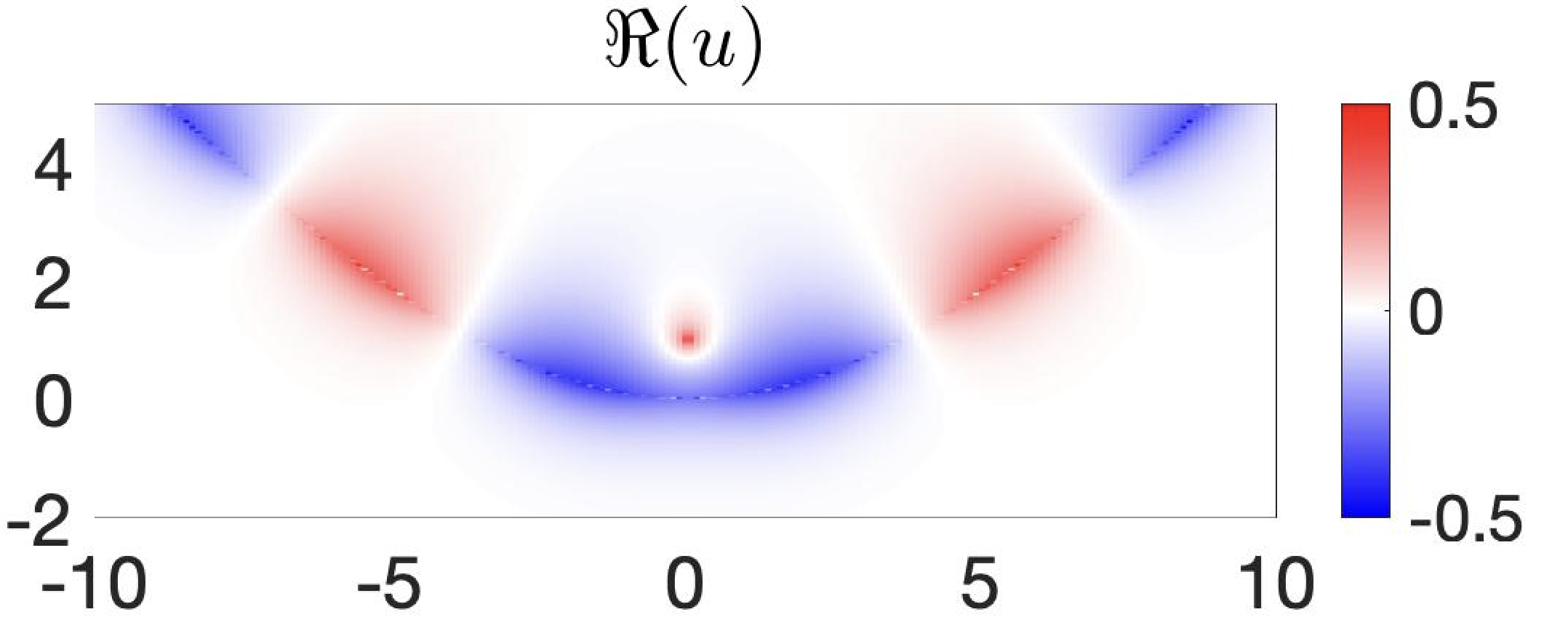}
    \caption{Green's function, $u$, for interface $\Gamma_1$ with $m_1=2$, $E=0.8$ and $m_2=3$ (top) $m_2=2$ (middle) and $m_2=1$ (bottom).}
    \label{fig:exv}
\end{figure}

\begin{figure}
    \centering
    \includegraphics[width=0.75\linewidth]{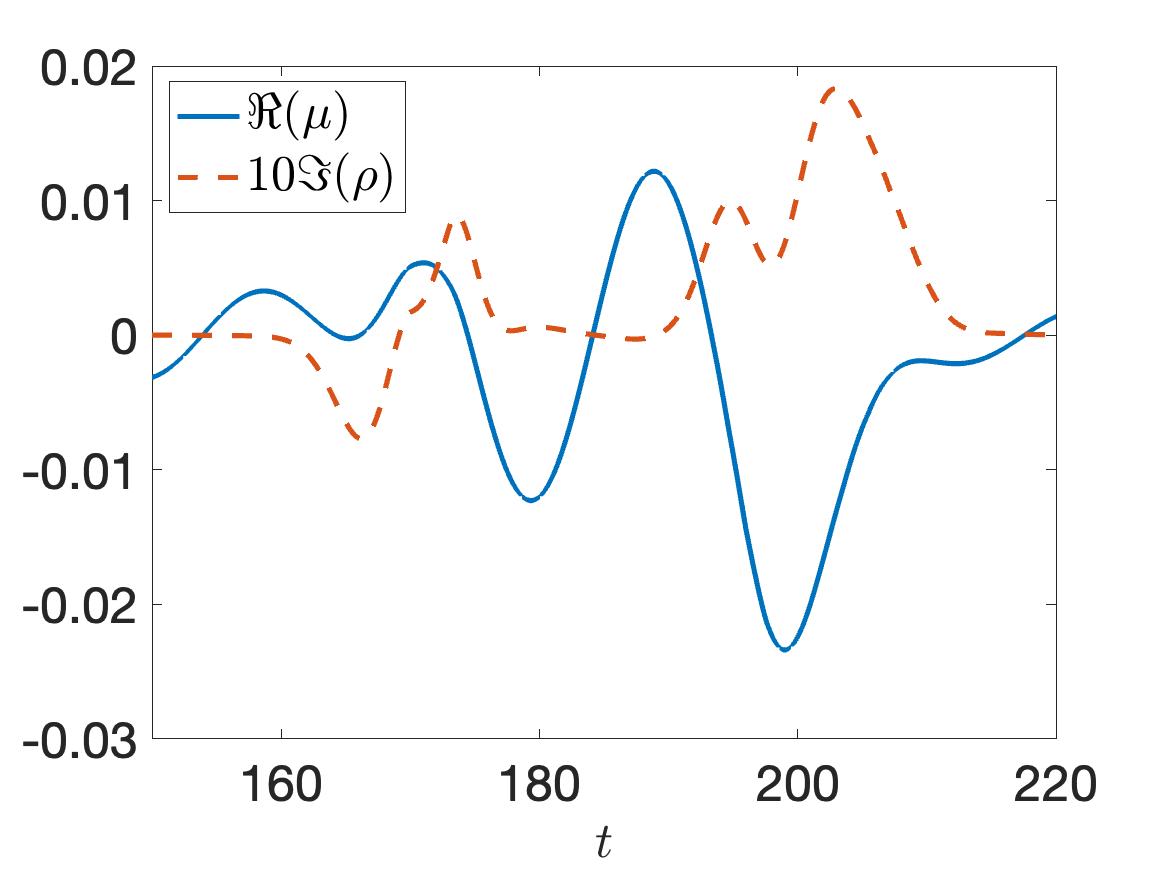}\\
    \includegraphics[width=0.45\linewidth]{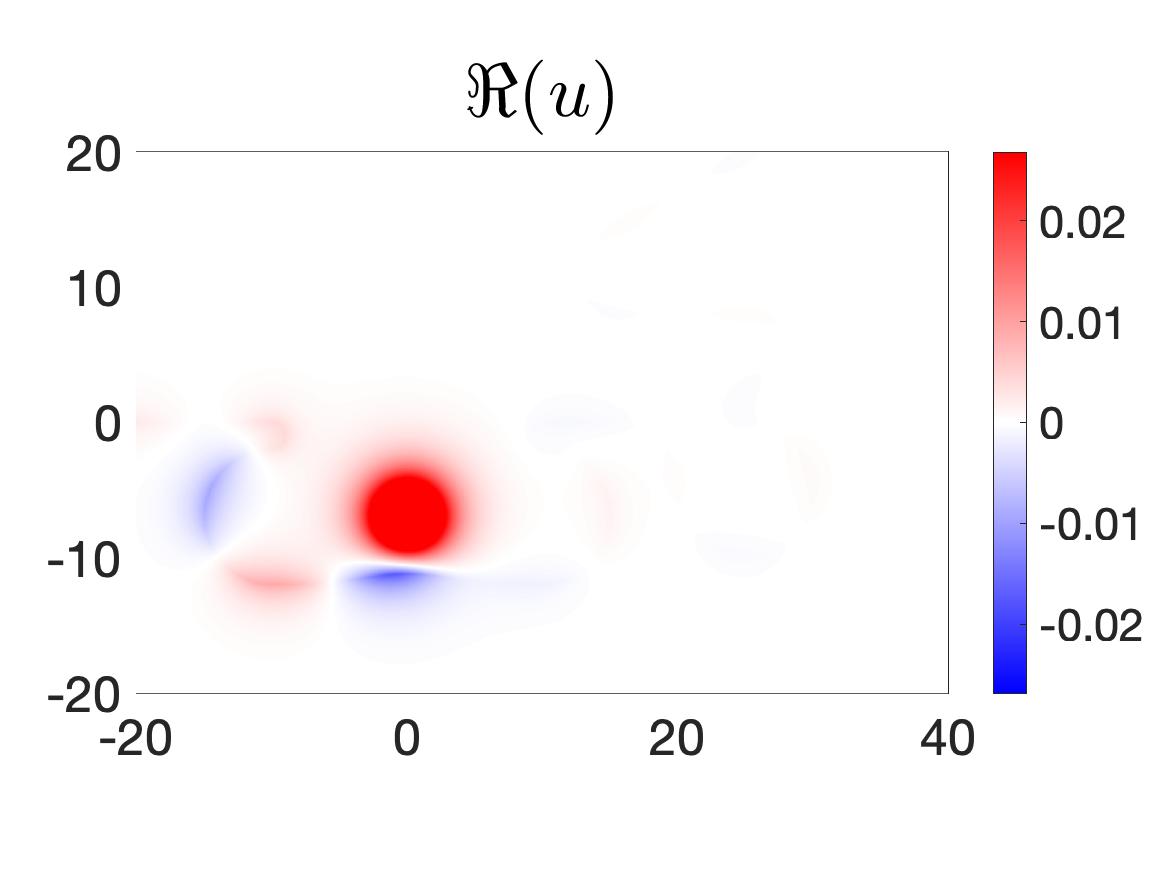}
    \includegraphics[width=0.45\linewidth]{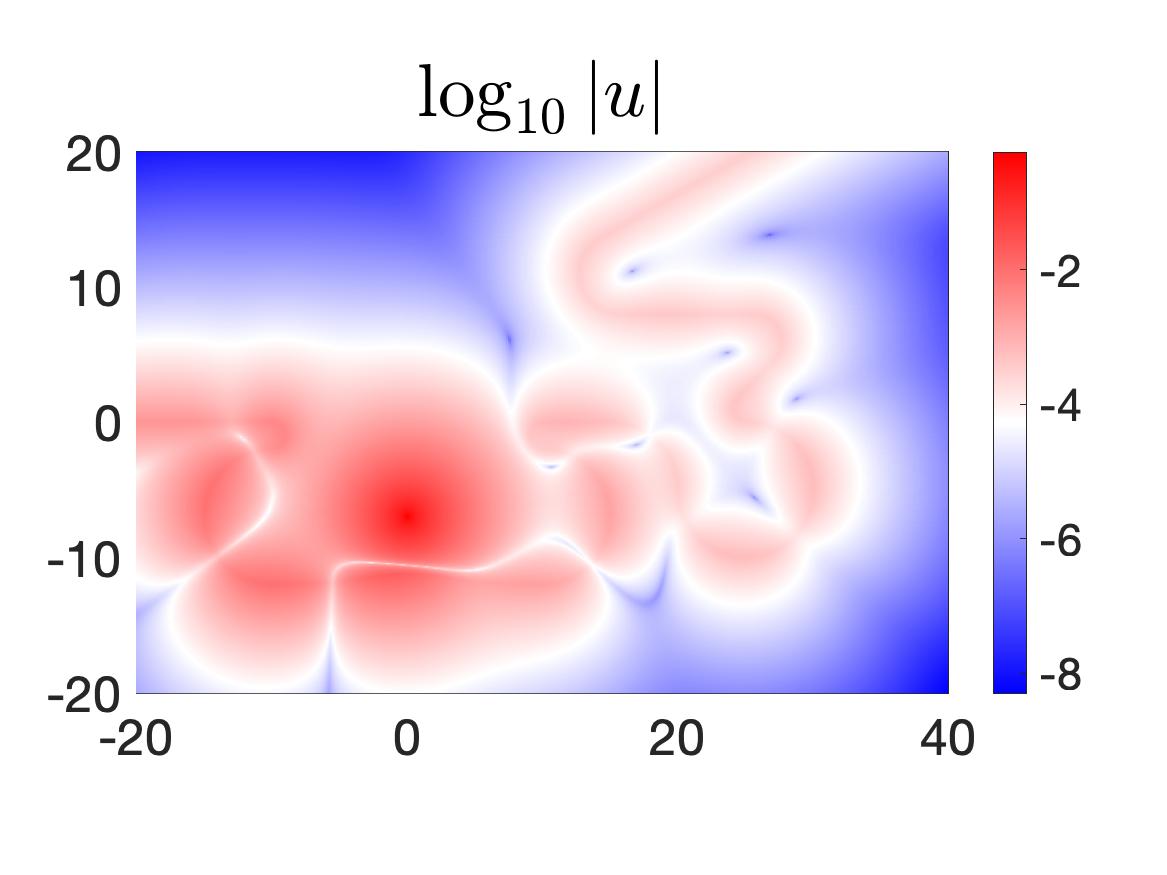}
    \caption{Densities, $\mu$ and $\rho$, (top panel) and Green's function, $u$, (bottom panels) for the interface $\Gamma_2$ with $m = 2/3$ and $E=1/3$. The top plot zooms in on the part of $\Gamma_2$ connecting the points $(-30.0,0.0)$ and $(14.8,-8.5)$, with $t=185$ corresponding to the point $(-13.5,-10.6)$.}
    \label{fig:exc}
\end{figure}

\begin{figure}
    \centering
    \includegraphics[width=0.75\linewidth]{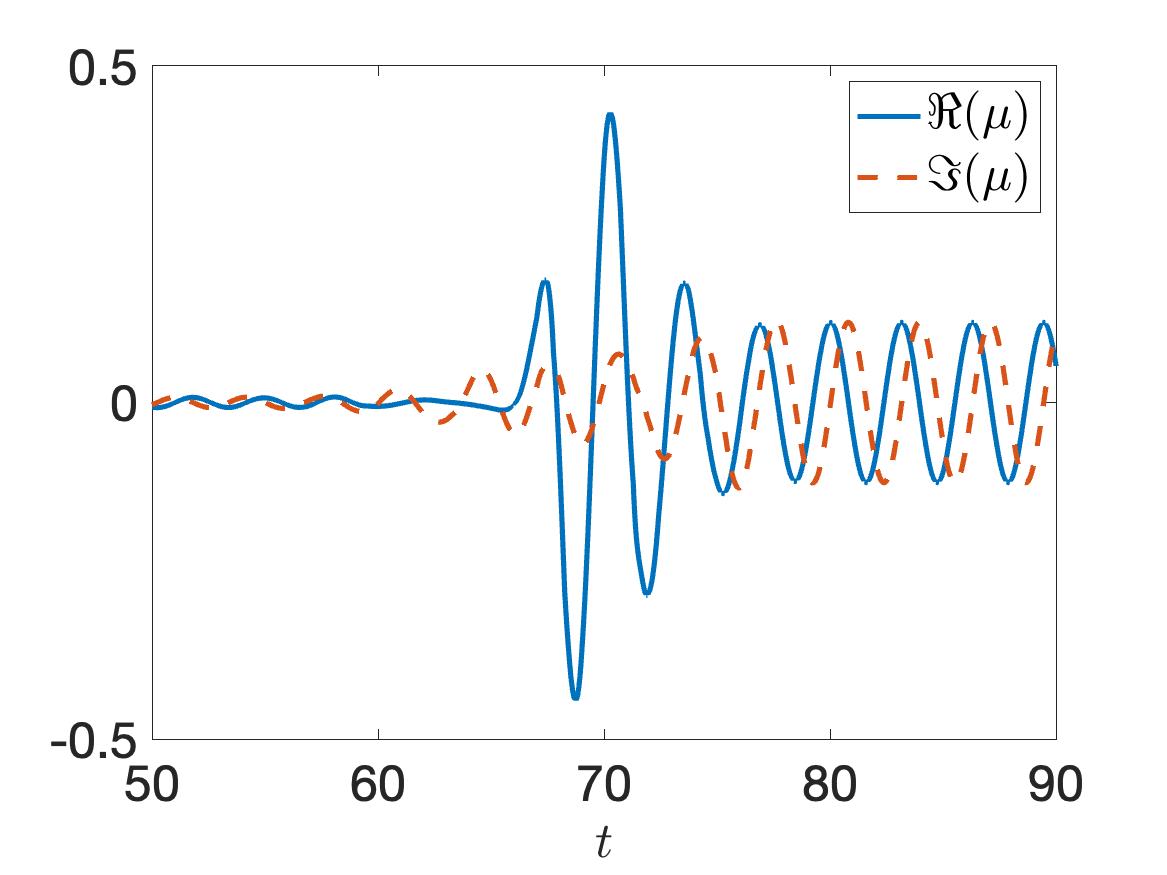}\\
    \includegraphics[width=0.45\linewidth]{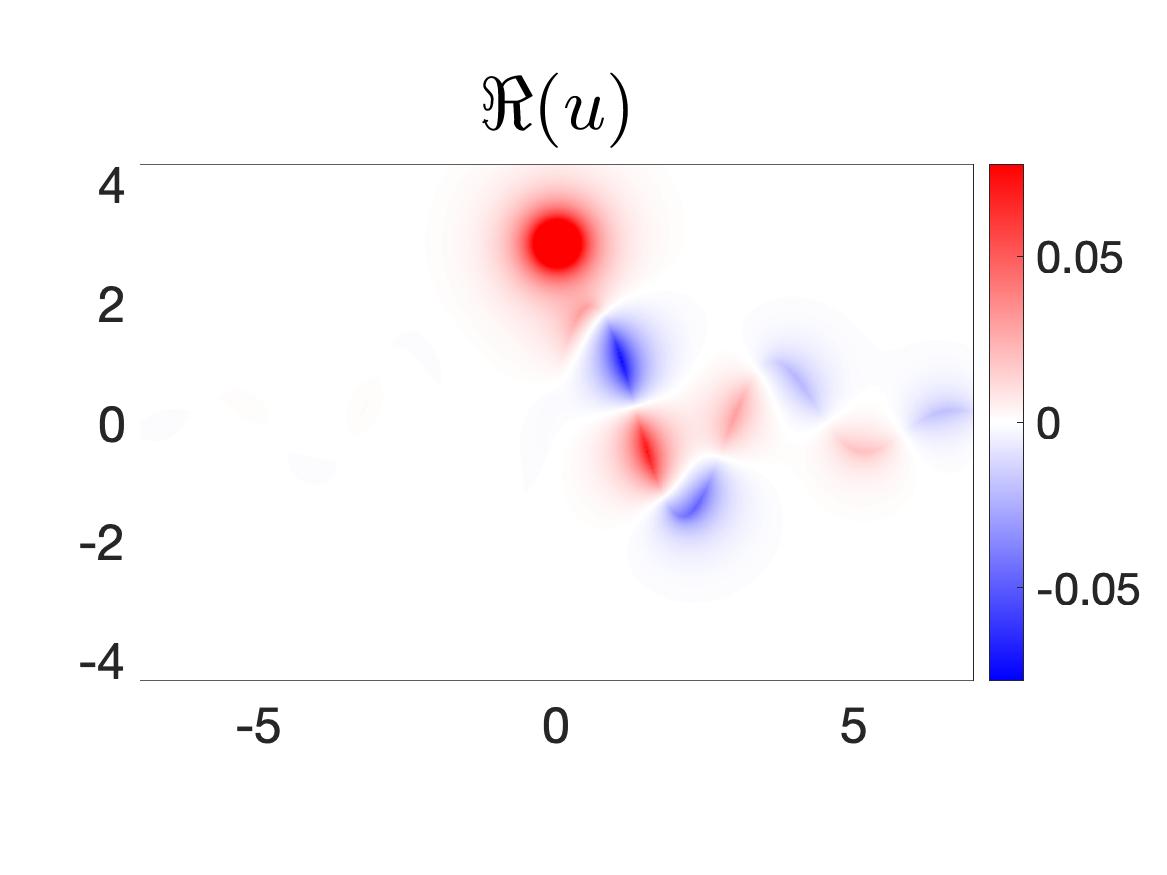}
    \includegraphics[width=0.45\linewidth]{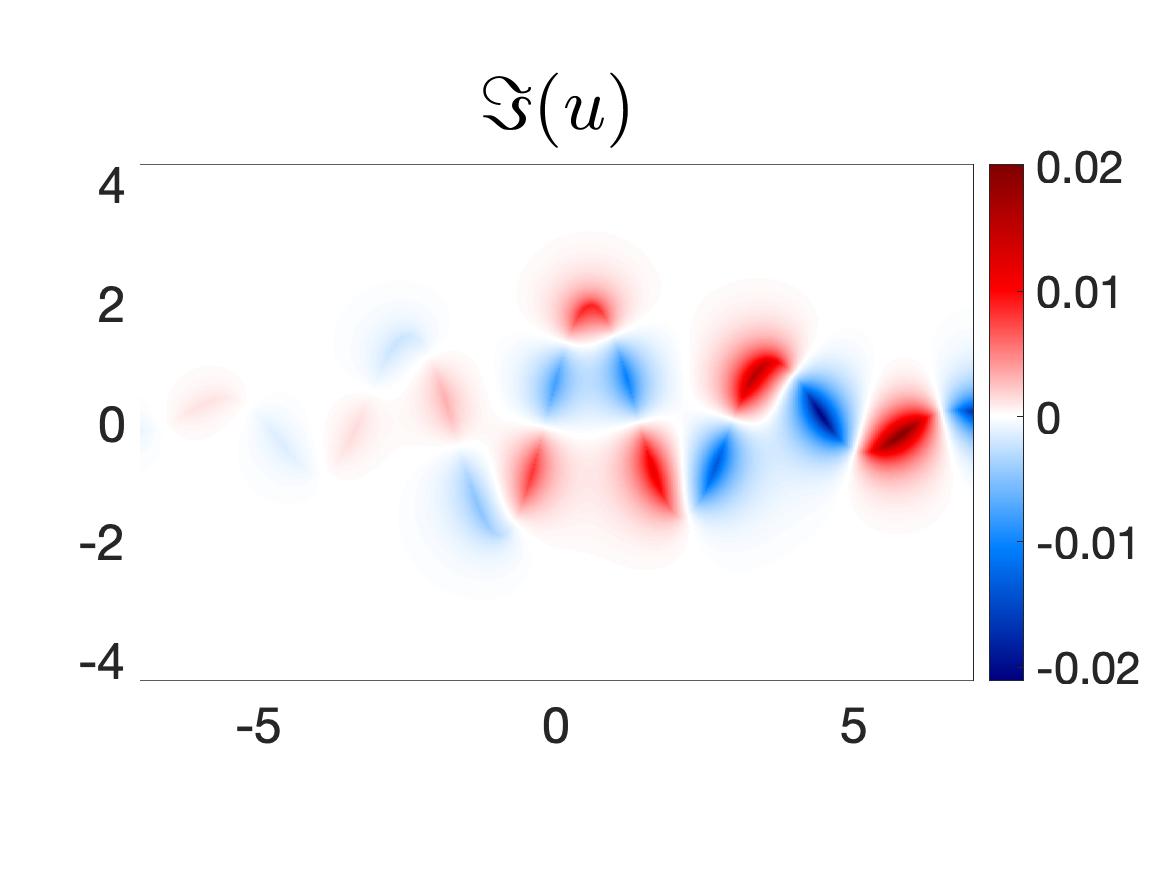}
    \caption{Density, $\mu$, (top panel) and Green's function, $u$, (bottom panels) for the interface $\Gamma_3$ with $m= 3$ and $E=2$. The top plot zooms in on the part of $\Gamma_3$ connecting the points $(-10.0,0.0)$ and $(18.0,0.0)$, with $t=70$ corresponding to the point $(1.4,-0.2)$.}
    \label{fig:exw}
\end{figure}

\begin{figure}
    \centering
    \includegraphics[width=0.7\linewidth]{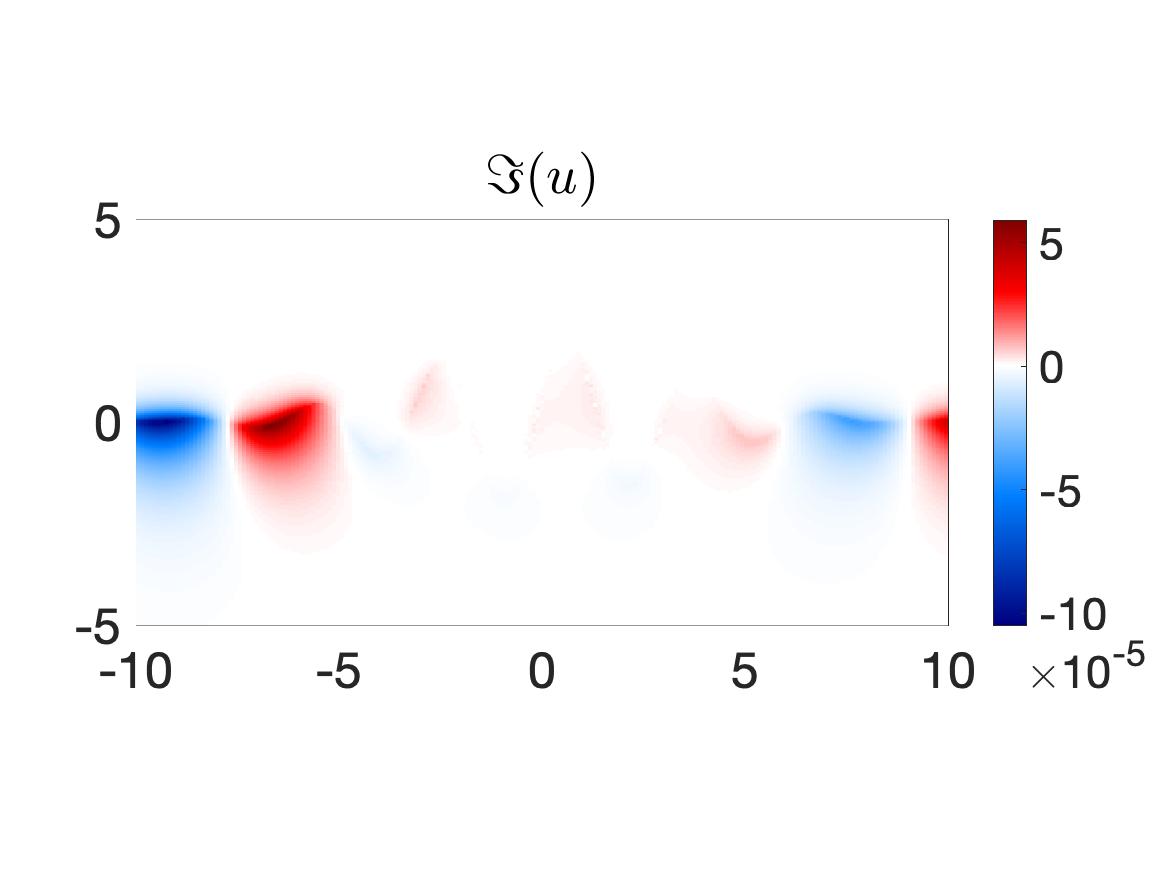}\\[-1.4cm]
    \includegraphics[width=0.7\linewidth]{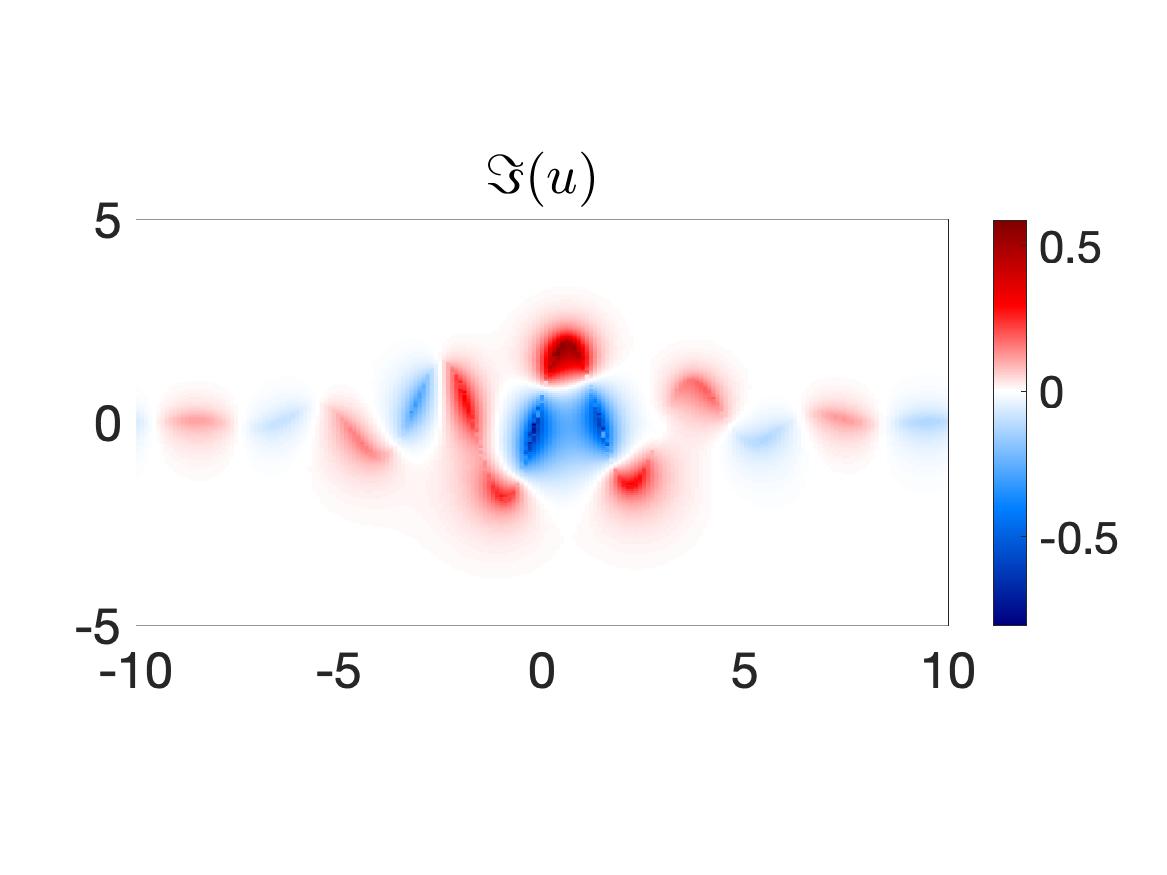}\\[-1.4cm]
    \includegraphics[width=0.7\linewidth]{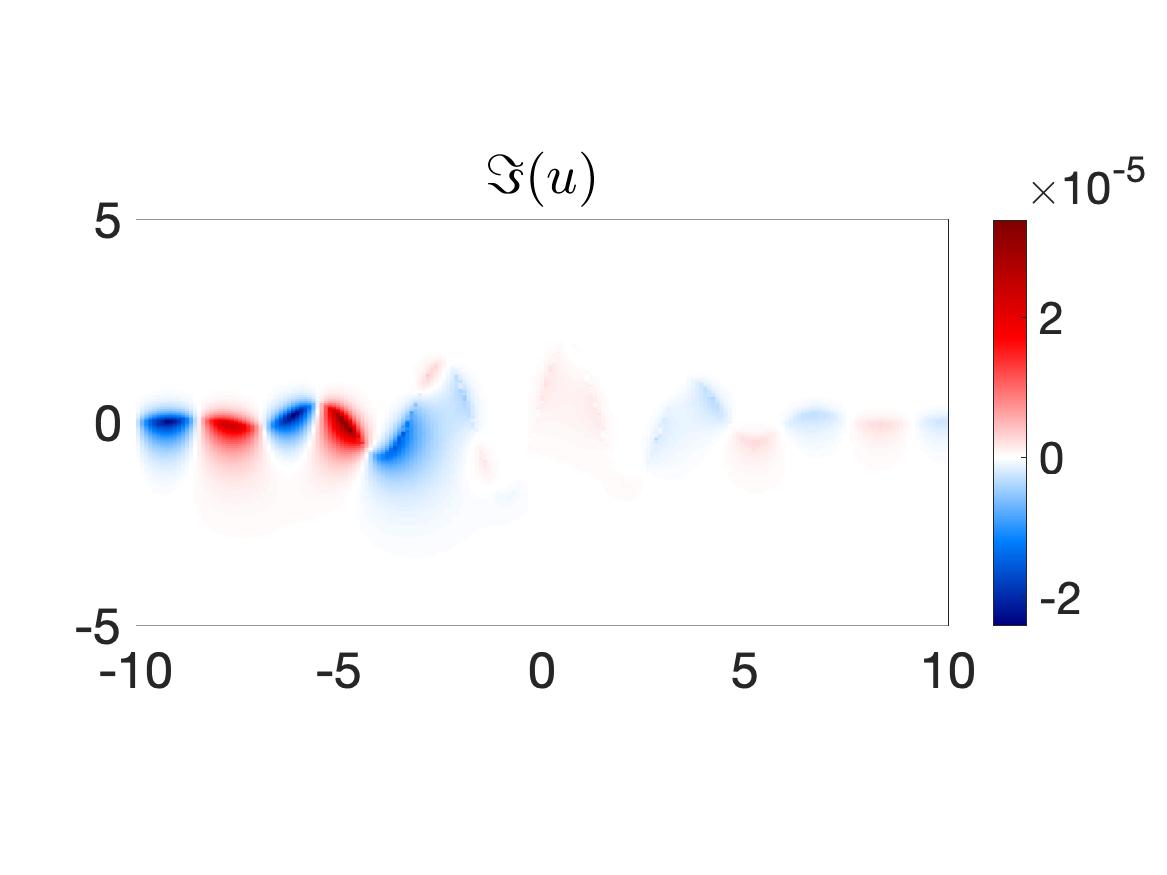}
    \caption{Green's function, $u$, for the interface $\Gamma_3$, where $(m_1,m_2,E)=(1.20,4.00,1.00)$ for the top panel, $(m_1,m_2,E)=(2.00,3.00,1.50)$ for the center panel, and $(m_1,m_2,E)=(2.25,6.00,2.00)$ for the bottom panel.}
    \label{fig:exw2}
\end{figure}

Our numerical experiments (Figures \ref{fig:flat} and \ref{fig:exw}) verify that the density $\rho$ from \eqref{eq:ubif} is rapidly decaying, while $\mu := \cP [\rho]$ obeys an outgoing radiation condition, oscillating without decay. For an illustration of $\rho$ and $\mu$ in the vicinity of a source (zooming in on the region where $\rho$ is not small), see Figure \ref{fig:exc}.

In Figure \ref{fig:exc_scattering} we move the source to $(-40,1)$ and observe that %different values of $(m,E)$ produce qualitatively different
the propagation of the resulting wave along the interface depends on the choice of $(m,E)$. As stated in Remark \ref{rem:lambda}, increasing the value of $\omega$ is equivalent to smoothing out the interface, thus it makes sense that the solution on the top row gets reflected while the one on the bottom gets transmitted. Similarly, the small values of $m$ and $E$ in Figure \ref{fig:exc} (combined with the corresponding source location; see Figure \ref{fig:interfaces}, bottom left panel) result in a solution that is concentrated near the oscillatory part of $\Gamma_2$.

\begin{figure}
    \centering
    \includegraphics[width=0.7\linewidth]{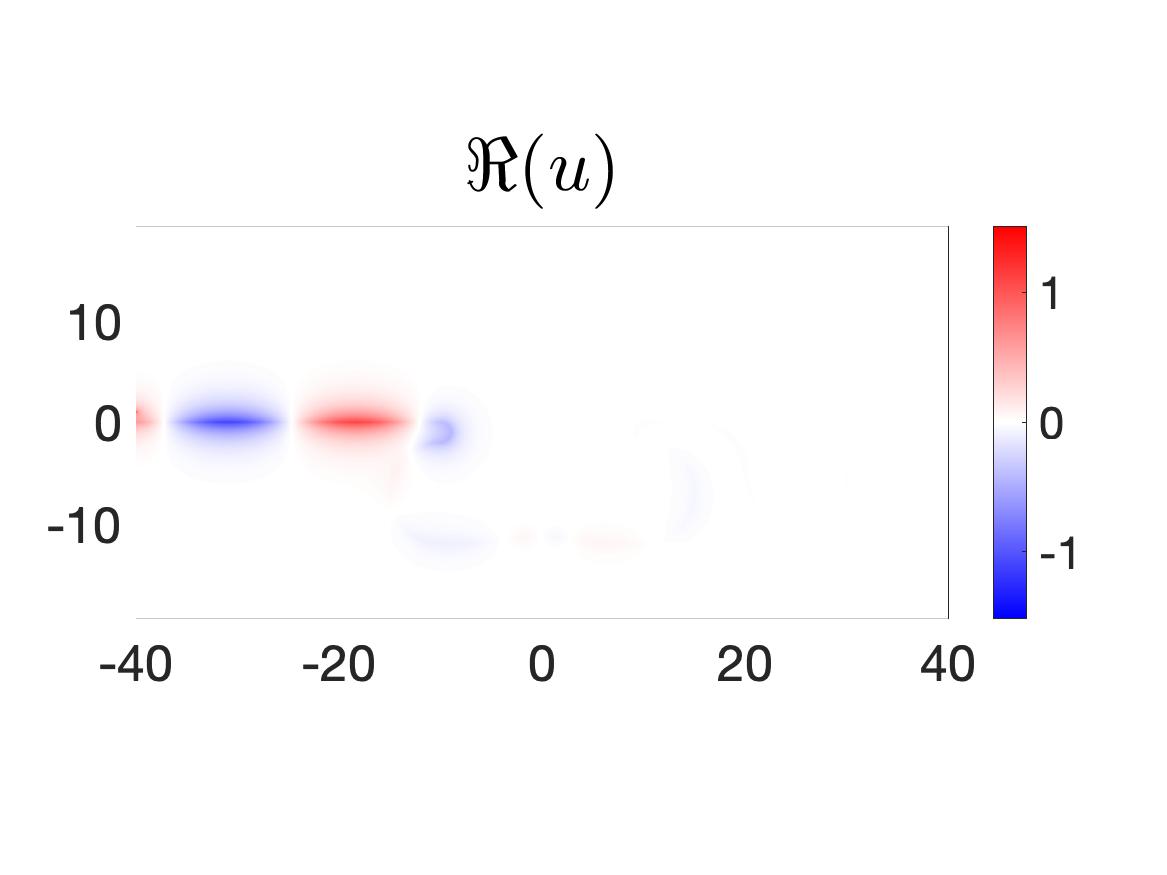}\\[-1.5cm]
    \includegraphics[width=0.7\linewidth]{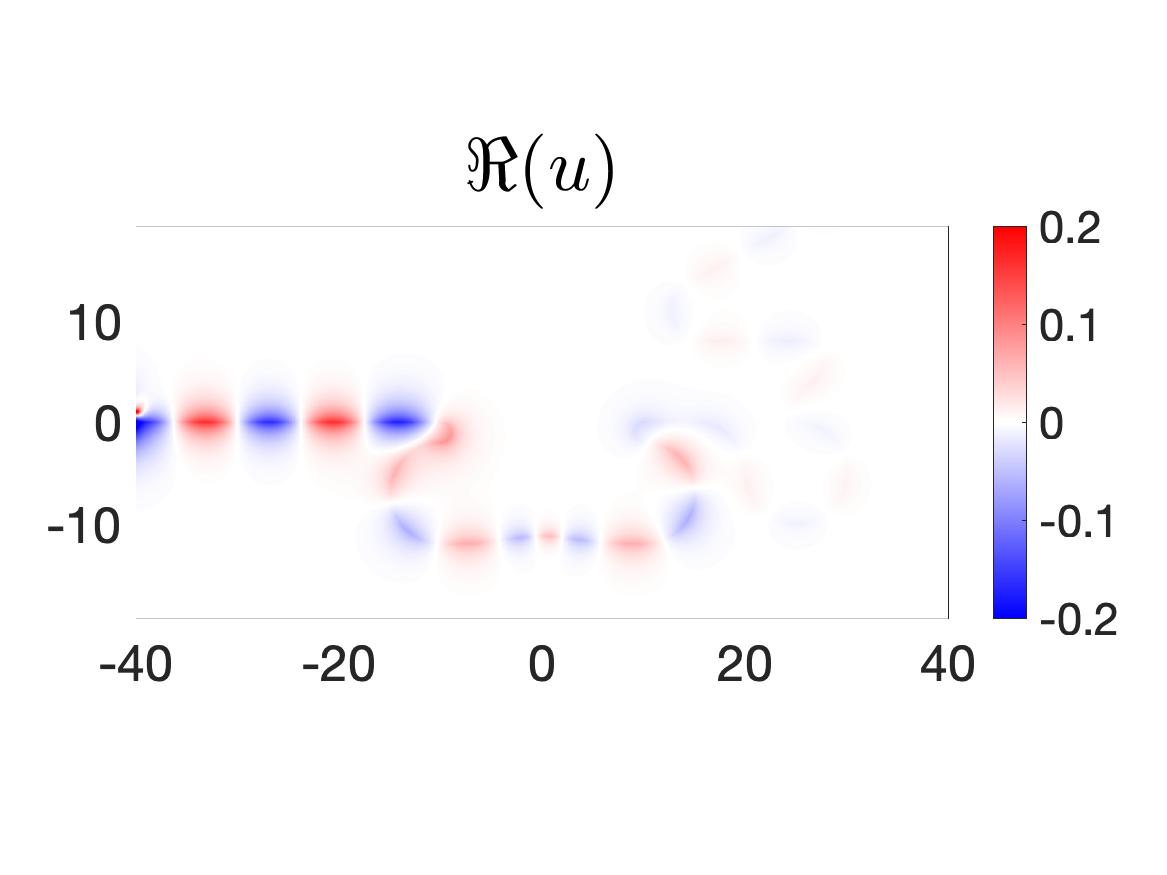}\\[-1.5cm]
    \includegraphics[width=0.7\linewidth]{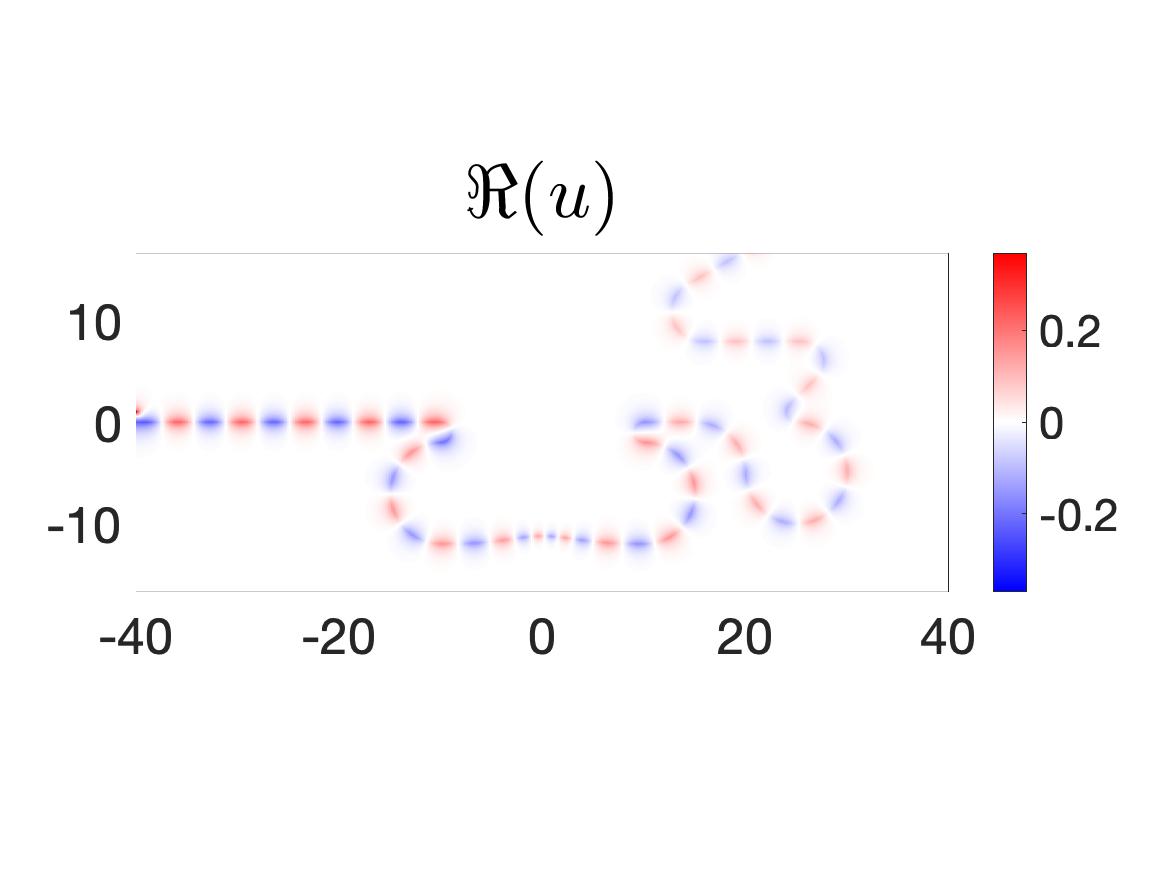}
    \caption{Green's function, $u$, for the interface $\Gamma_2$ with source located at $(-40,1)$.
Here, $(m,E)$ is $(0.75, 0.25), (0.75,0.5)$ and $(1.5,1)$ %and $(4,1)$ for the top-left, top-right, bottom-left and bottom-right panels, respectively.
    going top to bottom.}
    \label{fig:exc_scattering}
\end{figure}

\medskip

We conclude this section with a scattering experiment.
Let us take an interface parametrized by $\gamma (t) = 2 e^{-0.05 t^2} \sin (bt + 0.4)$, for some $b\ge 0$. Note that the value $b=2$ gives $\Gamma_3$. We place a source near the interface, to the left of and far away from its oscillations.

The density $\mu$ is proportional to the solution $u$ along the interface, as \eqref{eq:mu} implies that $\mu = 2m (u_i + u_s) = 2m u$ on $\Gamma$.
The outgoing condition \eqref{eqn:outgoing} implies that $\mu (t) \approx C e^{iEt}$ for $t > 0$ sufficiently large.
Between the source and oscillations, we have that $\mu (t) \approx A e^{iEt} + B e^{-iEt}$, where $|A|$ and $|B|$ are the respective amplitudes of the incoming and reflected waves. The transmission and reflection coefficients are then defined by $T_L := |C|^2/|A|^2$ and $R_L := |B|^2/|A|^2$.

Since $\mu = (I+Q) [\rho]$ with $\rho \in L^1$, it follows that $\mu (t) \approx \frac{m^2}{E} e^{\pm iEt} \hat{\rho} (\pm E)$ as $t \rightarrow \pm \infty$.
It follows that $C = \frac{m^2}{2E} \hat{\rho} (E)$.
As $t \rightarrow -\infty$, we get contributions from both the reflection and the source. Hence $B = L - B_0$, where $L = \frac{m^2}{2E} \hat{\rho}(-E)$ and $B_0 = \frac{m^2}{2E} \hat{\rho}_0 (-E)$ with $\rho_0$ the solution corresponding to $k=0$. Thus, %for any $k$, 
using the identity $T_L + R_L= 1$, %the transmission and reflection coefficients can be computed 
we compute the transmission and reflection coefficients using only $\hat{\rho}_0 (-E)$ and $\hat{\rho} (\pm E)$.

For a plot of $R_L$ as a function of $b$, see Figure \ref{fig:trb} (top panel). %We see that $R_L$ can oscillate rapidly in $b$ when $a,c,d$ are fixed. 
When $b$ is small, $\Gamma$ resembles a flat interface and thus $R_L$ is close to $0$. For larger values of $b$ (say $b>2)$, the oscillations of the interface %confine solutions.
cause solutions to back-scatter, as $R_L$ is approximately $1$.
The transition of $R_L$ from $0$ to $1$ contains a small interval in $b$ at which there is a sudden dip in $R_L$. See Figure \ref{fig:trb} (bottom panels) for an illustration of the qualitatively different behavior of solutions corresponding to nearly identical interfaces. We suspect there may be other values of $b$ (out of the range of values in Figure \ref{fig:trb}) for which sharp transitions in $R_L$ occur, %this occurs,
but postpone a thorough investigation of these critical values to future analyses.

\begin{figure}
    \centering
    \includegraphics[width=0.75\linewidth]{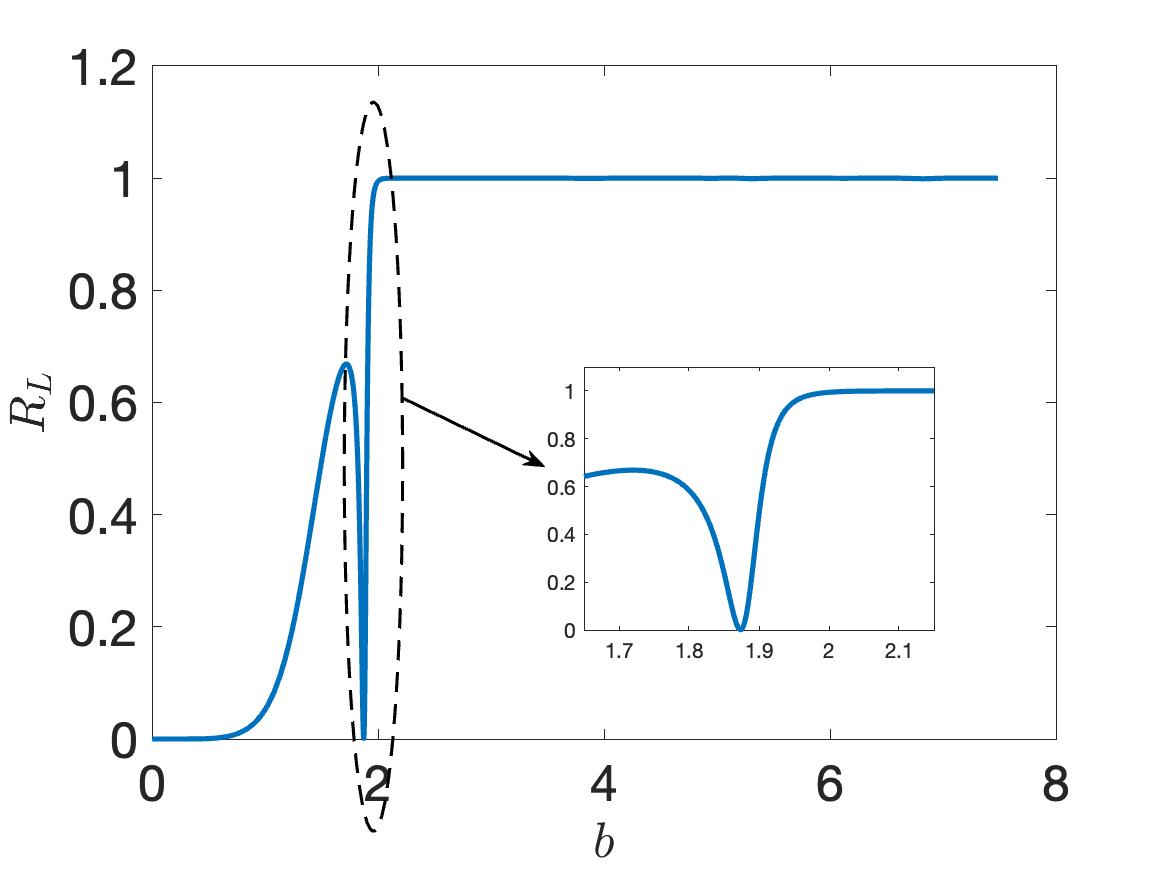}\\
    \includegraphics[width=0.45\linewidth]{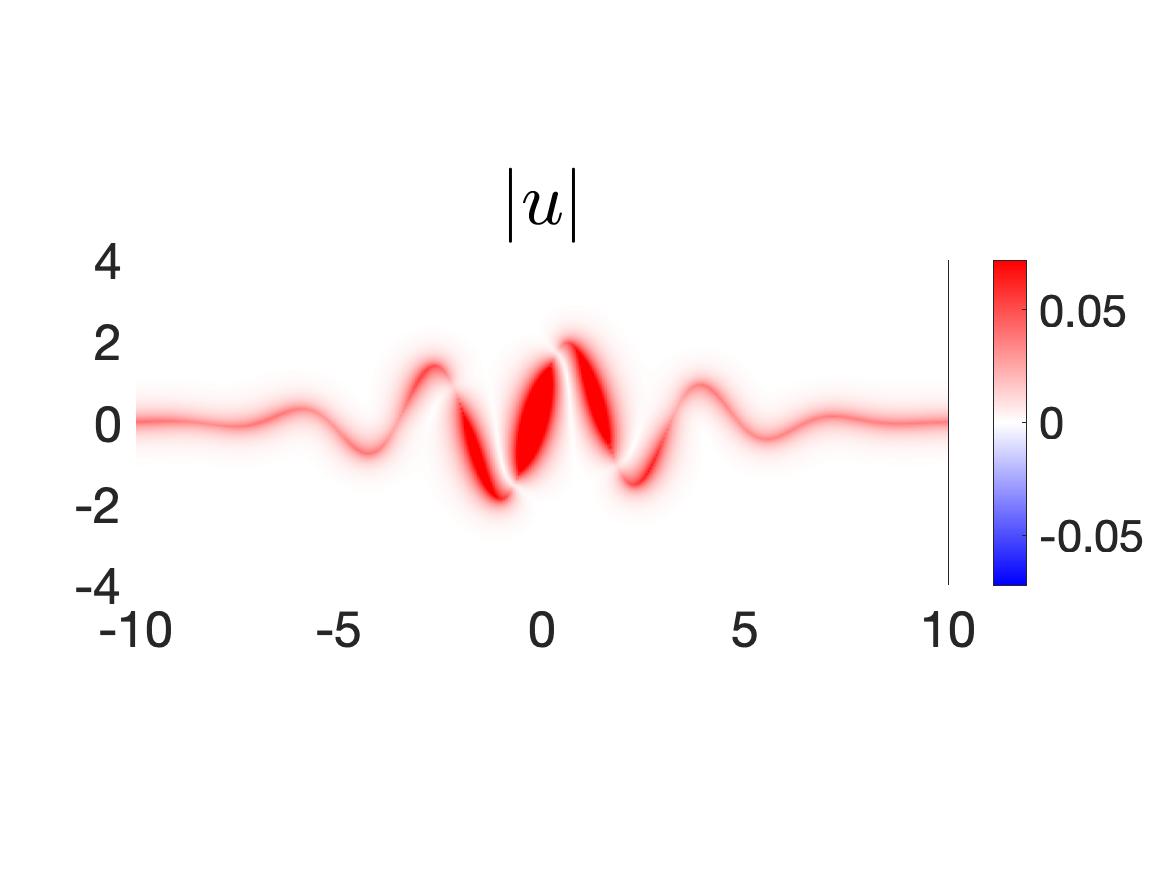}
    \includegraphics[width=0.45\linewidth]{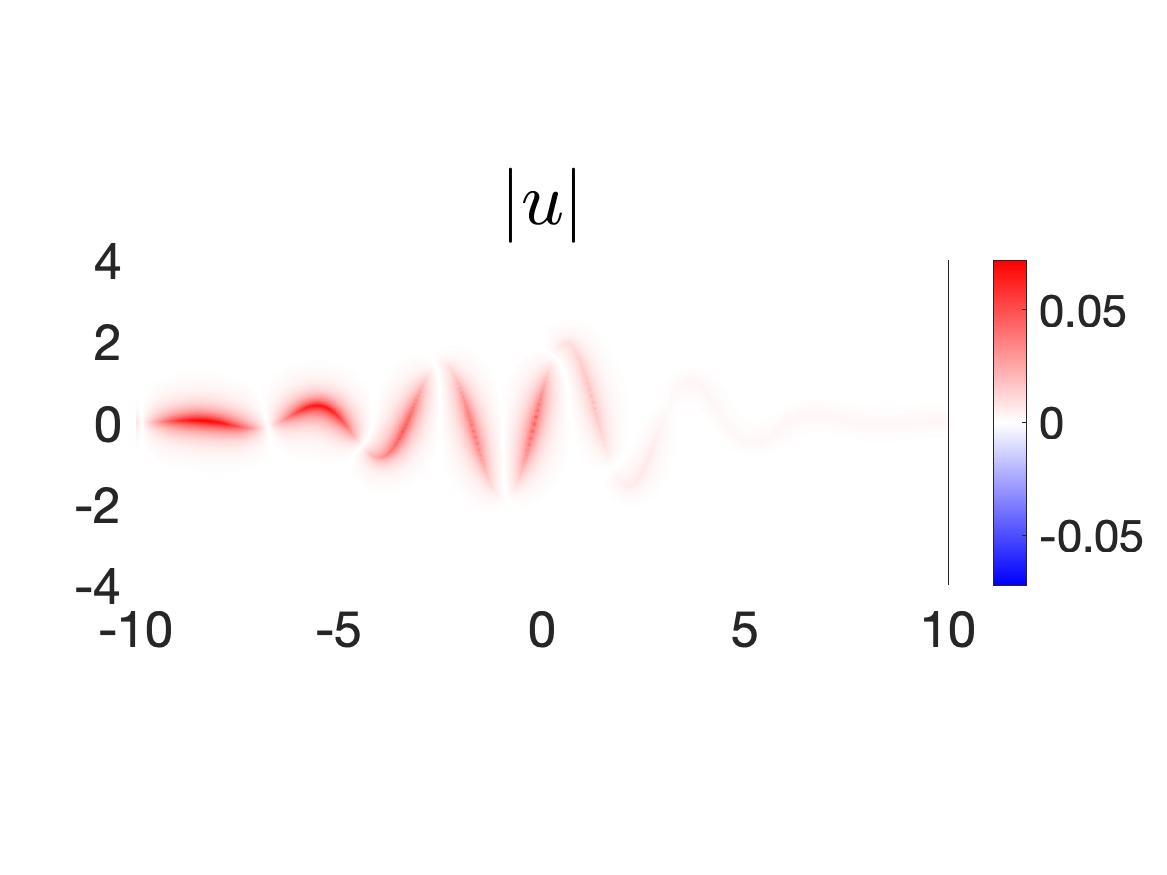}
    \caption{Scattering experiment for an interface parametrized by $\gamma (t) = 2 e^{-0.05 t^2} \sin (bt + 0.4)$, for $b\ge 0$. %(The value $b=2$ gives $\Gamma_3$.) 
    The source is located at $(-40,1)$, and $(m,E)=(4,1)$. The bottom panels illustrate the Green's function, $u$, corresponding to $b=1.87$ (left) and $b=2.00$ (right).}
    \label{fig:trb}
    %plot_RL_full.m
\end{figure}

\section{Concluding remarks \label{sec:conc}}

In this paper, we derived a novel integral equation formulation for the efficient solution of Klein-Gordon singular waveguides which model surface waves along the interface of two insulating media. The propagation of waves when the interface is infinite, while physically interesting, poses analytical and computational challenges, owing to the slow decay of the solution along the interface, and the presence of continuous spectrum for standard integral representations. In order to overcome this issue, we derived an analytic preconditioner for the standard integral equation which captures the oscillatory behavior of the solution on the flat and infinite pieces of the boundary, and proved that the resultant preconditioned integral equation has a bounded inverse for the range of physical parameters of interest.

The analytic preconditioner is derived to ensure that the preconditioned integral equation has a bounded inverse on a flat interface. The proof of the preconditioned operator having a bounded inverse on interfaces that are asymptotically flat relies on showing that the difference between the integral operators on the flat and perturbed interfaces is compact in an exponentially weighted $L^2$ space (since the kernel of the integral operator is both is exponentially localized, and smoothing), deriving uniform bounds on the operators for scaled physical parameters (the decay rates in the bulk, and the propagation frequency along the interface), and Kato perturbation theory.

The well-conditioned nature of the preconditioned integral equation in exponentially weighted $L^2$ spaces lends itself to its efficient numerical solution via finite truncation of the infinite interface. Its well-conditioned nature implies that iterative methods like GMRES for solving the discretized integral operator would converge in $O(1)$ iterations, and the use of existing acceleration tools, i.e. the Yukawa fast multipole method and sweeping algorithms allows for its application in linear computational complexity; the combination of which results in an efficient and high-order accurate numerical method for the solution of the preconditioned integral equation. We illustrate both the accuracy and speed of our approach through several numerical examples.

The method proposed in this work is not limited solely to Equation \ref{eq:pde}. Indeed, it relies only on two key features: the exponential decay of the fundamental solution in each separate region (i.e. they are `insulating'); and an explicit characterization of the wavelength of the surface waves. As such, it should be relatively straightforward to extend it to Dirac models for graphene \cite{3,bernevig2013topological}, Dirac models for twisted bilayer graphene \cite{bal2023mathematical}, and linearized shallow water equations used in models of equatorial waves \cite{bal2024topological, delplace2017topological}. One would expect that it should also be easily extendable to three dimensional problems and PDEs with multiple interfaces.

%%      ---------------------------------------------------------------------
%%      ------------------------- APPENDIX (OPTIONAL) -----------------------
%%      ---------------------------------------------------------------------
        
%%      If you have one appendix, uncomment the line \appendix and add
%%      a \section{ *** APPENDIX TITLE ***}. If you have more than
%%      one, uncomment the line \appendices and add a \section{ ***
%%      APPENDIX TITLE ***} command for each appendix title.

%\appendix
%\appendices
%\section{}

%%      Type body of appendix/-ices here.

%%      ---------------------------------------------------------------------
%%      ---------------------------ACKNOWLEDGMENTS (OPTIONAL) ---------------
%%      ---------------------------------------------------------------------

%% ***** UNCOMMENT THE FOLLOWING LINE TO ADD ACKNOWLEDGMENTS.

\ack 
G. Bal and S. Quinn were supported in part by NSF grants DMS-1908736 and EFMA-1641100. The Flatiron Institute is a division of the Simons Foundation. The authors would like to thank the anonymous referees for many helpful comments that led to a much-improved manuscript. J. Hoskins, S. Quinn, and M. Rachh would also like to thank the American Institute of Mathematics SQuaREs program.

%%      Type acknowledgments here.

%%      ---------------------------------------------------------------------
%%      --------------------------- BIBLIOGRAPHY ----------------------------
%%      ---------------------------------------------------------------------

\frenchspacing
\bibliographystyle{plain}
\bibliography{refs}

\end{document}